\newtheorem{theorem}{\indent Theorem}[section]
\newtheorem{lemma}[theorem]{\indent Lemma}
\newtheorem{corollary}[theorem]{\indent Corollary}
\newtheorem{EXAMPLE}{\indent Example}[section]
\newtheorem{definition}{\indent Definition}[section]
\newenvironment{example}{\begin{EXAMPLE}\rm}{\rm\end{EXAMPLE}}
\def\bu{\bar{u}}
\newcommand{\defeq}{:=}
\newcommand{\sff} {\mathsf{f}}
\newcommand{\cM}{{\mathcal{M}}}
\newcommand{\Code}{{\mathbb{C}}}
\newcommand{\coeff}{{\mbox{Coeff }}}
\newcommand{\bldalpha}{{\mbox{\boldmath $\alpha$}}}
\newcommand{\bldsmallalpha}{{\mbox{\scriptsize \boldmath $\alpha$}}}
\newcommand{\bldbeta}{{\mbox{\boldmath $\beta$}}}
\newcommand{\bldsmallbeta}{{\mbox{\scriptsize \boldmath $\beta$}}}
\newcommand{\bldc}{{\mbox{\boldmath $c$}}}
\newcommand{\bldcc}{{\mbox{\scriptsize \boldmath $c$}}}
\newcommand{\bldepsilon}{{\mbox{\boldmath $\epsilon$}}}
\newcommand{\bldsmallepsilon}{{\mbox{\scriptsize \boldmath $\epsilon$}}}
\newcommand{\asympequalm}{\doteq}
\newcommand{\asympequaln}{\doteq}
\newcommand{\asympequal}{\doteq}
    \def\squarebox#1{\hbox to #1{\hfill\vbox to #1{\vfill}}}
\newlength{\Algwidth}
\title{Spectral Shape of Doubly-Generalized LDPC Codes: Efficient and Exact Evaluation
\thanks{%
    This work was supported in part by the EC under Seventh FP grant agreement n. 288502 CONCERTO, and in part by Science Foundation Ireland (grants 07/SK/I1252b and 11/RFP.1/ECE/3206). The material in this paper was presented in part at the IEEE Global Communications Conference (GLOBECOM '09), Honolulu, Hawaii, Nov./Dec. 2009, in part at the IEEE International Conference on Communications (ICC '10), Cape Town, South Africa, May 2010, and in part at the 2nd International Symposium on Applied Sciences in Biomedical and Communication Technologies (ISABEL '09), Bratislava, Slovakia, Nov. 2009.  
    \newline 
		M. F. Flanagan is with the School of Electrical, Electronic and Communications Engineering, University College Dublin, Belfield, Dublin 4, Ireland (e-mail:mark.flanagan@ieee.org). 
		\newline
    E. Paolini and M. Chiani are with the Department of Electronic, Electrical and Information Engineering, University of Bologna, via Venezia 52, 47521 Cesena
(FC), Italy (e-mail:e.paolini@unibo.it, marco.chiani@unibo.it).
    \newline
    M. P. C. Fossorier is with ETIS ENSEA, UCP, CNRS UMR-8051, 6 avenue du Ponceau, 95014 Cergy Pontoise, France (e-mail: mfossorier@ieee.org). 
    }        
}
\author{Mark F. Flanagan,~\IEEEmembership{Senior Member,~IEEE,} Enrico Paolini,~\IEEEmembership{Member,~IEEE,} Marco~Chiani,~\IEEEmembership{Fellow,~IEEE,} and~Marc~P.~C.~Fossorier,~\IEEEmembership{Fellow,~IEEE}}
\begin{document}
\maketitle
\thispagestyle{empty}

\begin{abstract}
This paper analyzes the asymptotic exponent of the weight spectrum for irregular doubly-generalized
LDPC (D-GLDPC) codes. In the process, an efficient numerical technique for its evaluation is
presented, involving the solution of a $4 \times 4$ system of polynomial
equations. The expression is consistent with previous results, including the case where the normalized weight
or stopping set size tends to zero. The spectral shape is shown to admit a particularly simple form
in the special case where all variable nodes are repetition codes of the same degree, a case which
includes Tanner codes; for this case it is also shown how certain symmetry properties of the local
weight distribution at the CNs induce a symmetry in the overall weight spectral shape function.
Finally, using these new results, weight and stopping set size spectral shapes are evaluated for
some example generalized and doubly-generalized LDPC code ensembles. 
\end{abstract}
\begin{keywords}
Doubly-generalized LDPC codes, irregular code ensembles, spectral shape, stopping set size distribution, weight distribution. 
\end{keywords}

\section{Introduction}
Recently, the design and analysis of coding schemes representing generalizations of Gallager's low-density parity-check (LDPC) codes \cite{gallager63:low-density} has gained increasing attention. This interest is motivated above all by the search for coding schemes which offer a better compromise between waterfall and error floor performance than is currently offered by state-of-the-art LDPC codes.

In the Tanner graph of an LDPC code, any degree-$q$ variable node (VN) may be interpreted as a length-$q$ repetition code, i.e., as a $(q,1)$ linear block code. Similarly, any degree-$s$ check node (CN) may be interpreted as a length-$s$ single parity-check (SPC) code, i.e., as a $(s,s-1)$ linear block code. The first proposal of a class of linear block codes generalizing LDPC codes may be found in \cite{tanner81:recursive}, where it was suggested to replace each CN of a regular LDPC code with a generic linear block code, to enhance the overall minimum distance. The corresponding coding scheme is known as a regular generalized LDPC (GLDPC) code, or Tanner code, and a CN that is not a SPC code as a generalized CN. More recently, irregular GLDPC codes were considered (see for instance \cite{liva08:quasi_cyclic}). For such codes, the VNs exhibit different degrees and the CN set is composed of a mixture of different linear block codes.

A further generalization step is represented by doubly-generalized LDPC (D-GLDPC) codes \cite{wang06:D-GLDPC}. In a D-GLDPC code, not only the CNs but also the VNs may be represented by generic linear block codes. The VNs which are not repetition codes are called generalized VNs. The main motivation for introducing generalized VNs is to overcome some problems connected with the use of generalized CNs, such as an overall code rate loss which makes GLDPC codes interesting mainly for low code rate applications, and a loss in terms of decoding threshold (for a discussion on drawbacks of generalized CNs and on beneficial effects of generalized VNs we refer to \cite{miladinovic08:generalized} and \cite{paolini09:stability}, respectively).

A useful tool for analysis and design of LDPC codes and their generalizations is represented by the asymptotic exponent of the weight distribution. As usual in the literature, this exponent will be referred to as the \emph{growth rate of the weight distribution} or the \emph{weight spectral shape} of the ensemble, the two expressions being used interchangeably throughout this paper. The growth rate of the weight distribution was introduced in \cite{gallager63:low-density} to show that the minimum distance of a randomly generated regular LDPC code with a VN degree of at least three is a linear function of the codeword length with high probability. The same approach was taken in \cite{lentmaier99:generalized} and \cite{boutros99:generalized} to obtain related results on the minimum distance of subclasses of Tanner codes.

The growth rate of the weight distribution has been subsequently investigated for unstructured
ensembles of irregular LDPC codes. Works in this area are
%\cite{litsyn02:on_ensembles,burshtein04:asymptotic,orlitsky05:stopping,di06:weight}. 
\cite{litsyn02:on_ensembles}--\cite{di06:weight}.
In particular,
in \cite{di06:weight} a technique for evaluation of the growth rate of any (eventually
expurgated) irregular LDPC ensemble has been developed, based on Hayman's formula. The nonbinary weight distribution of nonbinary LDPC codes was analyzed in \cite{bennatan04:weight} and \cite{yang11:weight}, while the \emph{binary} weight distribution of nonbinary LDPC codes was derived in \cite{Kasai2008:nonbinary} and \cite{andriyanova09:weight}. Asymptotic weight enumerators of ensembles of irregular LDPC codes based on protographs and on multiple edge types have been derived in \cite{divsalar06:weight_enumerators}
and \cite{wang2009:multiedge,Kasai2009:multiedge}, respectively. The approach proposed in
\cite{divsalar06:weight_enumerators} has then been extended to protograph GLDPC codes and to
protograph D-GLDPC codes in \cite{abu-surra11:IEEE-IT} and \cite{wang08:ensemble_DGLDPC},
respectively. In contrast to the present work, the evaluation of the weight enumerators in
\cite{divsalar06:weight_enumerators}--\cite{wang08:ensemble_DGLDPC} require numerical solution of a
high-dimensional optimization problem.

In this paper, an analytical expression for the growth rate of the weight distribution of a general unstructured irregular ensemble of D-GLDPC codes is developed. The present work also extends to the fully-irregular case an expression for the growth rate obtained in \cite{paolini09:class} assuming a CN set composed of linear block codes all of the same type. In the process of this development, we obtain an efficient evaluation tool for computing the growth rate exactly. This tool always requires the solution of a $4 \times 4$ polynomial system of equations, \emph{regardless} of the number of VN types and CN types in the D-GLDPC ensemble. As shown through numerical examples, the proposed tool allows to obtain a precise plot of the growth rate with a low computational effort. The derived result may be regarded as a generalization to the D-GLDPC case of the corresponding result for LDPC codes which was derived in \cite{di06:weight}.

In \cite{flanagan09:IEEE-IT}, the growth rate of the weight distribution, $G(\alpha)$, was analyzed in the region of \emph{small} (fractional) codeword weight $\alpha$. A precise characterization of the class of code ensembles with \emph{good growth rate behavior} (i.e., having a negative initial slope of the growth rate curve -- typical codes from ensembles without this property have high error floors, even under \emph{maximum a posteriori} (MAP) decoding) was deduced in \cite{flanagan09:IEEE-IT} from the analysis therein.
In contrast to the formula developed in \cite{flanagan09:IEEE-IT}, the present paper proves an exact analytical expression for $G(\alpha)$ for \emph{any} $\alpha$. Note that a good weight spectral shape will not only have a negative initial slope, but the value $\alpha^*$ where it crosses the horizontal axis, called the \emph{critical exponent codeword weight ratio}, will also be large. For any code chosen from the ensemble, we expect ``exponentially few'' codewords of weight smaller than $\alpha^* n$, and ``exponentially many'' codewords of weight larger than $\alpha^* n$. Crucially, the results in the present paper allow us to evaluate $\alpha^*$ numerically for any given ensemble; this parameter then provides a first-order characterization of the weight distribution behavior (a reasonably large $\alpha^*$ is generally desirable).

As explained in Section~\ref{section:irregular_D_GLDPC}, assuming transmission over the binary
erasure channel (BEC) and iterative decoding, the developed formula is also valid for the asymptotic
exponent of the \emph{stopping set size} distribution upon replacing the local weight enumerating
function (WEF) of each VN and CN type with an appropriate polynomial function. 
This also allows for the evaluation of the \emph{critical exponent stopping set size ratio}, which is a good indicator of the error floor performance when using belief propagation decoding over the BEC.

A compact formula for the spectral shape is derived for the special case of a GLDPC code ensemble with a regular VN set and a hybrid CN set. Symmetry properties of the weight spectral shape are also investigated for this case; in particular, it is proved that the weight spectral shape function of such a GLDPC ensemble is symmetric w.r.t. normalized weight $\alpha = 1/2$ if the local WEF of each CN is a symmetric polynomial. This result establishes a connection between symmetry properties at a ``microscopic'' level (i.e., at the nodes of the Tanner graph) and symmetry of the ``macroscopic'' growth rate function.

The paper is organized as follows. Section~\ref{section:irregular_D_GLDPC} defines the D-GLDPC ensemble of interest, and introduces some definitions and notation pertaining to this ensemble. Section~\ref{section:growth_rate} presents the main result of the paper regarding the evaluation of the weight and stopping set size spectral shapes. Section~\ref{section:spectral_shape_formula} derives the spectral shapes of a family of check-hybrid GLDPC code ensembles as a corollary to the main result, and also identifies a sufficient condition for symmetry of the weight spectral shape for such ensembles. Section~\ref{section:proof_of_main_result} provides a proof of the main result of the paper, and Section~\ref{section:supporting_proofs} provides additional proofs for other results in the paper. Section~\ref{section:examples} provides some examples of spectral shapes of GLDPC and D-GLDPC codes, and Section~\ref{section:conclusion} concludes the paper.

%%%%%%%%%%%%%%%%%%%%%%%%%%%%%%%%%%%%%%%%%%%%%%
%%%%%%%%%%%%%%%%%%%%%%%%%%%%%%%%%%%%%%%%%%%%%%

\section{Preliminaries and Notation}\label{section:irregular_D_GLDPC}
A D-GLDPC code consists of a set of CNs and a set of VNs. Each of these nodes is associated with a
linear `local' code, and with each VN is also associated an \emph{encoder} (i.e., generator matrix).
Each node is equipped with a set of \emph{sockets} corresponding to the bits of the local codeword.
The VN sockets and the CNs sockets are connected together by edges in a one-to-one fashion; the
resulting graph is called the \emph{Tanner graph} of the code. A \emph{codeword} in such a D-GLDPC
code is defined as an assignment of values to the local information bits of each VN such that the
corresponding bit values induced on the Tanner graph edges (through local encoding at the VNs) cause
each CN to recognize a valid local codeword. An illustration of a simple D-GLDPC code is given in
Figure~\ref{cap:DGLDPC_example}, together with its codeword $[1 \; 0 \; 1 \; 0 \; 1 \; 0 \; 0 \; 0
\; 1]$. We next define a sequence $\{ \cM_n \}$ of D-GLDPC code ensembles; many of the following
definitions and notations also appear in \cite{flanagan09:IEEE-IT}.

\begin{figure}
\begin{center}\includegraphics[%
  width=\columnwidth,
  keepaspectratio]{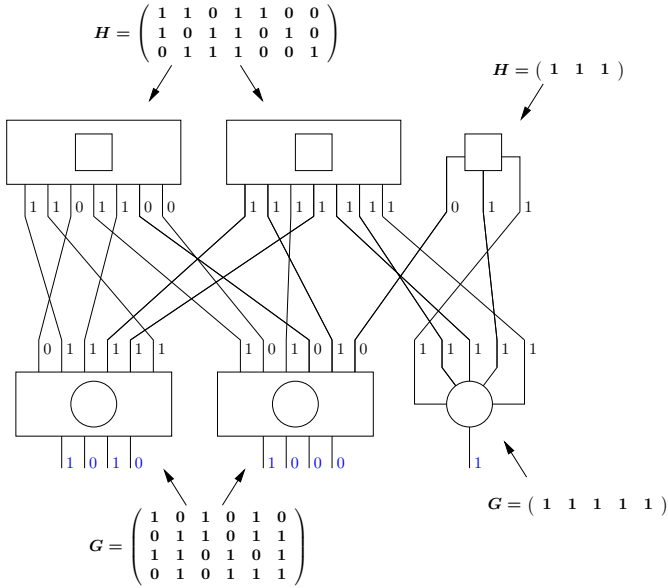}
  \end{center}
\caption{Illustration of an example D-GLDPC code, together with the D-GLDPC codeword $[1
\; 0 \; 1 \; 0 \; 1 \; 0 \; 0 \; 0 \; 1]$ (shown in blue on the diagram). Two of the
generalized CNs are $[7,4]$ Hamming codes, and the other is a length-$3$ SPC code. The VN set
consists of two $[6,4]$ linear codes and one length-$5$ repetition code (generator
matrices are shown). The reader may verify that the bits induced on the Tanner graph edges
through local encoding at the VNs correspond to valid local codewords from the CN perspective.
\label{cap:DGLDPC_example}}
\end{figure}

In the D-GLDPC code ensemble $\cM_n$, the number of VNs is denoted by $n$. The different CN types are denoted by the set $I_c = \{ 1,2,\ldots, n_c\}$; the local code $\Code_t$ of CN
type $t \in I_c$ has dimension, length and minimum distance denoted by $h_t$, $s_t$ and $r_t$,
respectively. Similarly, the different VN types are denoted by $I_v = \{ 1,2,\ldots, n_v\}$; the
local code $\Code_t$ of VN type $t \in I_v$ has dimension, length and minimum distance denoted by $k_t$, $q_t$
and $p_t$ respectively. We assume that the local codes associated with all VNs and CNs have minimum distance at least $2$ (i.e., $r_t \ge 2, p_t \ge 2$), and that the dual codes of all of these local codes have minimum distance greater than one\footnote{An equivalent condition is that the generator matrix of each code $\Code_t$ has no column consisting entirely of zeros.}.
%\footnote{An equivalent condition is that the cardinality of the support of each code $\Code_t$ is equal to its block length of $\Code_t$.}. 

For $t \in I_c$, the fraction of edges connected to CNs of type $t$ is denoted by $\rho_t$. Similarly, for $t \in I_v$, the fraction of edges connected to VNs of type $t$ is denoted by $\lambda_t$. CN and VN type-distribution polynomials are then given by $\rho(x)$ and $\lambda(x)$ respectively, where $\rho(x) \defeq \sum_{t\in I_c} \rho_t x^{s_t - 1}$ and $\lambda(x) \defeq \sum_{t \in I_v} \lambda_t x^{q_t - 1}$. If $E$ denotes the number of edges in the Tanner graph, the number of CNs of type $t\in I_c$ is then given by $E \rho_t / s_t$, and the number of VNs of type $t\in I_v$ is then given by $E \lambda_t / q_t$. Denoting as usual $\int_0^1 \rho(x) \, {\rm d} x$ and $\int_0^1 \lambda(x) \, {\rm d} x$ by $\int \rho$ and $\int \lambda$ respectively, the number of edges in the Tanner graph is given by $E = n / \int \lambda$ and the number of CNs is given by $m = E \int \rho$. Therefore, the fraction of CNs of type $t \in I_c$ and the fraction of VNs of type $t \in I_v$ are given by
\begin{equation}\label{eq:gamma_t_delta_t_definition}
\gamma_t = \frac{\rho_t}{s_t \int \rho} \quad \textrm{and} \quad \delta_t = \frac{\lambda_t}{q_t \int \lambda}
\end{equation}
respectively. A code in the irregular D-GLDPC ensemble corresponds to a permutation of the $E$ edges connecting VNs to CNs. The length of any D-GLDPC codeword in the ensemble is given by 
\begin{equation}
N = \sum_{t \in I_v} \left( \frac{E \lambda_t}{q_t} \right) k_t = \frac{n}{\int \lambda} \sum_{t \in I_v} \frac{\lambda_t k_t}{q_t} \; .
\label{eq:DG_LDPC_codeword_length}
\end{equation}
Note that this is a linear function of $n$. Similarly, the total number of parity-check equations
for any D-GLDPC code in the ensemble is given by $M = \frac{m}{\int \rho} \sum_{t \in I_c}
\frac{\rho_t (s_t - h_t)}{s_t}$.

The ensemble $\cM_n$ is defined according to a uniform probability distribution on all $E!$ permutations of the Tanner graph edges. The \emph{design rate} of the D-GLDPC ensemble is given by
\begin{equation}\label{eq:design_rate}
R = 1 - \frac{\sum_{t \in I_c} \rho_t (1 - R_t)}{\sum_{t \in I_v} \lambda_t R_t}
\end{equation}
where for $t \in I_c$ (resp. $t \in I_v$), $R_t$ is the local code rate of a type-$t$ CN (resp. VN). Each code in the ensemble has a code rate larger than or equal to $R$.

The WEF for CN type $t \in I_c$ is given by 
\begin{equation*}
A^{(t)}(z) = 1 + \sum_{u=r_t}^{s_t} A_u^{(t)} z^u \; . 
\end{equation*}
Here, for each $0 \le u \le s_t$, $A_u^{(t)} \ge 0$ denotes the number of weight-$u$ codewords for CNs of type $t$. We denote by $\bar{u}_t$ the maximal weight of a codeword in the local code for CN type $t$; this is the largest $u \in \{r_t,r_t+1,\dots,s_t\}$ such that $A^{(t)}_u > 0$. The input-output weight enumerating function (IO-WEF) for VN type $t \in I_v$ is given by
\begin{equation*}
B^{(t)}(x,y) = 1 + \sum_{u=1}^{k_t} \sum_{v=p_t}^{q_t} B_{u,v}^{(t)} x^u y^v \; .  
\end{equation*}
Here $B_{u,v}^{(t)} \ge 0$ denotes the number of weight-$v$ codewords generated by input words of weight $u$, for VNs of type $t$. Also, $B^{(t)}_{v}$ is the total number of weight-$v$ codewords for VNs of type $t$.

Although this paper is focused on the weight spectrum, all of the results developed in Sections~\ref{section:growth_rate}--\ref{section:spectral_shape_formula} can be extended to the stopping set size spectrum. A stopping set of a D-GLDPC code may be defined as any subset $\mathcal{S}$ of the code bits such that, assuming all code bits in $\mathcal{S}$ are erased and all code bits not in $\mathcal{S}$ are not erased, local erasure decoding at the CNs and VNs cannot recover any code bit in $\mathcal{S}$, so no erasure can be recovered by iterative decoding.\footnote{The concept of stopping set was first introduced in \cite{di02:finite} in the context of LDPC codes. When applied to LDPC codes (i.e., all CNs are SPC codes), the definition of stopping set used in this paper coincides with that in \cite{di02:finite}.} A \emph{local stopping set} for a CN is a subset of the local code bits which, if erased, is not recoverable to any extent by the CN. A \emph{local stopping set} for a VN is a subset of the local code bits together with a subset of the local information bits which, if both subsets are erased, are not recoverable to any extent by the VN. All results derived in this paper for the weight spectrum can be extended to the stopping set size spectrum by simply replacing the WEF for CN type $t \in I_c$ with its local stopping set enumerating function (SSEF), and replacing the IO-WEF for VN type $t \in I_v$ with its local input-output stopping set enumerating function (IO-SSEF).

We point out that for both VNs and CNs, the local SSEF depends on the decoding algorithm
used to locally recover from erasures; for local maximum \emph{a posteriori} (MAP) decoding, the reader is referred to
\cite[Appendix B]{flanagan09:IEEE-IT} for details.

Finally, we introduce some mathematical notation. Let $a(n)$ and $b(n)$ be two real-valued sequences, where $b(n) \ne 0$ for all $n$; we say that $a(n)$ is \emph{exponentially equivalent} to $b(n)$ as $n \rightarrow \infty$, writing $a(n) \asympequaln b(n)$, if and only if $\lim_{n \rightarrow \infty} n^{-1} \log \left( a(n) / b(n) \right) = 0$. Throughout this paper, the notation $e = \exp(1)$ denotes Napier's number, all the logarithms are assumed to have base $e$ and for $0<x<1$ the notation $h(x)=-x \log(x) - (1-x) \log(1-x)$ denotes the binary entropy function expressed in nats. 

%%%%%%%%%%%%%%%%%%%%%%%%%%%%%%%%%%%%%%%%%%%%%%
%%%%%%%%%%%%%%%%%%%%%%%%%%%%%%%%%%%%%%%%%%%%%%

\section{Weight Spectral Shape of Irregular D-GLDPC Code Ensembles}\label{section:growth_rate}
The weight spectral shape of the irregular D-GLDPC ensemble sequence $\{ \cM_n \}$ is defined by 
\begin{equation}
G(\alpha) \defeq \lim_{n\rightarrow \infty} \frac{1}{n} \log \mathbb{E}_{\cM_n} \left[ A_{\alpha n} \right]
\label{eq:growth_rate_result}
\end{equation}
where $\mathbb{E}_{\cM_n}$ denotes the expectation operator over the ensemble $\cM_n$, and $A_{w}$ denotes the number of codewords of weight $w$ of a randomly chosen D-GLDPC code in the ensemble $\cM_n$. The limit in (\ref{eq:growth_rate_result}) assumes the inclusion of only those positive integers $n$ for which $\alpha n \in \mathbb{Z}$ and $\mathbb{E}_{\cM_n} [ A_{\alpha n} ]$ is positive. Note that the argument of the growth rate function $G(\alpha)$ is equal to the ratio of D-GLDPC codeword weight to the number of VNs; by (\ref{eq:DG_LDPC_codeword_length}), this captures the behaviour of codewords whose weight is linear in the block length. 

Using standard notation \cite{orlitsky05:stopping}, we define the \emph{critical exponent codeword
weight ratio} for $\cM_n$ as $\alpha^* \defeq \inf \{ \alpha > 0 \; | \; G(\alpha)\geq 0 \}$. A
D-GLDPC ensemble is said to have \emph{good growth behavior} if $\alpha^* > 0$, and is said to have
\emph{bad growth rate behavior} if $\alpha^*=0$. In \cite{flanagan09:IEEE-IT}, it was shown that a
D-GLDPC ensemble always has good growth rate behavior if there exist no CNs or VNs with minimum
distance $2$ while, if there exist both CNs and VNs with minimum distance $2$, the ensemble has good
growth rate behavior if and only if $C \cdot V < 1$, where the (positive) parameters $C$
and $V$ are given by 
\begin{equation}
C = 2 \sum_{t \; : \; r_t = 2} \frac{\rho_t A^{(t)}_{2}}{s_t} \: ; \: V = 2 \sum_{t \; : \; p_t = 2} \frac{\lambda_t B^{(t)}_{2}}{q_t} \; .
\label{eq:C_V_definitions}
\end{equation}

Note that using (\ref{eq:DG_LDPC_codeword_length}), we may also define the growth rate with respect
to the D-GLDPC code's block length $N$ as follows:
\begin{equation}
H(\omega) \defeq \lim_{N\rightarrow \infty} \frac{1}{N} \log \mathbb{E}_{\cM_n} \left[ A_{\omega N} \right] \; .
\label{eq:growth_rate_result_norm}
\end{equation}
Note that we must have $\omega \le 1$, while in general we may have $\alpha > 1$. It is straightforward to show that 
\begin{equation}
H(\omega) = \frac{G(K_s \omega)}{K_s}
\label{eq:relationship_G_H}
\end{equation}
where $K_s$ is defined as the ratio of the D-GLDPC code's block length to the number of VNs, i.e.,
\begin{equation}
K_s \defeq \frac{N}{n} = \frac{1}{\int \lambda} \sum_{t \in I_v} \frac{\lambda_t k_t}{q_t} \; .
\label{eq:scaling_factor_def}
\end{equation}
Note that the parameter $K_s$ is independent of $N$.

The stopping set size spectral shapes of the ensemble sequence $\{ \cM_n \}$ for the case of bounded distance (BD) and MAP decoding at the CNs, whose definitions are analogous to \eqref{eq:growth_rate_result}, will be denoted by $G_{\Psi}(\alpha)$ and $G_{\Phi}(\alpha)$, respectively. Similarly the critical exponent stopping set size ratio will be denoted in these cases by $\alpha_{\Psi}^*$ and $\alpha_{\Phi}^*$, respectively.

In this section, we formulate an expression for the growth rate for an irregular D-GLDPC ensemble $\cM_n$ over a wider range of $\alpha$ than was considered in \cite{flanagan09:IEEE-IT} (where the case of small $\alpha$ was analyzed).

The following theorem constitutes the main result of this paper.
\medskip
\begin{theorem}
The weight spectral shape of the irregular D-GLDPC ensemble sequence $\{ \cM_n \}$ is given by
\begin{multline}
G(\alpha) = \sum_{t \in I_v} \delta_t \log B^{(t)}(x_{0},y_{0}) - \alpha \log x_{0} \\
+ \left( \frac{\int \rho}{\int \lambda} \right) \sum_{s \in I_c} \gamma_s \log A^{(s)}(z_0) + \frac{\log \left( 1 - \beta \int \lambda \right)}{\int \lambda} 
\label{eq:growth_rate_polynomial_general}
\end{multline}
where $x_0$, $y_0$, $z_0$ and $\beta$ are the unique positive real solutions to the $4 \times 4$ system of polynomial equations\footnote{Note that while~(\ref{eq:z0_eqn}),~(\ref{eq:x0_y0_eqn_1}) and~(\ref{eq:x0_y0_eqn_2}) are not polynomial as set down here, each may be made polynomial by multiplying across by an appropriate factor.}
\begin{equation}
\label{eq:z0_eqn}
z_0 \left( \frac{\int \rho}{\int \lambda} \right) \sum_{t \in I_c} \gamma_t \frac{ \frac{\mathrm{d} A^{(t)}}{\mathrm{d} z} (z_{0})}{A^{(t)}(z_{0})} = \beta \; ,
\end{equation}
\begin{equation}
\label{eq:x0_y0_eqn_1}
x_0 \sum_{t \in I_v} \delta_t \frac{ \frac{\partial B^{(t)}}{\partial x} (x_{0},y_{0})}{B^{(t)}(x_{0},y_{0})} = \alpha \; ,
\end{equation}
\begin{equation}
\label{eq:x0_y0_eqn_2}
y_0 \sum_{t \in I_v} \delta_t \frac{ \frac{\partial B^{(t)}}{\partial y} (x_{0},y_{0})}{B^{(t)}(x_{0},y_{0})} = \beta \; ,
\end{equation}
and
\begin{equation}
\label{eq:z0_y0_relation}
\left( \beta \smallint \lambda \right) (1 + y_0 z_0) = y_0 z_0 \; .
\end{equation}
\label{thm:growth_rate} 
\end{theorem}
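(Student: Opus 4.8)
The plan is to obtain the exact ensemble average $\mathbb{E}_{\cM_n}[A_{\alpha n}]$ by a combinatorial enumeration and then extract its exponential growth rate by the saddle-point (Hayman) method, with the four equations of the theorem arising as three saddle-point conditions together with one maximization condition. First I would fix the \emph{edge weight} $e$, the number of Tanner-graph edges carrying a $1$. On the VN side, the number of ways to assign a local codeword to every VN so that the total information weight equals $w=\alpha n$ and the total socket (output) weight equals $e$ is the coefficient of $x^w y^e$ in $\prod_{t\in I_v}\bigl(B^{(t)}(x,y)\bigr)^{E\lambda_t/q_t}$; on the CN side, the number of ways to assign a local codeword to every CN with total socket weight $e$ is the coefficient of $z^e$ in $\prod_{t\in I_c}\bigl(A^{(t)}(z)\bigr)^{E\rho_t/s_t}$. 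Because $\cM_n$ is uniform over the $E!$ edge permutations, a fixed VN-side $1$-pattern and a fixed CN-side $1$-pattern (each with exactly $e$ ones) are simultaneously realised by exactly $e!\,(E-e)!$ permutations, so the probability of consistency is $\binom{E}{e}^{-1}$ and
\[
\mathbb{E}_{\cM_n}[A_{\alpha n}]=\sum_{e}\binom{E}{e}^{-1}\Bigl([x^w y^e]\textstyle\prod_{t\in I_v}\bigl(B^{(t)}(x,y)\bigr)^{E\lambda_t/q_t}\Bigr)\Bigl([z^e]\textstyle\prod_{t\in I_c}\bigl(A^{(t)}(z)\bigr)^{E\rho_t/s_t}\Bigr).
\]

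Next I would evaluate each factor asymptotically with $e=\beta n$. By Cauchy's formula and the saddle-point method, the VN coefficient grows as $\exp\!\bigl[n(\sum_{t}\delta_t\log B^{(t)}(x_0,y_0)-\alpha\log x_0-\beta\log y_0)\bigr]$, where the positive real saddle $(x_0,y_0)$ is fixed by matching the normalised information and socket weights; using $E\lambda_t/(q_tn)=\delta_t$ from (\ref{eq:gamma_t_delta_t_definition}), these matching conditions are precisely (\ref{eq:x0_y0_eqn_1}) and (\ref{eq:x0_y0_eqn_2}). Similarly the CN coefficient grows as $\exp\!\bigl[n((\int\rho/\int\lambda)\sum_t\gamma_t\log A^{(t)}(z_0)-\beta\log z_0)\bigr]$, with saddle $z_0$ obeying (\ref{eq:z0_eqn}) after using $E\rho_t/(s_tn)=\gamma_t\int\rho/\int\lambda$. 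Finally, since $E=n/\int\lambda$ and $e/E=\beta\int\lambda$, Stirling's formula gives $\tfrac1n\log\binom{E}{e}\to\tfrac1{\int\lambda}h(\beta\int\lambda)$.

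Collecting the three contributions, the summand has exponential rate
\[
F(\beta)=\sum_{t}\delta_t\log B^{(t)}(x_0,y_0)-\alpha\log x_0-\beta\log y_0+\Bigl(\tfrac{\int\rho}{\int\lambda}\Bigr)\sum_t\gamma_t\log A^{(t)}(z_0)-\beta\log z_0-\tfrac1{\int\lambda}h(\beta\int\lambda),
\]
and by Laplace's method $G(\alpha)=\max_\beta F(\beta)$. When differentiating $F$ in $\beta$, the envelope theorem annihilates the implicit dependence through $x_0,y_0,z_0$, because the partial derivatives of $F$ with respect to those variables vanish by the saddle conditions just derived; this leaves $F'(\beta)=-\log y_0-\log z_0-h'(\beta\int\lambda)=0$. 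Since $h'(b)=\log\frac{1-b}{b}$, this rearranges to $y_0z_0(1-\beta\int\lambda)=\beta\int\lambda$, i.e.\ (\ref{eq:z0_y0_relation}); thus the three saddle equations and this stationarity condition together form the claimed $4\times4$ system in the unknowns $x_0,y_0,z_0,\beta$. Substituting (\ref{eq:z0_y0_relation}) to eliminate the $-\beta\log(y_0z_0)$ and entropy terms collapses them to $\tfrac1{\int\lambda}\log(1-\beta\int\lambda)$, which yields exactly (\ref{eq:growth_rate_polynomial_general}).

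The step I expect to be the main obstacle is the rigorous justification of the asymptotics: I must show that the $4\times4$ system has a \emph{unique} positive real solution, that the maximising $\beta$ is interior to the admissible range, and that the sub-exponential saddle-point prefactors and the discreteness of the sum over $e$ do not perturb the exponential rate. Here the standing hypotheses---that every local code and every dual local code have minimum distance at least $2$ and greater than $1$ respectively, so that no $B^{(t)}$ or $A^{(t)}$ has an identically vanishing variable---are what secure the strict log-convexity (Hayman-admissibility) of $\log B^{(t)}$ and $\log A^{(t)}$ needed to make the saddle-point and Laplace approximations exact in the limit, and to guarantee uniqueness of the solution.
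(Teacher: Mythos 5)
Your proposal is correct, and its skeleton --- enumeration of VN-side and CN-side patterns by generating functions, the observation that a fixed VN-side pattern and a fixed CN-side pattern of common edge weight $e$ are simultaneously realised by exactly $e!\,(E-e)!$ of the $E!$ edge permutations (hence the $\binom{E}{e}^{-1}$ factor), saddle-point asymptotics for each factor, and a final Laplace maximization over $\beta$ whose stationarity condition is precisely (\ref{eq:z0_y0_relation}) --- is the same as that of the paper's proof in Section~\ref{section:proof_of_main_result}, down to the closing algebraic collapse of $-\beta\log(y_0z_0)-h(\beta\int\lambda)/\int\lambda$ into $\log(1-\beta\int\lambda)/\int\lambda$. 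The one substantive difference is how the common saddle values $x_0,y_0,z_0$ arise. You apply the saddle-point method directly to the aggregate products $\prod_{t\in I_v}\bigl(B^{(t)}(x,y)\bigr)^{n\delta_t}$ and $\prod_{t\in I_c}\bigl(A^{(t)}(z)\bigr)^{m\gamma_t}$, so the mixed equations (\ref{eq:x0_y0_eqn_1}), (\ref{eq:x0_y0_eqn_2}) and (\ref{eq:z0_eqn}) appear in one stroke; this requires a Hayman-type coefficient estimate for a product of \emph{distinct} polynomial powers, a standard but genuinely stronger statement than the single-polynomial Lemmas~\ref{lemma:optimization_1D} and~\ref{lemma:optimization_2D} quoted from \cite{di06:weight}, which are all the paper permits itself. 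The paper instead apportions the weight among types ($\epsilon_t$ on the CN side, $(\alpha_t,\beta_t)$ on the VN side), applies the single-polynomial lemma per type, and shows via Lagrange multipliers that at the dominating apportionment all per-type saddle points coincide ($z_{0,t}=z_0$, $x_{0,t}=x_0$, $y_{0,t}=y_0$); summing the per-type saddle equations then yields (\ref{eq:z0_eqn})--(\ref{eq:x0_y0_eqn_2}), and the multiplier condition in the $\beta_t$ direction yields (\ref{eq:z0_y0_relation}), exactly as your envelope-theorem computation does. Your route buys brevity (no apportionment vectors, no multiplier bookkeeping); the paper's route buys the ability to lean entirely on the cited one-polynomial lemma, which is where the uniqueness and rigor issues you correctly flag at the end are quarantined. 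To make your version fully rigorous you would either prove the product-form coefficient estimate yourself --- most naturally by the very convolution-plus-maximization argument the paper uses --- or fall back on the paper's per-type decomposition.
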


\medskip
This theorem is proved in Section \ref{section:proof_of_main_result}. It is important to note that
the solution always involves a system of $4$ equations in $4$ unknowns, \emph{regardless} of the
number of different CN and VN types. Also, note that we can solve efficiently for the parameter
$\alpha^{*}$ without evaluating the entire spectral shape, by simply augmenting the
system~\eqref{eq:z0_eqn}--\eqref{eq:z0_y0_relation} with an additional equation which sets the
right-hand side of \eqref{eq:growth_rate_polynomial_general} to zero. Note also that substituting
for $\beta$ from \eqref{eq:z0_y0_relation} will further reduce the system to $3$ equations in $3$
unknowns. The reader may verify that in the special case of LDPC codes, this result reduces to
\cite[Corollary 12]{di06:weight}. 

We point out that Theorem~\ref{thm:growth_rate} is
consistent with Theorem~4.1 of \cite{flanagan09:IEEE-IT} which
provides an expression for $G(\alpha)$ valid for small $\alpha$. This can be seen by conducting an
analysis of \eqref{eq:growth_rate_polynomial_general}--\eqref{eq:z0_y0_relation} for the
small-$\alpha$ case. This analysis, detailed in Appendix~\ref{app:small_alpha},
yields the following slightly weaker version\footnote{The result is slightly weaker in the following sense: denoting the left-hand side of \eqref{eq:growth_rate_asymptotic_dgldpc1} by $F(\alpha) + o(\alpha)$, Corollary \ref{corollary:small_alpha} proves that $\lim_{n \rightarrow \infty} [G(\alpha) - F(\alpha)] / \alpha = 0$. In contrast, Theorem~4.1 of \cite{flanagan09:IEEE-IT} proves that $| G(\alpha) - F(\alpha)| \le K \cdot \alpha^{\xi}$, where $K$ is independent of $\alpha$ and $\xi > 0$ is a known parameter depending on the ensemble. Thus, Theorem~4.1 of \cite{flanagan09:IEEE-IT} provides a stronger statement regarding the rate of convergence of $G(\alpha)$ to $F(\alpha)$ as $\alpha \rightarrow 0$.} of Theorem~4.1 of \cite{flanagan09:IEEE-IT} as a corollary of Theorem~\ref{thm:growth_rate}.

\medskip
\begin{corollary}
\label{corollary:small_alpha}
The weight spectral shape of the irregular D-GLDPC ensemble sequence $\{ \cM_n \}$ is given by
\begin{multline}
G(\alpha) = \frac{T}{\psi} \alpha \log \alpha + \alpha \Bigg[ \log \frac{1}{Q_1^{-1}(1)} \\ +
\frac{T}{\psi} \log \frac{1}{Q_2(Q_1^{-1}(1))} \Bigg] + o(\alpha) \; ,
\label{eq:growth_rate_asymptotic_dgldpc1}
\end{multline}
where $r$ denotes the smallest minimum distance over all CN types, $\psi = r/(r-1)$, and $T$ denotes
the minimum of $(j-\psi)/i$ over all types $t \in I_v$ and all pairs $(i,j)$ such that
$B^{(t)}_{i,j} > 0$. The set $Y_v$ is the set of all types $t \in I_v$ such that this minimum $T$ is
achieved, and $P_t$ denotes the corresponding set of pairs $(i,j)$. Finally,
\begin{equation}
Q_1(x) = \sum_{t \in Y_v} \frac{\lambda_t}{q_t} \sum_{ (i,j) \in P_t } j B^{(t)}_{i,j} C^{j/r}
\left( \frac{\int \lambda}{e} \right)^{iT/\psi} x^i
\label{eq:P1x_definition}
\end{equation}
and
\begin{equation}
Q_2(x) = \sum_{t \in Y_v} \frac{\lambda_t}{q_t} \sum_{ (i,j) \in P_t } i B^{(t)}_{i,j} C^{j/r}
\left( \frac{\int \lambda}{e} \right)^{iT/\psi} x^i \; ,
\label{eq:P2x_definition}
\end{equation}
where 
\begin{equation}
\label{eq:general_C_definition1}
C = r \sum_{t \; : \; r_t = r} \frac{\rho_t A^{(t)}_{r}}{s_t} > 0
\end{equation} 
(this definition of the parameter $C$ reduces to that given in (\ref{eq:C_V_definitions}) in the
case $r=2$).
\end{corollary}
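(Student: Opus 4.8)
The plan is to carry out a dominant-balance (asymptotic) analysis of the exact parametric characterization of Theorem~\ref{thm:growth_rate} in the regime $\alpha \to 0^+$. Since each VN IO-WEF has lowest-order terms $x^i y^j$ with $i \ge 1$, $j \ge p_t \ge 2$, and each CN WEF has lowest term $z^{r_t}$ with $r_t \ge 2$, the natural fundamental small parameter is $y_0$ (equivalently $z_0$ or $\beta$). I would first argue that $\alpha \to 0$ forces $y_0, z_0, \beta \to 0$, whereas $x_0$ need \emph{not} vanish, and in fact diverges whenever $T > 0$. The key structural observation is that the entire $\alpha \log \alpha$ behaviour will originate from the single term $-\alpha \log x_0$ in \eqref{eq:growth_rate_polynomial_general}, so the crux is to pin down the rate at which $x_0 \to \infty$.

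\textbf{CN side and the scaling exponent.} Linearizing \eqref{eq:z0_eqn} for small $z_0$, and using $A^{(t)}(z_0) = 1 + A^{(t)}_{r_t} z_0^{r_t} + \cdots$, only the types with smallest minimum distance $r_t = r$ survive, giving $\beta \int\lambda \approx C z_0^{r}$ with $C$ as in \eqref{eq:general_C_definition1}. Combined with the leading form $\beta \int\lambda \approx y_0 z_0$ of \eqref{eq:z0_y0_relation}, this yields $y_0 \approx C z_0^{r-1}$ and hence $\beta \int\lambda \approx C^{-1/(r-1)} y_0^{\psi}$ with $\psi = r/(r-1)$. Linearizing \eqref{eq:x0_y0_eqn_1}--\eqref{eq:x0_y0_eqn_2} gives $\alpha \int\lambda \approx \sum_t \frac{\lambda_t}{q_t}\sum_{(i,j)} i B^{(t)}_{i,j} x_0^i y_0^j$ and $\beta \int\lambda \approx \sum_t \frac{\lambda_t}{q_t}\sum_{(i,j)} j B^{(t)}_{i,j} x_0^i y_0^j$. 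Writing $x_0 = \eta\, y_0^{-\tau}$, each monomial equals $\eta^i y_0^{\psi}\cdot y_0^{\,i((j-\psi)/i-\tau)}$; to balance $y_0^\psi$ the exponent of the \emph{dominant} monomials must vanish, forcing $\tau = T := \min_{(i,j)}(j-\psi)/i$ and leaving only the index set $Y_v$ and $\{P_t\}$. This is where the definition of $T$ enters and is the main obstacle: one must verify self-consistency of the balance, that non-minimizing monomials are genuinely $o(y_0^\psi)$ uniformly, and that the unique positive solution of the system actually realizes this scaling (one also needs positivity and monotonicity of $Q_1$ so that $Q_1^{-1}(1)$ is well-defined).

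\textbf{Identifying $Q_1$, $Q_2$ and solving.} Using the identity $j = \psi + iT$ valid for every $(i,j)$ achieving the minimum, the weight factor in \eqref{eq:P1x_definition}--\eqref{eq:P2x_definition} factorizes as $C^{j/r}(\int\lambda/e)^{iT/\psi} = C^{1/(r-1)}\Lambda^{i}$ with $\Lambda := C^{T/r}(\int\lambda/e)^{T/\psi}$. The $\beta$-equation then reads $C^{1/(r-1)}Q_1(\eta/\Lambda) = 1$, i.e.\ $\eta = \Lambda\, Q_1^{-1}(1)$, while the $\alpha$-equation reads $\alpha \int\lambda \approx C^{-1/(r-1)} Q_2(Q_1^{-1}(1))\, y_0^{\psi}$, which inverts to $y_0^{\psi} \propto \alpha$ and therefore $x_0 \sim \eta\, y_0^{-T} \propto \alpha^{-T/\psi}$.

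\textbf{Assembling $G(\alpha)$.} Finally I would substitute these asymptotics term-by-term into \eqref{eq:growth_rate_polynomial_general}. The term $-\alpha\log x_0$ produces $\frac{T}{\psi}\alpha\log\alpha$ together with an $O(\alpha)$ constant already containing $\log\frac{1}{Q_1^{-1}(1)}$ and $\frac{T}{\psi}\log\frac{1}{Q_2(Q_1^{-1}(1))}$; the remaining three terms are each $O(\alpha)$. The bookkeeping is delicate but clean: the $\log C$ contributions cancel via the identity $\frac{1}{\psi(r-1)} = \frac1r$, and after rewriting the coefficient-only VN sum appearing in $\sum_t\delta_t\log B^{(t)}$ through $j B^{(t)}_{i,j} = \psi B^{(t)}_{i,j} + T\, i B^{(t)}_{i,j}$ in terms of the values $Q_1^{-1}(1)$ and $Q_2(Q_1^{-1}(1))$, the combined contribution of that VN sum, of the CN sum, and of $\frac{1}{\int\lambda}\log(1-\beta\int\lambda)$ reduces to exactly $-\frac{T}{\psi}\alpha$, cancelling the stray $+\frac{T}{\psi}\alpha$ left over from the $x_0$ term. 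What survives is precisely \eqref{eq:growth_rate_asymptotic_dgldpc1}, with all $o(\alpha)$ errors controlled by the subleading corrections discarded above. I expect the generic case $T>0$ to proceed exactly as described; the boundary case $T=0$ (possible only when $r=2$ and some VN contributes an $(i,2)$ term) makes the leading logarithmic term vanish and ties back to the threshold condition $C \cdot V < 1$, and would be handled separately.
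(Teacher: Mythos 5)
Your proposal is correct and follows essentially the same route as the paper's proof in Appendix~\ref{app:small_alpha}: a dominant-balance analysis of the system \eqref{eq:z0_eqn}--\eqref{eq:z0_y0_relation} as $\alpha \to 0$, yielding the same scalings $z_0 \propto \beta^{1/r}$, $y_0 \propto \beta^{1/\psi}$, $x_0 \propto \beta^{-T/\psi}$ and $\beta = \alpha/Q_2(Q_1^{-1}(1))\,(1+o(1))$, followed by the same term-by-term substitution into \eqref{eq:growth_rate_polynomial_general} in which the VN sum, the CN sum and the $\log(1-\beta\int\lambda)/\int\lambda$ term combine to $-\frac{T}{\psi}\alpha(1+o(1))$ and cancel the stray $+\frac{T}{\psi}\alpha$ from $-\alpha\log x_0$. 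Your parametrization via $y_0$ (writing $x_0=\eta\, y_0^{-T}$) is equivalent to the paper's parametrization via $\beta$; the only blemish is a harmless slip where you write the balanced $\beta$-equation as $C^{1/(r-1)}Q_1(\eta/\Lambda)=1$ when it should read $Q_1(\eta/\Lambda)=1$, which is in fact what your (correct) conclusion $\eta = \Lambda\, Q_1^{-1}(1)$ uses.
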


\medskip
When the term $(T/\psi)\alpha \log\alpha$ is nonzero, which happens when either $r\geq3$
or $p\geq3$ where $p$ is the smallest minimum distance over all VN types,
\eqref{eq:growth_rate_asymptotic_dgldpc1} may be directly used to obtain
an approximation to the parameter $\alpha^*$ without solving the polynomial system in
Theorem~\ref{thm:growth_rate} -- this approximation is in general very good for ensembles characterized by 
sufficiently small $\alpha^*$. To obtain this, we set $G(\alpha^*)=0$, $\alpha^* \ne 0$, and neglect the $o(\alpha)$ term in \eqref{eq:growth_rate_asymptotic_dgldpc1}, yielding
\begin{equation}\label{eq:alpha_star_approx_DGLDPC}
\alpha^* \approx \left[ Q_1^{-1}(1)\right]^{\frac{\psi}{T}} \cdot Q_2(Q_1^{-1}(1)) \; .
\end{equation}

For an irregular GLDPC code ensemble, this approximation reduces to
\begin{equation}\label{eq:alpha_star_approx_GLDPC}
\alpha^* \approx \lambda_p^{-r/(pr-p-r)} C^{-p/(pr-p-r)} \frac{e}{p \int\!\lambda}
\end{equation}
where $\lambda_p$ is the fraction of edges connected to the degree-$p$ VNs, i.e., to the VNs of
lowest degree, and where $C$ is defined in \eqref{eq:general_C_definition1}. If the GLDPC
code ensemble is variable-regular, \eqref{eq:alpha_star_approx_GLDPC} reduces to
\begin{equation}\label{eq:alpha_star_approx_GLDPC_variable_regular}
\alpha^* \approx C^{-p/(pr-p-r)} e\, .
\end{equation}

An even simpler expression is obtained for regular LDPC code ensembles of VN degree
$p\geq3$ (as $r=2$ for LDPC codes), for which \eqref{eq:alpha_star_approx_GLDPC_variable_regular}
becomes
\begin{equation}\label{eq:alpha_star_approx_LDPC_regular}
\alpha^* \approx \frac{e}{(d_c-1)^{1/(1-2/d_v)}}
\end{equation}
where $d_v=p$ is the VN degree and $d_c$ is the CN degree. Some numerical results on this
approximation will be presented in Section~\ref{section:examples}.

\medskip
\begin{lemma}
\label{lemma:Gprime_alpha}
The derivative of the weight spectral shape of the irregular D-GLDPC ensemble sequence $\{ \cM_n \}$ is given by
\begin{equation*}
G'(\alpha) = - \log x_{0} \; ,
\end{equation*}
where, for any $\alpha$, $x_0$ is given by the solution to the system of equations \eqref{eq:z0_eqn}--\eqref{eq:z0_y0_relation}. It follows that the stationary points of $G(\alpha)$ occur at exactly those values of $\alpha$ for which the solution to \eqref{eq:z0_eqn}--\eqref{eq:z0_y0_relation} satisfies $x_0 = 1$.
\end{lemma}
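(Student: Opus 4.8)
The plan is to differentiate the expression~\eqref{eq:growth_rate_polynomial_general} for $G(\alpha)$ directly with respect to $\alpha$, treating $x_0$, $y_0$, $z_0$ and $\beta$ as implicit functions of $\alpha$ determined by the system~\eqref{eq:z0_eqn}--\eqref{eq:z0_y0_relation}. The key observation I expect to exploit is that $G(\alpha)$ has the structure of a Legendre-type transform: the term $-\alpha \log x_0$ plays the role of the ``$-\alpha \cdot (\text{conjugate variable})$'' piece, and the defining equation~\eqref{eq:x0_y0_eqn_1} says precisely that $x_0$ is the value making $\alpha$ the appropriate logarithmic derivative. This suggests that all the implicit contributions from $\frac{\mathrm{d}x_0}{\mathrm{d}\alpha}$, $\frac{\mathrm{d}y_0}{\mathrm{d}\alpha}$, $\frac{\mathrm{d}z_0}{\mathrm{d}\alpha}$ and $\frac{\mathrm{d}\beta}{\mathrm{d}\alpha}$ should cancel, leaving only the explicit $-\log x_0$ term. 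Verifying this cancellation is the heart of the argument.

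First I would write $G'(\alpha)$ by the chain rule as the explicit partial $-\log x_0$ plus a sum of four terms, each being a partial derivative of $G$ with respect to one of $x_0,y_0,z_0,\beta$ multiplied by that variable's derivative with respect to $\alpha$. I would then compute each partial derivative of~\eqref{eq:growth_rate_polynomial_general}. The partial with respect to $x_0$ is
\[
\frac{\partial G}{\partial x_0} = \sum_{t \in I_v} \delta_t \frac{\frac{\partial B^{(t)}}{\partial x}(x_0,y_0)}{B^{(t)}(x_0,y_0)} - \frac{\alpha}{x_0},
\]
which vanishes identically by~\eqref{eq:x0_y0_eqn_1}. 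Next I would compute $\frac{\partial G}{\partial y_0}$ and $\frac{\partial G}{\partial z_0}$; I expect these to combine with the $\beta$-derivative of the last term $\frac{\log(1-\beta\int\lambda)}{\int\lambda}$ and the partial with respect to $\beta$. The constraint~\eqref{eq:z0_y0_relation} linking $\beta$, $y_0$, $z_0$ must be used here: differentiating it in $\alpha$ gives one relation among the four unknown derivatives, and~\eqref{eq:z0_eqn}, \eqref{eq:x0_y0_eqn_2} supply the identifications that let the remaining partials collapse.

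The step I expect to be the main obstacle is organizing the cancellation among the $y_0$, $z_0$, and $\beta$ contributions, since these are genuinely coupled through~\eqref{eq:z0_eqn}, \eqref{eq:x0_y0_eqn_2} and~\eqref{eq:z0_y0_relation}, unlike the $x_0$ term which drops out cleanly and in isolation. Concretely, after substituting the stationarity conditions~\eqref{eq:z0_eqn} and~\eqref{eq:x0_y0_eqn_2} into $\frac{\partial G}{\partial z_0}$ and $\frac{\partial G}{\partial y_0}$, I expect both to reduce to expressions proportional to $\beta/z_0$ and $\beta/y_0$ respectively, so that the total $y_0$-, $z_0$-, and $\beta$-contributions take the form $\beta$ times the logarithmic derivative (in $\alpha$) of the combination appearing in the constraint~\eqref{eq:z0_y0_relation}. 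Differentiating~\eqref{eq:z0_y0_relation} then shows this combination is constant along the solution curve, forcing the remaining terms to cancel. The final statement about stationary points is then immediate: $G'(\alpha) = 0$ if and only if $\log x_0 = 0$, i.e.\ $x_0 = 1$.
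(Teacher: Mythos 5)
Your proposal is correct and follows essentially the same route as the paper's proof: differentiate \eqref{eq:growth_rate_polynomial_general} treating $x_0,y_0,z_0,\beta$ as implicit functions of $\alpha$, use \eqref{eq:x0_y0_eqn_1} to eliminate the $x_0$ contribution and \eqref{eq:z0_eqn}, \eqref{eq:x0_y0_eqn_2} to reduce the $z_0$ and $y_0$ partials to $\beta/z_0$ and $\beta/y_0$, and then differentiate the constraint \eqref{eq:z0_y0_relation} to show the remaining $y_0$, $z_0$, $\beta$ terms cancel. The cancellation you anticipate does go through exactly as you describe, so the plan is sound and matches the paper.
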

\medskip
\begin{proof} 
Note that in the solution for the weight spectral shape function given by Theorem \ref{thm:growth_rate}, each of the parameters $x_0$, $y_0$, $z_0$ and $\beta$ can be regarded as an implicit function of $\alpha$. Hence, differentiating \eqref{eq:growth_rate_polynomial_general} directly and using \eqref{eq:z0_eqn}, \eqref{eq:x0_y0_eqn_1} and \eqref{eq:x0_y0_eqn_2} yields
\begin{eqnarray*}
G'(\alpha) & = & - \log x_0 + \frac{\beta}{z_0} \frac{\mathrm{d} z_0}{\mathrm{d} \alpha} + \frac{\beta}{y_0} \frac{\mathrm{d} y_0}{\mathrm{d} \alpha} - \frac{1}{1-\beta \smallint \lambda} \frac{\mathrm{d} \beta}{\mathrm{d} \alpha} \\
& = & - \log x_0 + \beta \frac{\mathrm{d} \log \left( y_0 z_0 \right) }{\mathrm{d} \alpha} - \frac{y_0 z_0}{\beta \smallint \lambda} \frac{\mathrm{d} \beta}{\mathrm{d} \alpha} \; ,
\end{eqnarray*}
where we have used \eqref{eq:z0_y0_relation} in the final line. It remains to prove that the final pair of terms in this expression sum to zero. To show this, we write \eqref{eq:z0_y0_relation} as
\[
\log \left( \beta \smallint \lambda \right) + \log \left( 1 + y_0 z_0 \right) = \log \left( y_0 z_0 \right)
\] 
Differentiating this expression yields
\[
\frac{1}{\beta} \frac{\mathrm{d} \beta}{\mathrm{d} \alpha} = \frac{1}{y_0 z_0 (1 + y_0 z_0)} \frac{\mathrm{d} \left[ y_0 z_0 \right]}{\mathrm{d} \alpha} 
\]
and so
\begin{eqnarray*}
\frac{y_0 z_0}{\beta \smallint \lambda} \frac{\mathrm{d} \beta}{\mathrm{d} \alpha} & = & \frac{1}{(1 + y_0 z_0) \smallint \lambda} \frac{\mathrm{d} \left[ y_0 z_0 \right]}{\mathrm{d} \alpha} \\
& = & \frac{\beta}{y_0 z_0} \frac{\mathrm{d} \left[ y_0 z_0 \right]}{\mathrm{d} \alpha} = \beta \frac{\mathrm{d} \left[ \log (y_0 z_0) \right]}{\mathrm{d} \alpha} \; ,
\end{eqnarray*}
where in the final line we have again used \eqref{eq:z0_y0_relation}. This establishes the result of the Lemma.
\end{proof}

\medskip
\begin{theorem}
%For any D-GLDPC code ensemble, the maximum value of the weight spectral shape $H(\omega)$ defined in \eqref{eq:relationship_G_H} is given by $R \log 2$, and is attained when $\omega=1/2$.
For any D-GLDPC code ensemble, the weight spectral shape $H(\omega)$ defined in \eqref{eq:relationship_G_H} has a stationary point at $\omega=1/2$, and the function value at this point is given by $H(1/2) = R \log 2$.
\label{thm:max_of_weight_spectral_shape} 
\end{theorem}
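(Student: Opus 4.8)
The plan is to show that $\omega = 1/2$ is a stationary point of $H(\omega)$ by exhibiting an explicit solution to the polynomial system~\eqref{eq:z0_eqn}--\eqref{eq:z0_y0_relation} at the corresponding value of $\alpha$, and then reading off both the stationarity and the function value from that solution. By~\eqref{eq:relationship_G_H} we have $H(\omega) = G(K_s \omega)/K_s$, so $H'(\omega) = G'(K_s \omega)$, and by Lemma~\ref{lemma:Gprime_alpha} the stationary points of $G$ (and hence of $H$) occur exactly where the solution satisfies $x_0 = 1$. Thus the entire theorem reduces to showing that when we impose $x_0 = 1$, the system~\eqref{eq:z0_eqn}--\eqref{eq:z0_y0_relation} is solved at $\alpha = K_s/2$ (equivalently $\omega = 1/2$), and then evaluating~\eqref{eq:growth_rate_polynomial_general} there.

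First I would set $x_0 = 1$ and look for the distinguished solution. The natural guess, motivated by the symmetry of a uniformly random codeword, is $y_0 = 1$ and $z_0 = 1$ as well. With $x_0 = y_0 = z_0 = 1$, equation~\eqref{eq:x0_y0_eqn_1} reads $\alpha = \sum_{t \in I_v} \delta_t \frac{\partial_x B^{(t)}(1,1)}{B^{(t)}(1,1)}$; here $B^{(t)}(1,1) = 2^{k_t}$ counts all local codewords, and $\partial_x B^{(t)}(1,1)$ is the total input-weight summed over all codewords, which by linearity of the local code equals $k_t 2^{k_t - 1}$, so each summand is $k_t/2$. Using the definition of $\delta_t$ in~\eqref{eq:gamma_t_delta_t_definition} and of $K_s$ in~\eqref{eq:scaling_factor_def}, this collapses to $\alpha = K_s/2$, i.e.\ exactly $\omega = 1/2$. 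I would check the remaining three equations similarly: the half-weight identities $\partial_y B^{(t)}(1,1) = q_t 2^{k_t-1}$ and $\frac{\mathrm dA^{(t)}}{\mathrm dz}(1) = s_t 2^{h_t-1}$ (each local code has exactly half its coordinates equal to $1$ on average) should force a common value of $\beta$ from~\eqref{eq:z0_eqn} and~\eqref{eq:x0_y0_eqn_2}, and~\eqref{eq:z0_y0_relation} with $y_0 z_0 = 1$ fixes $\beta \smallint\lambda = 1/2$, which I would verify is consistent.

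With the solution $x_0 = y_0 = z_0 = 1$, $\beta = 1/(2\smallint\lambda)$ in hand, stationarity of $H$ is immediate from Lemma~\ref{lemma:Gprime_alpha} since $x_0 = 1$. For the function value I would substitute into~\eqref{eq:growth_rate_polynomial_general}: the term $-\alpha\log x_0$ vanishes, $B^{(t)}(1,1) = 2^{k_t}$, $A^{(s)}(1) = 2^{h_s}$, and the last term becomes $\frac{\log(1/2)}{\int\lambda}$. Collecting, $G(K_s/2) = \log 2\,\big[\sum_t \delta_t k_t + (\int\rho/\int\lambda)\sum_s \gamma_s h_s - 1/\int\lambda\big]$, and dividing by $K_s$ should yield $H(1/2) = R\log 2$ after identifying the bracketed quantity with $K_s R$ via the design-rate formula~\eqref{eq:design_rate} and the definitions~\eqref{eq:gamma_t_delta_t_definition},~\eqref{eq:scaling_factor_def}.

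The main obstacle is twofold. First, I must justify rigorously that the all-ones solution is the \emph{unique positive real} solution guaranteed by Theorem~\ref{thm:growth_rate} at $\alpha = K_s/2$, not merely \emph{a} solution; otherwise the substitution computes $G$ at the wrong branch. I expect to lean on the uniqueness already asserted in Theorem~\ref{thm:growth_rate}, so that exhibiting any valid positive solution suffices. Second, the arithmetic linking $\sum_t \delta_t k_t + (\int\rho/\int\lambda)\sum_s \gamma_s h_s - 1/\int\lambda$ to $K_s R$ is the one genuinely fiddly step; it requires carefully rewriting the design rate~\eqref{eq:design_rate} in terms of $h_t, k_t, s_t, q_t$ (via $R_t = h_t/q_t$ for VNs and $R_t = h_t/s_t$ for CNs) and matching it against~\eqref{eq:scaling_factor_def}. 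I would treat this bookkeeping as the crux and verify it term by term.
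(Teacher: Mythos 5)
Your proposal is correct and takes essentially the same route as the paper's proof: both exhibit the solution $x_0=y_0=z_0=1$, $\beta=1/(2\int\lambda)$ at $\alpha=K_s/2$, deduce stationarity from Lemma~\ref{lemma:Gprime_alpha} together with the uniqueness asserted in Theorem~\ref{thm:growth_rate} (so exhibiting one positive solution indeed suffices), and evaluate \eqref{eq:growth_rate_polynomial_general} against the design-rate formula \eqref{eq:design_rate}. Two small points: your ``half-weight'' identities $\frac{\mathrm{d}A^{(t)}}{\mathrm{d}z}(1)=s_t\,2^{h_t-1}$ and $\frac{\partial B^{(t)}}{\partial y}(1,1)=q_t\,2^{k_t-1}$ do not follow from linearity alone but from the standing assumption that every dual local code has minimum distance greater than one (the paper derives them from the MacWilliams identities), and the VN local rate in your final paragraph should read $R_t=k_t/q_t$, not $h_t/q_t$.
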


\medskip
\begin{proof}
Note that (using \eqref{eq:relationship_G_H}) it is equivalent to show that the weight spectral shape $G(\alpha)$ of Theorem \ref{thm:growth_rate} has a stationary point at $\alpha = K_s/2$, at which point $G(\alpha) = K_s R \log 2$. We will show that 
\begin{equation}
\alpha = \frac{K_s}{2} \; ; \; \quad x_0=y_0=z_0=1 \; ; \; \beta = \frac{1}{2 \int \lambda} \; , 
\label{eq:substitutions}
\end{equation}
makes \eqref{eq:z0_eqn}--\eqref{eq:z0_y0_relation} a consistent system of equations. This solution has the property that $x_0 = 1$, and therefore by Lemma \ref{lemma:Gprime_alpha}, a stationary point of the weight spectral shape exists at $G(K_s/2) = K_s R \log 2$ (or equivalently at $H(1/2) = R \log 2$).

First, it is easy to check that \eqref{eq:z0_y0_relation} holds under \eqref{eq:substitutions}.
Making the substitutions \eqref{eq:substitutions} in \eqref{eq:z0_eqn} yields
\begin{equation}
( \smallint\! \rho ) \sum_{t \in I_c} \gamma_t \frac{ \frac{\mathrm{d} A^{(t)}}{\mathrm{d} z} (1)}{A^{(t)}(1)} = \frac{1}{2} \; .
\label{eq:checkmax_z0_eqn_1}
\end{equation}
Note that $A^{(t)}(1) = \sum_{u=0}^{s_t} A_u^{(t)} = 2^{h_t}$ and $\frac{\mathrm{d} A^{(t)} } {\mathrm{d} z} (1) = \sum_{u=0}^{s_t} u A_u^{(t)} = \sum_{\bldcc \in \Code_t} w_{\mathrm{H}} (\bldc)$, where $w_{\mathrm{H}} (\bldc)$ denotes the Hamming weight of the codeword $\bldc$. Therefore, \eqref{eq:checkmax_z0_eqn_1} reduces to
\begin{equation}
( \smallint\! \rho ) \sum_{t \in I_c} \gamma_t \left[ \frac{ \sum_{\bldcc \in \Code_t} w_{\mathrm{H}} (\bldc) }{ 2^{h_t} } \right] = \frac{1}{2} \; .
\label{eq:checkmax_z0_eqn_2}
\end{equation}
Note that the quantity in square brackets is equal to the \emph{average weight} of a codeword in the local code $\Code_t$. Since we assume that the dual code of $\Code_t$ has minimum distance greater than one, it follows from the MacWilliams identities (see, e.g., \cite[Section 8.2]{pless:intro_to_theory_of_ecc}) that the average codeword weight is equal to $s_t/2$. Using this fact, \eqref{eq:checkmax_z0_eqn_2} reduces to
\[
( \smallint\! \rho ) \sum_{t \in I_c} \frac{\gamma_t s_t}{2} = \frac{1}{2} \; ,
\] 
which may be seen using the definition \eqref{eq:gamma_t_delta_t_definition} of $\gamma_t$ and the fact that $\sum_{t \in I_c} \rho_t = 1$.
Similarly, making the substitutions \eqref{eq:substitutions} in \eqref{eq:x0_y0_eqn_2} yields
\[
\sum_{t \in I_v} \delta_t \left[ \frac{ \sum_{\bldcc \in \Code_t} w_{\mathrm{H}} (\bldc) }{ 2^{k_t} } \right] = \frac{1}{2 \int \lambda} \; .
\]
Since we assume that all VNs have dual codes with minimum distance greater than one, this reduces to
\[
\sum_{t \in I_v} \frac{\delta_t q_t}{2} = \frac{1}{2 \int \lambda} \; ,
\]
which may be seen using the definition \eqref{eq:gamma_t_delta_t_definition} of $\delta_t$ and the fact that $\sum_{t \in I_v} \lambda_t = 1$.
Making the substitutions \eqref{eq:substitutions} in \eqref{eq:x0_y0_eqn_1} yields
\[
\sum_{t \in I_v} \delta_t \left[ \frac{ \sum_{u=0}^{k_t} \sum_{v=0}^{q_t} u B_{u,v}^{(t)} }{ 2^{k_t} } \right] = \frac{K_s}{2} \; .
\]
For $t \in I_v$, the quantity in square brackets is equal to the \emph{average weight of an information word} for the local code $\Code_t$; this is always equal to $k_t/2$. Thus we obtain
\[
\sum_{t \in I_v} \frac{\delta_t k_t}{2} = \frac{K_s}{2} \; ,
\]
which follows from the definition \eqref{eq:gamma_t_delta_t_definition} of $\delta_t$ and the definition \eqref{eq:scaling_factor_def} of $K_s$.
Finally, making the substitutions \eqref{eq:substitutions} in \eqref{eq:growth_rate_polynomial_general} yields
\begin{eqnarray*}
G(K_s/2) & = & \sum_{t \in I_v} \delta_t \log 2^{k_t} + \left( \frac{\int \rho}{\int \lambda} \right) \sum_{t \in I_c} \gamma_t \log 2^{h_t} - \frac{\log 2}{\int \lambda} \\
& = & \frac{\log 2}{\int \lambda} \left[ \sum_{t \in I_v} \frac{\lambda_t k_t}{q_t} + \sum_{t \in I_c} \frac{\rho_t h_t}{s_t} - 1 \right] \\
& = & \log 2 \left( \frac{\sum_{t \in I_v} \frac{\lambda_t k_t}{q_t}}{\int \lambda} \right) \left[ 1 - \frac{1 - \sum_{t \in I_c} \frac{\rho_t h_t}{s_t}}{\sum_{t \in I_v} \frac{\lambda_t k_t}{q_t}} \right] \\
& = & K_s R \log 2 \; ,
\end{eqnarray*}
where we have used the expression \eqref{eq:design_rate} for the (design) rate $R$ of the D-GLDPC code.
\end{proof}
\medskip
Although Theorem~\ref{thm:max_of_weight_spectral_shape} proves only that a stationary point of the weight spectral shape $H(\omega)$ exists at $\omega = 1/2$, we conjecture that this point represents a \emph{global maximum} of the weight spectral shape for \emph{any} D-GLDPC code ensemble. Indeed, this is empirically observed for all ensembles we have investigated, even though the total number of stationary points of the weight spectral shape is found to vary from 1 to 3 (c.f. Figures \ref{fig:growth_rate_rep2_H7}, \ref{fig:growth_rate_check_hybrid} and \ref{fig:growth_rate_asymp_bad} later in this paper). Previous work for the special case of LDPC codes also assumed this point to be a global maximum (c.f. the statement of Lemma~7 in \cite{measson08:maxwell_construction}).
 
Note that Theorem \ref{thm:max_of_weight_spectral_shape} only holds in general for the \emph{weight} spectral shape, and not for the \emph{stopping set size} spectral shape. It is interesting to note that this phenomenon, where the maximum weight spectral shape value of $R \log 2$ occurs at half the block length, appears to be quite a general one, occurring widely across many ensembles: for example, all of the spectral shape plots for protograph-based LDPC codes contained in \cite{abu-surra11:IEEE-IT} have this property (these were obtained by non-analytical optimization methods). Also, Gallager's ensemble described in \cite[Section 2.1]{gallager63:low-density}, where the parity-check matrix contains statistically independent equiprobable binary entries, shares the same property (note that this is not a low-density code ensemble).

%%%%%%%%%%%%%%%%%%%%%%%%%%%%%%%%%%%%%%%%%%%%%%%%%%%%%%%%%%%%%%%%%%%%%%%%%%%%%%%%%%%%%%%%
% SECTION: GROWTH RATE OF TANNER CODES
%%%%%%%%%%%%%%%%%%%%%%%%%%%%%%%%%%%%%%%%%%%%%%%%%%%%%%%%%%%%%%%%%%%%%%%%%%%%%%%%%%%%%%%%

\section{Spectral Shape of Check-Hybrid GLDPC Codes}\label{section:spectral_shape_formula}
In this section we consider the special case of a D-GLDPC code ensemble where all VNs are repetition
codes of the same length, i.e., a check-hybrid GLDPC code ensemble with regular VN set. The proofs of all lemmas in this section are deferred to Section \ref{section:supporting_proofs}.

\subsection{Evaluation of the Spectral Shape}

In the following, we show that a compact expression for the spectral shape follows in this case as a
natural corollary to Theorem~\ref{thm:growth_rate}. First we introduce the following definition (recall that for $t \in I_c$, $\bar{u}_t$ is the maximal weight of a codeword in the local code for CN type $t \in I_c$).

\medskip
\begin{definition} Let 
\begin{equation}
\mathsf{M} \defeq \left( \smallint\! \rho \right) \sum_{t \in I_c} \gamma_t \bar{u}_t \leq 1
\label{eq:M_definition}
\end{equation}
and define the function $\sff: \mathbb{R}^+ \rightarrow [0,\mathsf{M})$ as
\begin{align}\label{eq:f}
\sff(z) = \left( \smallint\! \rho \right) \sum_{t \in I_c} \gamma_t \frac{ z \, \frac{\mathrm{d}
A^{(t)}(z)}{\mathrm{d} z}}{A^{(t)}(z)} \, .
\end{align}
\end{definition}

\medskip
Note that we have $\mathsf{M}=1$ if and only if $\bu_t=s_t$ for all $t \in I_c$.

\medskip
\begin{lemma}\label{lemma:f_properties} 
The function $\sff$ fulfills the following properties:
\begin{enumerate}
\item $\sff(0) = \sff'(0) = 0$;
\item $\sff$ is monotonically increasing for all $z>0$;
\item $\lim_{z\rightarrow +\infty} \sff(z) = \mathsf{M}$.
\end{enumerate}
\end{lemma}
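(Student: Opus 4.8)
The plan is to treat each CN type separately. Writing $g_t(z) = z\,\frac{\mathrm{d}A^{(t)}}{\mathrm{d}z}(z)\big/A^{(t)}(z)$ for $t \in I_c$, we have $\sff(z) = (\smallint\!\rho)\sum_{t\in I_c}\gamma_t\, g_t(z)$, a nonnegative linear combination since $\gamma_t > 0$ and $\smallint\!\rho > 0$. Hence it suffices to establish the analogue of each of the three properties for every $g_t$ and then sum. The central observation I would record at the outset is probabilistic: for fixed $z > 0$ the coefficients $A_u^{(t)}z^u$ induce a distribution $p_u(z) \defeq A_u^{(t)}z^u/A^{(t)}(z)$ on the codeword weights $u$, under which $g_t(z) = \sum_u u\, p_u(z) = \expected_z[U]$ is exactly the mean weight.

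For property 1, I would note that $A^{(t)}(z) = 1 + \sum_{u\ge r_t}A_u^{(t)}z^u$ has nonzero constant term, so $g_t$ is analytic near $z=0$; since the numerator $z\,\frac{\mathrm{d}A^{(t)}}{\mathrm{d}z}(z) = \sum_{u\ge r_t} u A_u^{(t)} z^u$ vanishes to order $r_t \ge 2$ while the denominator tends to $1$, the Taylor expansion of $g_t$ begins at order $r_t \ge 2$, so $g_t(0) = g_t'(0) = 0$; summing gives $\sff(0) = \sff'(0) = 0$. For property 3, dividing numerator and denominator of $g_t$ by the leading monomial $z^{\bar{u}_t}$ and letting $z\to+\infty$ shows $g_t(z)\to \bar{u}_t$, whence $\lim_{z\to+\infty}\sff(z) = (\smallint\!\rho)\sum_t\gamma_t\bar{u}_t = \mathsf{M}$.

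The crux is property 2, and here the probabilistic reading pays off. Writing $S_j(z) = \sum_u u^j A_u^{(t)}z^u$, the identities $z\,\frac{\mathrm{d}S_0}{\mathrm{d}z} = S_1$ and $z\,\frac{\mathrm{d}S_1}{\mathrm{d}z} = S_2$ let me differentiate $g_t = S_1/S_0$ directly to obtain
\[
g_t'(z) = \frac{1}{z}\left[\frac{S_2(z)}{S_0(z)} - \left(\frac{S_1(z)}{S_0(z)}\right)^2\right] = \frac{1}{z}\,\variance_z[U],
\]
the variance of the tilted weight distribution. This is manifestly nonnegative (equivalently, $S_0S_2 \ge S_1^2$ is Cauchy--Schwarz), and I would then argue it is \emph{strictly} positive for every $z>0$ because $p_u(z)$ is non-degenerate: its support contains $u=0$ (the all-zero codeword, $A_0^{(t)}=1$) and $u=r_t\ge 2$ (a minimum-weight codeword), two distinct values. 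Hence $g_t'(z)>0$, and therefore $\sff'(z) = (\smallint\!\rho)\sum_t\gamma_t g_t'(z) > 0$ for all $z>0$.

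The main obstacle is purely this monotonicity step; the remaining properties follow from elementary asymptotics at $0$ and $+\infty$. The only subtlety I anticipate is passing from weak to strict monotonicity, which reduces entirely to the non-degeneracy of the tilted weight distribution. This is exactly where the standing hypothesis $r_t \ge 2$, together with the presence of the zero codeword, enters, so I would flag that these assumptions are essential rather than cosmetic.
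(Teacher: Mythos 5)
Your proof is correct, and it arrives at the same derivative identity as the paper but justifies its positivity by a different device. The paper likewise works CN-type by CN-type and differentiates the quotient $z\,\frac{\mathrm{d}A^{(t)}}{\mathrm{d}z}/A^{(t)}$, but it then establishes positivity of the numerator purely algebraically: it expands the numerator as $\sum_u u^2 A^{(t)}_u z^{u-1} + \sum_{u,v} u(u-v)A^{(t)}_u A^{(t)}_v z^{u+v-1}$ and pairs the $(u,v)$ and $(v,u)$ terms to obtain $(u-v)^2 A^{(t)}_u A^{(t)}_v z^{u+v-1}\ge 0$, with strict positivity supplied by the diagonal sum $\sum_u u^2 A^{(t)}_u z^{u-1}$. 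Your tilted-distribution reading --- $z\,g_t'(z) = \variance_z[U]$, nonnegativity by Cauchy--Schwarz, strictness from the two-point support $\{0,r_t\}$ --- is exactly the probabilistic content of that symmetrization: the paper's diagonal sum is precisely the contribution of the zero codeword (the constant term $1$) played against the nonzero-weight codewords, i.e., the same non-degeneracy you invoke. What your framing buys is brevity and transferability (it is the standard convexity of $s \mapsto \log A^{(t)}(e^s)$ for exponential tilts, with no coefficient bookkeeping); what the paper's buys is self-containedness at the level of polynomial identities. You also prove properties 1 and 3 explicitly, which the paper dismisses as straightforward; your arguments for them are sound.

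One small misattribution in your closing remark: strict monotonicity needs only that the tilted distribution have two support points, i.e., the existence of \emph{some} nonzero codeword ($r_t \ge 1$); the standing hypothesis $r_t \ge 2$ is not what makes property 2 strict. Where $r_t \ge 2$ is genuinely essential is property 1: the Taylor expansion of $g_t$ begins at order $r_t$, so $\sff'(0)=0$ would fail if some CN type had $r_t = 1$. Your property-1 argument uses this correctly; only your final sentence places the hypothesis at the wrong step.
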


\medskip
Note that, due to Lemma \ref{lemma:f_properties}, the inverse of $\sff$, denoted by
$\sff^{-1}:[0,\mathsf{M})\rightarrow\mathbb{R}^+$, is well-defined. We are now in a position to
state the main result of this section.

\medskip
\begin{theorem}
Consider a GLDPC code ensemble with a regular VN set, composed of repetition codes all of length
$q$, and a hybrid CN set, composed of a mixture of $n_c$ different linear block code types. Then,
the weight spectral shape of the ensemble is given~by
\begin{align}\label{eq:G(alpha)_irregular_CN_set}
G(\alpha) & = (1-q) h(\alpha) -q\, \alpha \log \sff^{-1}(\alpha) \notag \\ 
\, & \phantom{------}+ q \left(\smallint\!\rho\right) \sum_{t\in I_c} \gamma_t \log A^{(t)}(\sff^{-1}(\alpha)) \; .
\end{align}
%
%where $h(\alpha)=-\alpha\log \alpha -(1-\alpha)\log(1-\alpha)$ denotes the binary entropy function. 
\label{thm:main_result}
\end{theorem}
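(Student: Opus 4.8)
The plan is to obtain \eqref{eq:G(alpha)_irregular_CN_set} as a direct specialization of Theorem~\ref{thm:growth_rate}. Since every VN is a length-$q$ repetition code, there is a single VN type with IO-WEF $B^{(t)}(x,y) = 1 + x y^q$, and the VN type-distribution polynomial is $\lambda(x) = x^{q-1}$, so that $\int\!\lambda = 1/q$ and $\delta_t = 1$. I would begin by substituting this $B^{(t)}$ into the two VN equations \eqref{eq:x0_y0_eqn_1} and \eqref{eq:x0_y0_eqn_2}. Writing $w := x_0 y_0^q$, these become $w/(1+w) = \alpha$ and $q\,w/(1+w) = \beta$; hence $\beta = q\alpha$ and $w = \alpha/(1-\alpha)$, and consequently $B^{(t)}(x_0,y_0) = 1 + w = 1/(1-\alpha)$.

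Next I would extract the relation $z_0 = \sff^{-1}(\alpha)$. Substituting $\int\!\lambda = 1/q$ and the definition \eqref{eq:f} of $\sff$ into the left-hand side of \eqref{eq:z0_eqn} collapses it to $q\,\sff(z_0)$; setting this equal to $\beta = q\alpha$ gives $\sff(z_0) = \alpha$. By Lemma~\ref{lemma:f_properties} the inverse $\sff^{-1}$ is well-defined on $[0,\mathsf{M})$, so $z_0 = \sff^{-1}(\alpha)$ is the unique positive solution. This immediately identifies the third term of \eqref{eq:growth_rate_polynomial_general} (with $\int\!\rho/\int\!\lambda = q\int\!\rho$) as the last term of \eqref{eq:G(alpha)_irregular_CN_set}, so the only remaining work is to express the first, second and fourth terms of \eqref{eq:growth_rate_polynomial_general} in closed form.

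To do this I would use \eqref{eq:z0_y0_relation}, which under $\beta\int\!\lambda = \alpha$ reduces to $y_0 z_0 = \alpha/(1-\alpha) = w$. Combined with $w = x_0 y_0^q$ this yields $x_0 = z_0^q / w^{q-1}$, and hence $\log x_0 = q\log z_0 - (q-1)\log w$ with $\log w = \log\alpha - \log(1-\alpha)$. The fourth term equals $q\log(1-\alpha)$ since $1 - \beta\int\!\lambda = 1-\alpha$, while the first term equals $-\log(1-\alpha)$. Adding the first, second ($-\alpha\log x_0$) and fourth terms and collecting the coefficients of $\log\alpha$, $\log(1-\alpha)$ and $\log z_0$ should produce $(q-1)\alpha\log\alpha + (q-1)(1-\alpha)\log(1-\alpha) - q\alpha\log z_0$, which is precisely $(1-q)h(\alpha) - q\alpha\log\sff^{-1}(\alpha)$, completing the identification with \eqref{eq:G(alpha)_irregular_CN_set}.

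I expect the only real obstacle to be bookkeeping: correctly eliminating $x_0$ and $y_0$ in favour of $z_0$ and $\alpha$, and then carefully gathering the several logarithmic contributions so that they consolidate into the binary entropy function $h(\alpha)$. A secondary point worth stating explicitly is that the substitution $\sff(z_0) = \alpha$ requires $\alpha \in [0,\mathsf{M})$, which is exactly the range on which $\sff^{-1}$ is defined by Lemma~\ref{lemma:f_properties}; this delimits the admissible range of $\alpha$ for which the formula is valid.
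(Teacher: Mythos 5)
Your proposal is correct and follows essentially the same route as the paper's own proof: specialize Theorem~\ref{thm:growth_rate} to the single VN type $B(x,y)=1+xy^q$ with $\int\!\lambda = 1/q$, solve the $4\times 4$ system in closed form (obtaining $\beta = q\alpha$, $z_0 = \sff^{-1}(\alpha)$, $y_0 z_0 = \alpha/(1-\alpha)$, $x_0 = z_0^q\,(\alpha/(1-\alpha))^{1-q}$), and substitute back so the logarithmic terms consolidate into $(1-q)h(\alpha) - q\alpha\log\sff^{-1}(\alpha)$. Your auxiliary variable $w = x_0 y_0^q$ and your explicit remark on the admissible range $\alpha \in [0,\mathsf{M})$ are minor presentational refinements over the paper's sequential elimination, not a different argument.
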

\begin{proof} 
The ensemble constitutes a special case of a D-GLDPC code ensemble where all VNs are repetition codes of length $q \geq 2$, with corresponding IO-WEF
\begin{align}\label{eq:B_repetition}
B(x,y)=1+xy^q \, .
\end{align}
Using Theorem \ref{thm:growth_rate}, the spectral shape function simplifies to (noting that $\int \lambda = 1/q$ in this case)
\begin{align}
G(\alpha) & = \log B(x_{0},y_{0}) - \alpha \log x_{0} \notag \\ 
\, & \phantom{--} + q \left(\smallint\! \rho \right) \sum_{t \in I_c} \gamma_t \log A^{(t)}(z_0) + q \log \left( 1 - \frac{\beta}{q} \right)
\label{eq:growth_rate_polynomial_Tanner}
\end{align}
where the values of $x_0$, $y_0$, $z_0$, $\beta$ in \eqref{eq:growth_rate_polynomial_Tanner} are found by solving the $4 \times 4$ polynomial system
\begin{equation} 
\left( \smallint\! \rho \right) \sum_{t \in I_c} \gamma_t \frac{z_0 \, \frac{\mathrm{d} A^{(t)}(z_0)}{\mathrm{d} z}}{A^{(t)}(z_0)} = \frac{\beta}{q} \; ,
\label{eq:Tanner_eq1}
\end{equation}
\begin{equation} 
\frac{x_0 y_0^q}{1 + x_0 y_0^q} = \alpha \; ,
\label{eq:Tanner_eq2}
\end{equation}
\begin{equation} 
\frac{x_0 y_0^q}{1 + x_0 y_0^q} = \frac{\beta}{q} \; ,
\label{eq:Tanner_eq3}
\end{equation}
and
\begin{equation} 
\frac{z_0 y_0}{1 + z_0 y_0} = \frac{\beta}{q} \; .
\label{eq:Tanner_eq4}
\end{equation}
Note that we are certain of the existence of a unique real solution to the polynomial system such that $x_0>0$, $y_0>0$, $z_0>0$, $\beta>0$, due to Hayman's formula. We solve this system of equations sequentially for the variables $\beta$, $z_0$, $y_0$ and $x_0$ (respectively). First, combining (\ref{eq:Tanner_eq2}) and (\ref{eq:Tanner_eq3}) yields 
\begin{equation} 
\beta = q \alpha \; .
\label{eq:Tanner_eq_beta}
\end{equation}
Substituting (\ref{eq:Tanner_eq_beta}) into (\ref{eq:Tanner_eq1}) yields $\sff(z_0) = \alpha$ which may be written as 
\begin{equation} 
z_0 = \sff^{-1} (\alpha) \; .
\label{eq:Tanner_eq_z0}
\end{equation}
Using (\ref{eq:Tanner_eq_beta}) and (\ref{eq:Tanner_eq_z0}) in (\ref{eq:Tanner_eq4}) yields 
\begin{equation} 
y_0 = \frac{\alpha}{(1-\alpha) \sff^{-1}(\alpha)} \; .
\label{eq:Tanner_eq_y0}
\end{equation}
Finally, substituting (\ref{eq:Tanner_eq_beta}) and (\ref{eq:Tanner_eq_y0}) into (\ref{eq:Tanner_eq3}) yields
\begin{equation} 
x_0 = \left( \frac{\alpha}{1-\alpha} \right)^{1-q} \left( \sff^{-1} (\alpha) \right)^q \; .
\label{eq:Tanner_eq_x0}
\end{equation}
Substituting (\ref{eq:Tanner_eq_beta}), (\ref{eq:Tanner_eq_z0}), (\ref{eq:Tanner_eq_y0}) and (\ref{eq:Tanner_eq_x0}) into (\ref{eq:growth_rate_polynomial_Tanner}), and simplifying, leads to \eqref{eq:G(alpha)_irregular_CN_set}. 
\end{proof}

\medskip
The expression \eqref{eq:G(alpha)_irregular_CN_set} holds regardless of whether the ensemble has
good or bad growth rate behavior. Note that, according to \eqref{eq:G(alpha)_irregular_CN_set},
the growth rate $G(\alpha)$ is well-defined only for $\alpha \in [0,\mathsf{M}]$. This is as
expected due to
the following reasoning. A codeword of weight $\alpha n$ naturally induces a distribution of bits on
the Tanner graph edges, $\alpha n q$ of which are equal to $1$. Also note that the maximum number of
ones in this distribution occurs when a maximum weight local codeword is activated for each of the
$\gamma_t m$ CNs of type $t \in I_c$, and is thus given by $m \sum_{t \in I_c} \gamma_t \bar{u}_t$.
Hence, we have $\alpha n q \leq m \sum_{t \in I_c} \gamma_t \bar{u}_t$, i.e., $\alpha \leq
\mathsf{M}$.

By considering Theorem \ref{thm:main_result} in the special case of Tanner codes, we obtain the following corollary.

\medskip
\begin{corollary}\label{theorem:G(alpha)_tanner_codes}
Consider a Tanner code ensemble where all variable component codes are length-$q$ repetition codes and where all check component codes are length-$s$ codes with weight enumerating function $A(z)=1+\sum_{u=r}^s A_u z^u$. The weight spectral shape of this ensemble is given by
\begin{equation}\label{eq:G(alpha)_tanner_codes}
G(\alpha) = (1-q) h(\alpha) - q\,\alpha\,\log(\sff^{-1}(\alpha)) + \frac{q}{s}\log A(\sff^{-1}(\alpha)) 
\end{equation}
\noindent where the function $\sff$ is given by (special case of (\ref{eq:f}))
\begin{align}\label{eq:f_regular_case}
\sff(z) = \frac{z\, A'(z)}{s\, A(z)} \, ,
\end{align}
and $\sff^{-1}:[0,\mathsf{M})\rightarrow\mathbb{R}^+$ is well-defined, where
$\mathsf{M}=\frac{\bu}{s}$ and $\bar{u}$ denotes the largest $u \in \{r,r+1,\dots,s\}$ such that
$A_u > 0$.
\end{corollary}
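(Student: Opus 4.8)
The plan is to obtain Corollary~\ref{theorem:G(alpha)_tanner_codes} as a direct specialization of Theorem~\ref{thm:main_result}, since a Tanner code is precisely the case of the check-hybrid GLDPC ensemble in which the CN set is not merely hybrid but homogeneous: every CN is the same length-$s$ code with weight enumerating function $A(z)$. First I would observe that under this homogeneity there is effectively a single CN type, so the type index $t$ disappears: we may set $n_c = 1$, write $A^{(t)}(z) = A(z)$, and note that $\gamma_t = 1$ for the single type while $\left(\smallint\!\rho\right) = 1/s$ (since $\int \rho = 1/s$ when all CNs have length $s$, analogous to the relation $\int \lambda = 1/q$ used in the proof of Theorem~\ref{thm:main_result}). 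Substituting these into \eqref{eq:G(alpha)_irregular_CN_set} should collapse the sum $q\left(\smallint\!\rho\right)\sum_{t}\gamma_t \log A^{(t)}(\sff^{-1}(\alpha))$ directly to $\frac{q}{s}\log A(\sff^{-1}(\alpha))$, yielding exactly \eqref{eq:G(alpha)_tanner_codes}.

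The second task is to verify that the function $\sff$ of \eqref{eq:f} specializes to \eqref{eq:f_regular_case}. Here I would again insert $\left(\smallint\!\rho\right) = 1/s$, $\gamma_t = 1$, and $A^{(t)} = A$ into the definition
\[
\sff(z) = \left(\smallint\!\rho\right)\sum_{t\in I_c}\gamma_t\,\frac{z\,\frac{\mathrm{d}A^{(t)}(z)}{\mathrm{d}z}}{A^{(t)}(z)},
\]
which reduces immediately to $\sff(z) = \frac{z\,A'(z)}{s\,A(z)}$, matching \eqref{eq:f_regular_case}. The invertibility of $\sff$ on $[0,\mathsf{M})$ is inherited from Lemma~\ref{lemma:f_properties} applied to this single-type $\sff$, so nothing new must be proved there; I would simply cite that lemma.

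The remaining point is to identify the upper limit $\mathsf{M}$. From \eqref{eq:M_definition}, $\mathsf{M} = \left(\smallint\!\rho\right)\sum_{t\in I_c}\gamma_t\bar{u}_t$, and with the single CN type this becomes $\mathsf{M} = \frac{1}{s}\bar{u} = \frac{\bar{u}}{s}$, where $\bar{u}$ is the maximal codeword weight in the length-$s$ check code, i.e.\ the largest $u$ with $A_u > 0$, exactly as stated in the corollary. This is essentially a one-line substitution.

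I do not anticipate a serious obstacle; the entire argument is a specialization by substituting the homogeneous-CN parameters into already-established formulas. The only point requiring mild care is confirming the identity $\int\rho = 1/s$ in the all-length-$s$ case (mirroring the $\int\lambda = 1/q$ step in the proof of Theorem~\ref{thm:main_result}): since $\rho(x) = x^{s-1}$ here, one has $\int_0^1 x^{s-1}\,\mathrm{d}x = 1/s$, and correspondingly $\gamma = \rho/(s\int\rho) = (1)/(s\cdot\tfrac1s) = 1$ from \eqref{eq:gamma_t_delta_t_definition}, which makes the single-type reduction of every sum immediate. With these substitutions in hand, the corollary follows directly from Theorem~\ref{thm:main_result}.
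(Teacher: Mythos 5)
Your proposal is correct and follows exactly the paper's own route: the paper obtains this corollary by specializing Theorem~\ref{thm:main_result} to a single CN type, with the same substitutions $\int\rho = 1/s$, $\gamma_t = 1$, $A^{(t)} = A$, and $\mathsf{M} = \bar{u}/s$, and with invertibility of $\sff$ inherited from Lemma~\ref{lemma:f_properties}. Nothing further is needed.
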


\medskip
Note that, in the special case where all CNs are SPC codes, \eqref{eq:G(alpha)_tanner_codes} becomes equal to the spectral shape expression for regular LDPC codes developed in \cite[Theorem~2]{orlitsky05:stopping} for the case of stopping sets. Also note that, in some cases, \eqref{eq:G(alpha)_tanner_codes} can be expressed analytically as $\sff^{-1}(\alpha)$ admits an analytical form. As shown in Appendix~\ref{appendix:closed_form} this is the case, for instance, of $(3,6)$ and $(4,8)$ regular LDPC code ensembles. This shows that some of Gallager's $B_{j,k}(\lambda)$ functions \cite{gallager63:low-density} can be expressed in closed form.

\subsection{Symmetry of the Weight Spectral Shape}\label{section:symmetry}
As in the previous subsection, consider a GLDPC code ensemble with a regular VN set and a hybrid CN
set. Next, we show how a symmetry in the overall weight spectral shape of the ensemble is
induced by local symmetry properties in the WEFs of the CNs.

%\begin{definition}
%For CN type $t \in I_c$, let $U^{(t)} = \{u \in \mathbb{N} | A^{(t)}_u > 0\}$. Then, we define
%
%\begin{align}\label{eq:U*_definition}
%U^{(t*)} = \{v\in\mathbb{N} | \bar{u}_t-v \in U^{(t)} \}\, .
%\end{align}
%\end{definition}
%Note that for any CN type, we always have $0 \in U^{(t*)}$ and $\bar{u}_t \in U^{(t*)}$.

\begin{definition}\label{def:A_symmetry}
The WEF of CN type $t \in I_c$ is said to be \emph{symmetric} if and only if
$A^{(t)}_{\bar{u}_t-u}=A^{(t)}_u$ for all $u \in \{ 0, 1, \ldots, \bar{u}_t \}$.
\end{definition}
%
%Note that if the WEF of CN type $t \in I_c$ is symmetric, then $U^{(t*)}=U^{(t)}$.
%
\medskip
\begin{lemma}\label{lemma:all_1_codeword}
The WEF of CN type $t \in I_c$ is symmetric if and only if the all-$1$ codeword belongs to the code.
\end{lemma}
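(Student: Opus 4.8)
The plan is to prove the biconditional in Lemma~\ref{lemma:all_1_codeword} by unpacking Definition~\ref{def:A_symmetry} and relating the symmetry condition on the coefficients $A^{(t)}_u$ to the presence of the all-$1$ codeword. First I would recall that $\bar{u}_t$ is defined as the largest $u$ with $A^{(t)}_u > 0$, so by Definition~\ref{def:A_symmetry} symmetry forces $A^{(t)}_{\bar{u}_t - u} = A^{(t)}_u$ for all $u \in \{0,1,\ldots,\bar{u}_t\}$. In particular, setting $u = 0$ yields $A^{(t)}_{\bar{u}_t} = A^{(t)}_0 = 1$ (since the local code contains the all-zero codeword, $A^{(t)}_0 = 1$). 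Thus the symmetry condition immediately implies that there is exactly one codeword of weight $\bar{u}_t$.

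For the forward direction (symmetry $\Rightarrow$ all-$1$ codeword belongs to the code), the key step is to argue that the unique maximum-weight codeword must in fact have weight $s_t$, i.e.\ that $\bar{u}_t = s_t$, and that this codeword is the all-$1$ word. I would establish this using the standing assumption in Section~\ref{section:irregular_D_GLDPC} that the dual code of each local code has minimum distance greater than one, equivalently that the generator matrix has no all-zero column. This means every coordinate is covered, so one can argue via the structure forced by symmetry that the support of the maximum-weight codeword is all of $\{1,\ldots,s_t\}$. Concretely, I expect to use a coordinate-complementation argument: the symmetry of the weight enumerator coefficients reflects an involution on the codeword set pairing each codeword $\bldc$ of weight $u$ with one of weight $\bar{u}_t - u$; the natural candidate for this involution is $\bldc \mapsto \bldv \oplus \bldc$ where $\bldv$ is a fixed word of weight $\bar{u}_t$, and for this to be weight-preserving in the required sense $\bldv$ must be the all-$1$ codeword of length $s_t$. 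Since a linear code is closed under addition, $\bldv \oplus \bldc \in \Code_t$ for all $\bldc$ forces $\bldv \in \Code_t$, giving the all-$1$ codeword membership.

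For the reverse direction (all-$1$ codeword belongs to the code $\Rightarrow$ symmetry), I would argue directly and constructively. If the all-$1$ codeword $\bldone$ of length $s_t$ lies in $\Code_t$, then the map $\bldc \mapsto \bldone \oplus \bldc$ is a bijection from $\Code_t$ to itself (by linearity and since it is an involution), and it sends a codeword of weight $u$ to a codeword of weight $s_t - u$. This bijection immediately gives $A^{(t)}_{s_t - u} = A^{(t)}_u$ for all $u$. It then remains to check that $\bar{u}_t = s_t$ in this case (so that the symmetry condition matches Definition~\ref{def:A_symmetry}, which is stated in terms of $\bar{u}_t$ rather than $s_t$): since $\bldone$ has weight $s_t$ and lies in the code, we have $A^{(t)}_{s_t} > 0$, hence $\bar{u}_t = s_t$, and the coefficient symmetry $A^{(t)}_{s_t - u} = A^{(t)}_u$ becomes exactly $A^{(t)}_{\bar{u}_t - u} = A^{(t)}_u$.

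The main obstacle I anticipate is the forward direction, specifically showing that symmetry of the coefficients forces $\bar{u}_t = s_t$ and that the unique top-weight word is genuinely the all-$1$ word rather than merely \emph{some} word of full weight. The coefficient symmetry alone encodes a combinatorial pairing of weights but does not, a priori, tell us that the involution realizing it is addition of the all-$1$ codeword; pinning this down is where the linear-code structure and the no-zero-column (dual distance $> 1$) hypothesis must do real work. I would handle this by noting that $A^{(t)}_{\bar{u}_t} = 1$ (from the $u=0$ case above) means the maximum-weight codeword is unique, and then showing its support cannot omit any coordinate without violating either the coefficient symmetry at low weights or the no-zero-column assumption; uniqueness combined with full support then identifies it as $\bldone$ and forces $\bar{u}_t = s_t$.
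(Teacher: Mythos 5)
Your reverse direction (all-$1$ codeword $\Rightarrow$ symmetric WEF) is correct and complete, and it coincides with the direction the paper treats as classical: adding the all-$1$ word is a weight-complementing involution on $\Code_t$, membership of $\bldone$ forces $\bar{u}_t = s_t$, and the resulting identity $A^{(t)}_{s_t-u}=A^{(t)}_u$ is exactly the symmetry of Definition~\ref{def:A_symmetry}.

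The forward direction, however, has a genuine gap, and it sits precisely at the step that constitutes the entire content of the paper's proof. From $u=0$ you correctly obtain $A^{(t)}_{\bar{u}_t}=A^{(t)}_0=1$, so the maximum-weight codeword $v$ is unique. But your concluding step --- ``showing its support cannot omit any coordinate without violating either the coefficient symmetry at low weights or the no-zero-column assumption'' --- is a restatement of what must be proved, not an argument. Omitting a coordinate does not by itself violate the no-zero-column assumption (some \emph{other} codeword may cover that coordinate), and you offer no mechanism by which low-weight coefficients would be disturbed; the known proofs are in fact global in the weight spectrum, not local. Your involution $c \mapsto v \oplus c$ is, as you yourself concede, not known to realize the pairing $u \leftrightarrow \bar{u}_t-u$: writing $\epsilon(c)$ for the number of ones of $c$ outside the support of $v$, one has $w_{\mathrm{H}}(c \oplus v)=\bar{u}_t-w_{\mathrm{H}}(c)+2\epsilon(c)$, and the lemma amounts to proving $\epsilon(c)=0$ for every codeword $c$. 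The paper achieves this with a global counting (``balls and bins'') argument: it splits the generator matrix as $[\bldG^{(t)}_1, \bldG^{(t)}_2]$ according to the support of $v$, shows the weight distribution of the projections onto that support is itself symmetric, notes that restoring the remaining coordinates can only increase weights (with at least one strict increase, by the no-zero-column assumption), and derives a contradiction by comparing suffix sums of the two symmetric distributions at the first weight where they differ. Your sketch contains neither this argument nor any substitute --- for instance, one could instead compare $\sum_{c} z^{w_{\mathrm{H}}(c \oplus v)}$, which equals $A^{(t)}(z)$ because $c \mapsto c\oplus v$ is a bijection of the code, with $z^{\bar{u}_t}A^{(t)}(1/z)$, which equals $A^{(t)}(z)$ by coefficient symmetry, and evaluate at a fixed $z>1$ to force every $\epsilon(c)=0$ --- so, as written, the implication ``symmetric $\Rightarrow \bldone \in \Code_t$'' is unproven.
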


\medskip
Lemma~\ref{lemma:all_1_codeword} proves that the WEF of a linear block code is symmetric if and only if
$\bar{u}_t=s_t$.

\medskip
\begin{lemma}\label{lemma:A_symmetry}
The WEF of CN type $t \in I_c$ fulfills
\begin{equation}
A^{(t)}(z) = z^{\bar{u}_t} A^{(t)}\left(z^{-1}\right)
\label{eq:symmetry_of_CN_type} 
\end{equation} 
for all $z \in \mathbb{R}^+$ if and only if it is symmetric (equivalently, if and only if
$\bar{u}_t=s_t$).
\end{lemma}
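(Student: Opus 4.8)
The plan is to prove Lemma~\ref{lemma:A_symmetry} by showing the two implications separately, and to exploit the preceding two lemmas to reduce the WEF symmetry condition to the combinatorial condition $\bar{u}_t = s_t$.

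First I would establish the forward direction, assuming the WEF is symmetric in the sense of Definition~\ref{def:A_symmetry}. By definition, $A^{(t)}(z) = \sum_{u=0}^{\bar{u}_t} A^{(t)}_u z^u$, where the sum runs up to $\bar{u}_t$ since by definition $A^{(t)}_u = 0$ for $u > \bar{u}_t$. The key computation is to expand $z^{\bar{u}_t} A^{(t)}(z^{-1})$ and reindex the summation: writing $z^{\bar{u}_t} A^{(t)}(z^{-1}) = \sum_{u=0}^{\bar{u}_t} A^{(t)}_u z^{\bar{u}_t - u}$ and substituting $u' = \bar{u}_t - u$, this becomes $\sum_{u'=0}^{\bar{u}_t} A^{(t)}_{\bar{u}_t - u'} z^{u'}$. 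The symmetry hypothesis $A^{(t)}_{\bar{u}_t - u'} = A^{(t)}_{u'}$ then makes this identical to $A^{(t)}(z)$, giving \eqref{eq:symmetry_of_CN_type}.

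For the converse, I would assume \eqref{eq:symmetry_of_CN_type} holds for all $z \in \mathbb{R}^+$. Since both sides are polynomials in $z$ (after clearing the negative powers via multiplication by $z^{\bar{u}_t}$) that agree on the infinite set $\mathbb{R}^+$, they must be identical as polynomials, so their coefficients match. Comparing the coefficient of $z^u$ on both sides — where the right-hand side, by the reindexing above, has coefficient $A^{(t)}_{\bar{u}_t - u}$ — yields $A^{(t)}_u = A^{(t)}_{\bar{u}_t - u}$ for every $u$, which is precisely the symmetry of Definition~\ref{def:A_symmetry}. Finally, the parenthetical equivalence ``if and only if $\bar{u}_t = s_t$'' follows immediately by invoking Lemma~\ref{lemma:all_1_codeword} together with the remark stated just after it (that symmetry of the WEF is equivalent to $\bar{u}_t = s_t$), so no separate argument is needed here.

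I do not anticipate a genuine obstacle: the proof is essentially a coefficient-matching argument made rigorous by the fact that a polynomial identity holding on an infinite set forces equality of coefficients. The only point requiring minor care is the bookkeeping in the reindexing step and ensuring the upper summation limit is correctly taken as $\bar{u}_t$ rather than $s_t$ (legitimate because $A^{(t)}_u = 0$ for $u > \bar{u}_t$ by the definition of $\bar{u}_t$); this is what makes the factor $z^{\bar{u}_t}$ — rather than $z^{s_t}$ — the correct one in \eqref{eq:symmetry_of_CN_type}.
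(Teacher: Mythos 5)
Your proof is correct and follows essentially the same route as the paper's: the forward direction by expanding $z^{\bar{u}_t} A^{(t)}(z^{-1})$ and reindexing under the symmetry hypothesis, and the converse by observing that a polynomial identity valid on all of $\mathbb{R}^+$ forces equality of coefficients, with the parenthetical equivalence $\bar{u}_t = s_t$ supplied by Lemma~\ref{lemma:all_1_codeword}. The only cosmetic difference is that the paper invokes $\bar{u}_t = s_t$ up front in the forward direction and writes the exponent as $s_t$, whereas you work directly with $\bar{u}_t$ throughout; both are valid.
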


% \medskip
% \begin{lemma}\label{lemma:f_symmetry}
% The function $\sff$ defined by (\ref{eq:f}) fulfills
% \begin{align}\label{eq:f_symmetry}
% \sff(z) = \mathsf{M} - \sff\left(z^{-1}\right)
% \end{align}
% $\forall$ $z \in \mathbb{R}^+$ if and only if $A^{(t)}(z)$ is symmetric for every $t \in I_c$.
% \end{lemma}

%\medskip
%\begin{lemma}\label{lemma:finv_symmetry}
%The inverse function $\sff^{-1}$ fulfills
%\begin{align}\label{eq:finv_symmetry}
%\sff^{-1}(\mathsf{M}-\alpha)=\frac{1}{\sff^{-1}(\alpha)}
%\end{align}
%$\forall$ $\alpha \in (0,M)$ if and only if $A^{(t)}(z)$ is symmetric for every \mbox{$t \in I_c$}
%(and therefore if and only if $\mathsf{M}=1$).
%\end{lemma}
\medskip
\begin{lemma}\label{lemma:finv_symmetry}
If $A^{(t)}(z)$ is symmetric for every \mbox{$t \in I_c$}
(i.e., if $\mathsf{M}=1$), then the inverse function $\sff^{-1}$ fulfills
\begin{align}\label{eq:finv_symmetry}
\sff^{-1}(\mathsf{M}-\alpha)=\frac{1}{\sff^{-1}(\alpha)}
\end{align}
$\forall$ $\alpha \in (0,M)$.
\end{lemma}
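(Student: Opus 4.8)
The plan is to reduce everything to a single functional equation for $\sff$ under the involution $z \mapsto 1/z$, from which the claimed symmetry of $\sff^{-1}$ drops out by a one-line substitution. The natural starting point is Lemma~\ref{lemma:A_symmetry}: under the hypothesis that every $A^{(t)}$ is symmetric, we have $\bar{u}_t = s_t$ for all $t \in I_c$, and hence $A^{(t)}(z) = z^{s_t} A^{(t)}(1/z)$ for all $z \in \mathbb{R}^+$.

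First I would take logarithms of this identity and differentiate. Writing $g_t(z) := z\,\frac{\mathrm{d}A^{(t)}}{\mathrm{d}z}(z)/A^{(t)}(z)$ for the scaled logarithmic derivative of the $t$-th WEF, I would differentiate $\log A^{(t)}(z) = s_t \log z + \log A^{(t)}(1/z)$ with respect to $z$ and multiply through by $z$. The chain-rule factor $-1/z^2$ arising from $\log A^{(t)}(1/z)$ combines with this $z$ to reproduce $g_t$ evaluated at $1/z$, yielding the clean relation $g_t(z) + g_t(1/z) = s_t$ for every $t \in I_c$ and every $z > 0$.

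Next I would form the $\gamma_t$-weighted sum that defines $\sff$ in \eqref{eq:f}. Summing the previous relation gives $\sff(z) + \sff(1/z) = (\smallint\!\rho)\sum_{t \in I_c} \gamma_t s_t$. The constant on the right is then evaluated using $\gamma_t = \rho_t/(s_t \smallint\!\rho)$ from \eqref{eq:gamma_t_delta_t_definition}, so that $\gamma_t s_t = \rho_t/\smallint\!\rho$ and the sum collapses to $\sum_{t\in I_c}\rho_t = 1 = \mathsf{M}$, the last equality being precisely the hypothesis $\mathsf{M}=1$. Thus I obtain the functional equation $\sff(1/z) = \mathsf{M} - \sff(z)$ for all $z > 0$.

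Finally, the conclusion follows by substitution. By Lemma~\ref{lemma:f_properties} the inverse $\sff^{-1}:[0,\mathsf{M}) \to \mathbb{R}^+$ is well-defined; fixing $\alpha \in (0,\mathsf{M})$ and setting $z = \sff^{-1}(\alpha)$ gives $\sff(z)=\alpha$, whence the functional equation reads $\sff(1/\sff^{-1}(\alpha)) = \mathsf{M} - \alpha$. Applying $\sff^{-1}$ to both sides produces $1/\sff^{-1}(\alpha) = \sff^{-1}(\mathsf{M}-\alpha)$, which is exactly \eqref{eq:finv_symmetry}. I do not anticipate a serious obstacle: the only points requiring care are confirming that $\bar{u}_t = s_t$ is what forces the exponent in $A^{(t)}(z)=z^{s_t}A^{(t)}(1/z)$ to be $s_t$ (so that the constant evaluates to $1$ rather than to $\mathsf{M}<1$), and checking that the argument $\mathsf{M}-\alpha$ remains in the domain $(0,\mathsf{M})$ of $\sff^{-1}$, which holds precisely for $\alpha \in (0,\mathsf{M})$.
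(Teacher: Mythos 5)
Your proof is correct and follows essentially the same route as the paper: both start from the identity $A^{(t)}(z) = z^{\bar{u}_t}A^{(t)}(z^{-1})$ of Lemma~\ref{lemma:A_symmetry}, differentiate it to obtain the functional equation $\sff(z) + \sff(z^{-1}) = \mathsf{M}$, and then apply $\sff^{-1}$ to conclude. Your use of logarithmic differentiation (giving the per-term relation $g_t(z) + g_t(1/z) = s_t$ before forming the $\gamma_t$-weighted sum) is a slightly cleaner way to organize the same computation that the paper carries out by differentiating the identity directly and substituting.
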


\medskip
\begin{theorem}\label{theorem:G_symmetry_sufficient} Consider a GLDPC code ensemble with a regular
VN set, composed of repetition codes all of length $q$, and a hybrid CN set, composed of a mixture
of $n_c$ different linear block code types. If $A^{(t)}(z)$ is symmetric for each $t \in I_c$
(equivalently, if $\mathsf{M}=1$), then the spectral shape of the ensemble fulfills
\begin{align}\label{eq:G_symmetry}
G(\mathsf{M}-\alpha) = G(\alpha)
\end{align}
for all $\alpha \in (0,\mathsf{M})$.
\end{theorem}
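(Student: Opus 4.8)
The plan is to substitute $\mathsf{M}-\alpha$ directly into the closed-form expression \eqref{eq:G(alpha)_irregular_CN_set} from Theorem~\ref{thm:main_result} and to show, term by term, that the result collapses back to $G(\alpha)$. The computation is driven entirely by the three structural facts already in hand: the reflection property $\sff^{-1}(\mathsf{M}-\alpha)=1/\sff^{-1}(\alpha)$ of Lemma~\ref{lemma:finv_symmetry}, the functional equation $A^{(t)}(z)=z^{\bar{u}_t}A^{(t)}(z^{-1})$ of Lemma~\ref{lemma:A_symmetry}, and the fact that the hypothesis $\mathsf{M}=1$ makes the defining relation \eqref{eq:M_definition} read $\left(\smallint\!\rho\right)\sum_{t\in I_c}\gamma_t \bar{u}_t = 1$.

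First I would abbreviate $w \defeq \sff^{-1}(\alpha)$, so that Lemma~\ref{lemma:finv_symmetry} gives $\sff^{-1}(\mathsf{M}-\alpha)=1/w$, and I would use $\mathsf{M}-\alpha = 1-\alpha$ throughout (since $\mathsf{M}=1$). The first term $(1-q)h(\mathsf{M}-\alpha)$ reproduces $(1-q)h(\alpha)$ immediately from the symmetry $h(1-\alpha)=h(\alpha)$ of the binary entropy function. The remaining two terms require the CN functional equation: evaluating Lemma~\ref{lemma:A_symmetry} at $z=w$ yields $A^{(t)}(1/w)=w^{-\bar{u}_t}A^{(t)}(w)$, and hence $\log A^{(t)}(1/w)=\log A^{(t)}(w)-\bar{u}_t \log w$.

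The crucial step is then to collect the $\log w$ contributions. The second term of \eqref{eq:G(alpha)_irregular_CN_set} evaluated at $\mathsf{M}-\alpha$ becomes $-q(1-\alpha)\log(1/w)=q(1-\alpha)\log w$, while the $-\bar{u}_t \log w$ pieces extracted from the third term sum to $-q(\log w)\left(\smallint\!\rho\right)\sum_{t\in I_c}\gamma_t \bar{u}_t = -q\log w$, using $\mathsf{M}=1$. Adding these gives $q(1-\alpha)\log w - q\log w = -q\alpha\log w$, which is precisely the second term of $G(\alpha)$; the surviving $\log A^{(t)}(w)$ pieces reconstitute its third term. This matches \eqref{eq:G(alpha)_irregular_CN_set} exactly and establishes \eqref{eq:G_symmetry}.

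The only delicate point, and the step I would verify most carefully, is exactly this cancellation of the $\log w$ terms: it is precisely the hypothesis $\mathsf{M}=1$ (equivalently $\bar{u}_t=s_t$ for all $t$, by Lemma~\ref{lemma:all_1_codeword}) that forces the coefficient of $\log w$ in the CN sum to be exactly what is needed to convert $q(1-\alpha)\log w$ into $-q\alpha\log w$. Everything else is a routine substitution, so there is no serious analytic obstacle; the content lies entirely in the way the microscopic symmetry of each $A^{(t)}$ feeds, via Lemmas~\ref{lemma:A_symmetry} and~\ref{lemma:finv_symmetry}, into this single algebraic identity.
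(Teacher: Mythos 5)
Your proposal is correct and follows essentially the same route as the paper's own proof: substitute $\mathsf{M}-\alpha$ into \eqref{eq:G(alpha)_irregular_CN_set}, apply Lemma~\ref{lemma:finv_symmetry} to get $\sff^{-1}(\mathsf{M}-\alpha)=1/\sff^{-1}(\alpha)$, apply Lemma~\ref{lemma:A_symmetry} to each $\log A^{(t)}$, and use $\left(\smallint\!\rho\right)\sum_{t\in I_c}\gamma_t\bar{u}_t=\mathsf{M}=1$ so the $\log\sff^{-1}(\alpha)$ contributions collapse to $-q\alpha\log\sff^{-1}(\alpha)$, with the entropy term handled by $h(1-\alpha)=h(\alpha)$. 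The only difference is cosmetic ordering (you treat the entropy term first, the paper last), so there is nothing to add.
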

\begin{proof}
Assume that $A^{(t)}(z)$ is symmetric for each $t \in I_c$ (i.e., $\mathsf{M}=1$). From
\eqref{eq:G(alpha)_irregular_CN_set} we have:
\begin{align*}
\, & G(\mathsf{M}-\alpha) = (1-q) h(\mathsf{M}-\alpha) - q(\mathsf{M}-\alpha) \log
\sff^{-1}(\mathsf{M}-\alpha) \\ 
\, & \phantom{--------} + q \left(\smallint\! \rho\right) \sum_{t \in I_c} \gamma_t \log A^{(t)}
\left(\sff^{-1}(\mathsf{M}-\alpha)\right) \\
\, & \stackrel{\textrm{(a)}}{=} (1-q) h(\mathsf{M}-\alpha) - q(\mathsf{M}-\alpha) \log
\frac{1}{\sff^{-1}(\alpha)} \\ 
\, & \phantom{--------} + q \left(\smallint\! \rho\right) \sum_{t \in I_c} \gamma_t \log A^{(t)}
\left(\frac{1}{\sff^{-1}(\alpha)}\right) \\
\, & \stackrel{\textrm{(b)}}{=} (1-q) h(\mathsf{M}-\alpha) - q\, \alpha\, \log(\sff^{-1}(\alpha))
\\ 
\, & \phantom{----------} + q \left(\smallint\! \rho\right) \sum_{t \in I_c} \gamma_t \log A^{(t)}
(\sff^{-1}(\alpha)) \\
\, & \stackrel{\textrm{(c)}}{=} G(\alpha)
\end{align*}
where (a) follows from Lemma~\ref{lemma:finv_symmetry}, (b) from
Lemma~\ref{lemma:A_symmetry} and \eqref{eq:M_definition}, and (c) from
$\mathsf{M}=1$.
\end{proof}

\medskip
We remark that the converse of this result, i.e., that if $G(\alpha) = G(\mathsf{M}-\alpha)$ for all $\alpha \in (0,\mathsf{M})$, then $A^{(t)}(z)$ is symmetric for every \mbox{$t \in I_c$} (and therefore $\mathsf{M}=1$), appears to hold for almost all code ensembles; however this converse appears to be difficult to prove in the general case.

\section{Proof of Theorem~\ref{thm:growth_rate}}\label{section:proof_of_main_result}
In this section we prove Theorem \ref{thm:growth_rate}. The proof uses the concepts of
\emph{assignment} and \emph{split assignment}, defined next. These concepts were introduced in
\cite{di06:weight} and \cite{flanagan09:IEEE-IT}, respectively. 
\medskip
\begin{definition}
An \emph{assignment} is a subset of the edges of the Tanner graph. An assignment is said to have
\emph{weight} $k$ if it has $k$ elements. An assignment is said to be \emph{check-valid} if the
following condition holds: supposing that each edge of the assignment carries a $1$ and each of the
other edges carries a $0$, each CN recognizes a valid local codeword.
\end{definition}
\medskip
\begin{definition}
A \emph{split assignment} is an assignment, together with a subset of the D-GLDPC code bits (called
a \emph{codeword assignment}). A split assignment is said to have \emph{split weight} $(u, v)$ if
its assignment has weight $v$ and its codeword assignment has $u$ elements. A split assignment is
said to be \emph{check-valid} if its assignment is check-valid. A split assignment is said to be
\emph{variable-valid} if the following condition holds: supposing that each edge of its assignment
carries a $1$ and each of the other edges carries a $0$, and supposing that each D-GLDPC code bit in
the codeword assigment is set to $1$ and each of the other code bits is set to $0$, each VN
recognizes a local input word and the corresponding valid local codeword.     
\end{definition}
\medskip
For ease of presentation, the proof is broken into two parts.

\subsection{Number of Check-Valid Assignments of Weight $\delta
m$}\label{subsec:no_of_check_valid_assignments}

First we derive an expression, valid asymptotically, for the number of check-valid assignments of
weight $\delta m$. For each $t \in I_c$, let $\epsilon_t m$ denote the portion of the total weight
$\delta m$ apportioned to CNs of type $t$. Then $\epsilon_t \ge 0$ for each $t \in I_c$, and
$\sum_{t \in I_c} \epsilon_t = \delta$. Also denote $\bldepsilon = (\epsilon_1 \; \epsilon_2 \;
\cdots \; \epsilon_{n_c})$. 

Consider the set of $\gamma_t m$ CNs of a particular type $t \in I_c$, where $\gamma_t$ is given by
(\ref{eq:gamma_t_delta_t_definition}). Using generating functions, the number of check-valid
assignments (over these CNs) of weight $\epsilon_t m$ is given by
\[
N_{c,t}^{(\gamma_t m)}(\epsilon_t m) = \coeff \left[ \left( A^{(t)}(x) \right) ^{\gamma_t m},
x^{\epsilon_t m} \right]
\]
where $\coeff [ p(x), x^c ]$ denotes the coefficient of $x^c$ in the polynomial $p(x)$. We next
apply Lemma \ref{lemma:optimization_1D}, substituting $A(x) = A^{(t)}(x)$, $\ell=\gamma_t m$ and
$\xi = \epsilon_t/\gamma_t$; we obtain that as $m \rightarrow \infty$
\begin{eqnarray}
N_{c,t}^{(\gamma_t m)}(\epsilon_t m) = \coeff \left[ \left( A^{(t)}(x) \right) ^{\gamma_t m},
x^{\epsilon_t m} \right] 
\label{eq:Nct_epsilon_start} \\
\asympequalm \exp \left\{ m \left( \gamma_t \log A^{(t)}(z_{0,t}) - \epsilon_t \log z_{0,t} \right)
\right\} 
\label{eq:Nct_epsilon_mid}
\end{eqnarray}
where, for each $t \in I_c$, $z_{0,t}$ is the unique positive real solution to
\begin{equation}
\label{eq:z0t_soln_to_At_eqn}
\gamma_t \frac{ \frac{\mathrm{d} A^{(t)}}{\mathrm{d} z} (z_{0,t})}{A^{(t)}(z_{0,t})} \cdot z_{0,t} =
\epsilon_t \; .
\end{equation}

The number of check-valid assignments of weight $\delta m$ satisfying the constraint $\bldepsilon$
is obtained by multiplying the numbers of check-valid assignments of weight $\epsilon_t m$ over
$\gamma_t m$ CNs of type $t$, for each $t \in I_c$,
\begin{equation}
N_c^{(\bldepsilon)}(\delta m) = \prod_{t \in I_c} N_{c,t}^{(\gamma_t m)}(\epsilon_t m) \; .
\label{eq:Nc_epsilon}
\end{equation}
The number $N_c(\delta m)$ of check-valid assignments of weight $\delta m$ is then equal to the sum
of $N_c^{(\bldepsilon)}(\delta m)$ over all admissible vectors $\bldepsilon$; therefore
by~(\ref{eq:Nct_epsilon_mid}), as $m\rightarrow \infty$
\begin{equation}
N_c(\delta m) \, \asympequalm \sum_{\bldsmallepsilon \; : \; \sum_{t \in I_c} \epsilon_t = \delta}
\exp \left\{ m W(\bldepsilon) \right\}
\label{eq:sum_of_exp_check}
\end{equation}
where
\begin{equation}
W(\bldepsilon) = \sum_{t \in I_c} \left( \gamma_t \log A^{(t)}(z_{0,t}) - \epsilon_t \log z_{0,t}
\right) \; .
\label{eq:W_definition}
\end{equation}
As $m \rightarrow \infty$, the asymptotic expression is dominated by the distribution $\bldepsilon$
which maximizes the argument of the exponential function\footnote{Observe that as $m\rightarrow
\infty$, $\sum_t \exp ( m Z_t ) \asympequalm \exp ( m \max_t \{Z_t\} )$}. Therefore as $m\rightarrow
\infty$
\begin{equation}
N_c(\delta m) \asympequalm \exp \left\{ m X \right\}
\label{eq:N_c_as_function_of_X}
\end{equation}
where 
\begin{equation}
X = \max_{\bldsmallepsilon} W(\bldepsilon)
\end{equation} 
and the maximization is subject to the constraint
\begin{equation}
V(\bldepsilon) = \sum_{t \in I_c} \epsilon_t = \delta 
\label{eq:sum_epsilont_constraint}
\end{equation}
together with $\epsilon_t \ge 0$ for each $t \in I_c$, and for every $t \in I_c$, $z_{0,t}$ is the
unique positive real solution to~(\ref{eq:z0t_soln_to_At_eqn}). Note that for each $t \in I_c$,
(\ref{eq:z0t_soln_to_At_eqn}) provides an implicit definition of $z_{0,t}$ as a function of
$\epsilon_t$. 

We solve this optimization problem using Lagrange multipliers; at the maximum, we must have
\begin{equation}
\frac{\partial W(\bldepsilon)}{\partial \epsilon_t} = \lambda \frac{\partial
V(\bldepsilon)}{\partial \epsilon_t}
\label{eq:max_condition_check}
\end{equation}
for all $t \in I_c$, where $\lambda$ is the Lagrange multiplier. This yields
\begin{equation}
\frac{\partial z_{0,t}}{\partial \epsilon_t} \left[ \gamma_t \frac{ \frac{\mathrm{d}
A^{(t)}}{\mathrm{d} z} (z_{0,t})}{A^{(t)}(z_{0,t})} - \frac{\epsilon_t}{z_{0,t}} \right] - \log
z_{0,t} = \lambda \; .
\end{equation}
The term in square brackets is equal to zero due to (\ref{eq:z0t_soln_to_At_eqn}); therefore this
simplifies to $\log z_{0,t} = -\lambda$ for all $t \in I_c$.
We conclude that all of the $\{ z_{0,t} \}$ are equal, and we may write 
\begin{equation}
z_{0,t} = z_0 \quad \forall t \in I_c \; . 
\label{eq:z0_all_equal}
\end{equation}
Making this substitution in~(\ref{eq:N_c_as_function_of_X}) and
using~(\ref{eq:sum_epsilont_constraint}) we obtain
\begin{equation}
N_c(\delta m) \asympequalm \exp \left\{ m \left( \sum_{t \in I_c} \gamma_t \log A^{(t)}(z_{0}) -
\delta \log z_{0} \right) \right\} \; .
\label{eq:N_c_asymptotic}
\end{equation}
Summing~(\ref{eq:z0t_soln_to_At_eqn}) over $t \in I_c$ and using (\ref{eq:sum_epsilont_constraint})
and (\ref{eq:z0_all_equal}) implies that the value of $z_0$ in~(\ref{eq:N_c_asymptotic}) is the
unique positive real solution to~(\ref{eq:z0_eqn}) (here we have defined $\beta$ through the relationship 
$\beta n = \delta m$, and we have also used the fact that $n \int \rho = m \int \lambda$).

\subsection{Polynomial-System Solution for the Growth Rate}\label{subsec:growth_rate}
Consider the set of $\delta_t n$ VNs of a particular type $t \in I_v$, where $\delta_t$ is given by
(\ref{eq:gamma_t_delta_t_definition}). Using generating functions, the number of variable-valid
split assignments (over these VNs) of split weight $(\alpha_t n, \beta_t n)$ is given by
\[
N_{v,t}^{(\delta_t n)}(\alpha_t n, \beta_t n) = \coeff  \left[ \left( B^{(t)}(x,y) \right)
^{\delta_t n}, x^{\alpha_t n} y^{\beta_t n} \right]
\]
where $\coeff [p(x,y), x^c y^d ]$ denotes the coefficient of $x^c y^d$ in the bivariate polynomial
$p(x,y)$. Next we apply Lemma \ref{lemma:optimization_2D}, substituting $B(x,y) = B^{(t)}(x,y)$,
$\ell = \delta_t n$, $\xi = \alpha_t/\delta_t$ and $\theta = \beta_t/\delta_t$; we obtain that as $n
\rightarrow \infty$
\begin{eqnarray}
N_{v,t}^{(\delta_t n)}(\alpha_t n, \beta_t n) = \coeff  \left[ \left( B^{(t)}(x,y) \right)
^{\delta_t n}, x^{\alpha_t n} y^{\beta_t n} \right] \nonumber \\
\asympequaln \exp \left\{ n X_t^{(\delta_t)}(\alpha_t, \beta_t) \right\}
\label{eq:Nvt_asymptotic}
\end{eqnarray}
where 
\begin{equation}
X_t^{(\delta_t)}(\alpha_t, \beta_t) = \delta_t \log B^{(t)}(x_{0,t},y_{0,t}) - \alpha_t \log x_{0,t}
- \beta_t \log y_{0,t} 
\end{equation}
and where $x_{0,t}$ and $y_{0,t}$ are the unique positive real solutions to the pair of simultaneous
equations
\begin{equation}
\label{eq:x0t_y0t_eqn_1}
\delta_t \frac{ \frac{\partial B^{(t)}}{\partial x} (x_{0,t},y_{0,t})}{B^{(t)}(x_{0,t},y_{0,t})}
\cdot x_{0,t} = \alpha_t
\end{equation}
and
\begin{equation}
\label{eq:x0t_y0t_eqn_2}
\delta_t \frac{ \frac{\partial B^{(t)}}{\partial y} (x_{0,t},y_{0,t})}{B^{(t)}(x_{0,t},y_{0,t})}
\cdot y_{0,t} = \beta_t \; . 
\end{equation}

Next, note that the expected number of D-GLDPC codewords of weight $\alpha n$ in the ensemble
$\cM_n$ is equal to the sum over $\beta$ of the expected numbers of split assignments of split
weight $(\alpha n, \beta n)$ which are both check-valid and variable-valid, denoted $N^{v,c}_{\alpha
n, \beta n}$:
\[
\mathbb{E}_{\cM_n} \left[ A_{\alpha n} \right] = \sum_{\beta} \mathbb{E}_{\cM_n} [ N^{v,c}_{\alpha
n, \beta n} ] \; .
\]
This may then be expressed as
\begin{multline}
\mathbb{E}_{\cM_n} \left[ A_{\alpha n} \right] = \sum_{\substack{\alpha_t \ge 0, t \in I_v \\ \sum_t
\alpha_t = \alpha}} \sum_{\beta_t \ge 0, t \in I_v} P_{\mbox{\scriptsize c-valid}}(\beta n) \\
\times \prod_{t \in I_v} N_{v,t}^{(\delta_t n)}(\alpha_t n, \beta_t n)
\end{multline}
where 
\begin{equation}
\beta \defeq \sum_{t \in I_v} \beta_t \; .
\label{eq:beta_sum_of_betat}
\end{equation}
Here $P_{\mbox{\scriptsize c-valid}}(\beta n)$ denotes the probability that a randomly chosen
assignment of weight $\beta n$ is check-valid, and is given by
\[
P_{\mbox{\scriptsize c-valid}}(\beta n) = N_c(\beta n) \Big/ \binom{E}{\beta n} \; .
\]
Applying \cite[eqn. (25)]{di06:weight}, we find that as $n \rightarrow \infty$
\[
\binom{E}{\beta n} = \binom{n/\int \lambda}{\beta n} \asympequaln \exp  \left\{
\frac{n}{\int\lambda} h \left( \beta \smallint \lambda \right) \right\} \; .
\]
Combining this result with~(\ref{eq:N_c_asymptotic}), we obtain that as $n \rightarrow \infty$
\[
P_{\mbox{\scriptsize c-valid}}(\beta n) \asympequaln \exp \left\{ n Y(\beta) \right\} 
\]
where 
\[
Y(\beta) = \left( \frac{\int \rho}{\int \lambda} \right) \sum_{t \in I_c} \gamma_t \log \left(
A^{(t)}(z_0) \right) - \beta \log z_0 - \frac{h(\beta \int \lambda)}{\int \lambda} \; .
\] 
Therefore, as $n \rightarrow \infty$
\begin{align}\label{eq:G_as_sum_of_exp}
\, & \mathbb{E}_{\cM_n} \left[ A_{\alpha n} \right] \asympequaln \nonumber \\
\, & \sum_{\substack{\alpha_t \ge 0, t \in I_v \\ \sum_t \alpha_t = \alpha}} 
\sum_{\beta_t, t \in I_v} \exp \left\{ n \left( \sum_{t \in I_v} X_t^{(\delta_t)}(\alpha_t, \beta_t)
+ Y(\beta) \right) \right\} \; .
\end{align}
Note that the sum in (\ref{eq:G_as_sum_of_exp}) is dominated asymptotically by the term which
maximizes the argument of the exponential function. Thus, denoting the two vectors of independent
variables by $\bldalpha = (\alpha_t)_{t \in I_v}$ and $\bldbeta = (\beta_t)_{t \in I_v}$, we have 
\begin{eqnarray}
G(\alpha) = \max_{\bldsmallalpha, \bldsmallbeta} S(\bldalpha, \bldbeta)
\label{eq:numerical_evaluation_1_weak}
\end{eqnarray}
where
\begin{equation}
S(\bldalpha, \bldbeta) = \sum_{t \in I_v} X_t^{(\delta_t)}(\alpha_t, \beta_t) + Y(\beta) 
\label{eq:S_definition}
\end{equation}
where $\beta$ is given by (\ref{eq:beta_sum_of_betat}), and the maximization is subject to the
constraint 
\begin{equation}
R(\bldalpha, \bldbeta) = \sum_{t \in I_v} \alpha_t = \alpha 
\label{eq:sum_alpha_constraint}
\end{equation} 
together with $\alpha_t \ge 0$ and appropriate inequality constraints on $\beta_t$ for each $t \in
I_v$.

Note that (\ref{eq:z0_eqn}) provides an implicit definition of $z_0$ as a function of $\bldbeta$.
Similarly, for any $t \in I_v$, (\ref{eq:x0t_y0t_eqn_1}) and (\ref{eq:x0t_y0t_eqn_2}) provide
implicit definitions of $x_{0,t}$ and $y_{0,t}$ as functions of the two variables $\alpha_t$ and
$\beta_t$. 

We solve the constrained optimization problem using Lagrange multipliers; at the maximum, we must
have
\[
\frac{\partial S(\bldalpha, \bldbeta)}{\partial \alpha_t} = \mu \frac{\partial R(\bldalpha,
\bldbeta)}{\partial \alpha_t}  
\]
for all $t \in I_v$, where $\mu$ is the Lagrange multiplier. This yields
\begin{eqnarray*}
\frac{\partial x_{0,t}}{\partial \alpha_t} \left[ \delta_t \frac{ \frac{\partial B^{(t)}}{\partial
x} (x_{0,t},y_{0,t})}{B^{(t)}(x_{0,t},y_{0,t})} - \frac{\alpha_t}{x_{0,t}} \right] - \log x_{0,t} \\
+ \frac{\partial y_{0,t}}{\partial \alpha_t} \left[ \delta_t \frac{ \frac{\partial B^{(t)}}{\partial
y} (x_{0,t},y_{0,t})}{B^{(t)}(x_{0,t},y_{0,t})} - \frac{\beta_t}{y_{0,t}} \right] = \mu \; .
\end{eqnarray*}
The terms in square brackets are zero due to (\ref{eq:x0t_y0t_eqn_1}) and (\ref{eq:x0t_y0t_eqn_2})
respectively; therefore this simplifies to $\log x_{0,t} = -\mu$ for all $t \in I_v$.
We conclude that all of the $\{ x_{0,t} \}$ are equal, and we may write 
\begin{equation}
x_{0,t} = x_0 \quad \forall t \in I_v \; .
\label{eq:x0_all_equal}
\end{equation} 
At the maximum, we must also have
\[
\frac{\partial S(\bldalpha, \bldbeta)}{\partial \beta_t} = \mu \frac{\partial R(\bldalpha,
\bldbeta)}{\partial \beta_t}  
\]
for all $t \in I_v$. This yields
\begin{multline}
\frac{\partial x_{0,t}}{\partial \alpha_t} \left[ \delta_t \frac{ \frac{\partial B^{(t)}}{\partial
x} (x_{0,t},y_{0,t})}{B^{(t)}(x_{0,t},y_{0,t})} - \frac{\alpha_t}{x_{0,t}} \right] - \log y_{0,t} -
\log z_0 \\
+ \frac{\partial y_{0,t}}{\partial \beta_t} \left[ \delta_t \frac{ \frac{\partial B^{(t)}}{\partial
y} (x_{0,t},y_{0,t})}{B^{(t)}(x_{0,t},y_{0,t})} - \frac{\beta_t}{y_{0,t}} \right] - \log \left(
\frac{1 - \beta \int \lambda}{\beta \int \lambda} \right) \\
+ \frac{\partial z_0}{\partial \beta_t} \left[ \left( \frac{\int \rho}{\int \lambda} \right) \sum_{s
\in I_c} \gamma_s \frac{\frac{\mathrm{d} A^{(s)}}{\mathrm{d} z}(z_0)}{A^{(s)}(z_0)} -
\frac{\beta}{z_0} \right] = 0 \; .
\end{multline}
The terms in square brackets are zero due to (\ref{eq:x0t_y0t_eqn_1}), (\ref{eq:x0t_y0t_eqn_2}) and
(\ref{eq:z0_eqn}) respectively; therefore this simplifies to
\begin{equation}
z_0 y_{0,t} \left( \frac{1 - \beta \int \lambda}{\beta \int \lambda} \right) = 1 \quad \forall t \in
I_v \; .
\label{eq:z0_y0_beta_1}
\end{equation}
We conclude that all of the $\{ y_{0,t} \}$ are equal, and we may write 
\begin{equation}
y_{0,t} = y_0 \quad \forall t \in I_c \; .
\label{eq:y0_all_equal}
\end{equation} 
Rearranging (\ref{eq:z0_y0_beta_1}) we obtain (\ref{eq:z0_y0_relation}). Also, summing
(\ref{eq:x0t_y0t_eqn_1}) over $t \in I_v$ and using (\ref{eq:sum_alpha_constraint}) and
(\ref{eq:x0_all_equal}) yields (\ref{eq:x0_y0_eqn_1}). Similarly, summing (\ref{eq:x0t_y0t_eqn_2})
over $t \in I_v$ and using (\ref{eq:beta_sum_of_betat}) and (\ref{eq:y0_all_equal}) yields
(\ref{eq:x0_y0_eqn_2}). Substituting back into (\ref{eq:S_definition}) and using
(\ref{eq:x0_all_equal}), (\ref{eq:y0_all_equal}), (\ref{eq:sum_alpha_constraint}) and
(\ref{eq:beta_sum_of_betat}) yields 
\begin{multline}
G(\alpha) = \sum_{t \in I_v} \delta_t \log B^{(t)}(x_{0},y_{0}) - \alpha \log x_{0} - \beta \log
y_{0} \\
+ \left( \frac{\int \rho}{\int \lambda} \right) \sum_{s \in I_c} \gamma_s \log A^{(s)}(z_0) - \beta
\log z_0 - \frac{h(\beta \int \lambda)}{\int \lambda}
\label{eq:growth_rate_polynomial_general2}
\end{multline}
where $x_0$, $y_0$, $z_0$ and $\beta$ are the unique positive real solutions to the $4 \times 4$
system of equations (\ref{eq:z0_eqn}), (\ref{eq:x0_y0_eqn_1}), (\ref{eq:x0_y0_eqn_2}) and
(\ref{eq:z0_y0_relation}). Finally, (\ref{eq:z0_y0_relation}) leads to the observation that
\[
-\beta \log z_0 - \beta \log y_0 - \frac{h(\beta \int \lambda)}{\int \lambda} = \frac{\log \left( 1
- \beta \int \lambda \right)}{\int \lambda} 
\] 
which, when substituted in (\ref{eq:growth_rate_polynomial_general2}), leads to
(\ref{eq:growth_rate_polynomial_general}).

%**********************************************************
% LEMMAS AND THEOREMS ON GLDPC CODES
%**********************************************************

\section{Proofs of Lemmas in Section~\ref{section:spectral_shape_formula}}\label{section:supporting_proofs}

\indent\emph{Proof of Lemma~\ref{lemma:f_properties}}: We prove the second property, as the proofs of
the first and the third properties are straightforward. The derivative of $\sff$ (normalized w.r.t.
$\int\! \rho$) is given by
\begin{align*} 
\sum_{t \in I_c} \gamma_t \, \frac{A^{(t)}(z) \left[ \frac{\mathrm{d} A^{(t)}(z)}{\mathrm{d} z} + z
\, \frac{\mathrm{d}^2 A^{(t)}(z)}{\mathrm{d} z^2} \right] - z \left[ \frac{\mathrm{d}
A^{(t)}(z)}{\mathrm{d} z} \right] ^2}{[A^{(t)}(z)]^2} \, .
\end{align*}
The denominator of the fraction in each term in the sum is strictly positive for all $z > 0$. The
numerator of the fraction in term $t \in I_c$ in the sum may be expanded as
\begin{align*}
\, & (1+\sum_{v=r_t}^{s_t} A^{(t)}_v z^v) (\sum_{u=r_t}^{s_t} u A^{(t)}_u z^{u-1} +
\sum_{u=r_t}^{s_t} u(u-1) A^{(t)}_u z^{u-1})\\ \, & \phantom{-}\, - z (\sum_{u=r_t}^{s_t} u
A^{(t)}_u z^{u-1}) (\sum_{v=r_t}^{s_t} v A^{(t)}_v z^{v-1}) \\
\, & = \sum_{u=r_t}^{s_t} u^2 A^{(t)}_u z^{u-1} + \sum_{u=r_t}^{s_t}\sum_{v=r_t}^{s_t} u(u-v)
A^{(t)}_u A^{(t)}_v z^{u+v-1} \, .
\end{align*}
Observe that in this expression, each term in the second summation with $u=v$ is zero, while each
$(u,v)$ term in the second summation (with $u > v$) added to the corresponding $(v,u)$ term is
positive for $z>0$, since $u(u-v) A^{(t)}_u A^{(t)}_v z^{u+v-1} + v(v-u) A^{(t)}_u A^{(t)}_v
z^{u+v-1} = (u-v)^2 A^{(t)}_u A^{(t)}_v z^{u+v-1}> 0$ and therefore the second summation is
nonnegative for $z>0$. Since the first summation is strictly positive for $z>0$, it follows that
$\sff'(z)>0$ for all $z > 0$. \hfill \QEDclosed

\medskip
\indent\emph{Proof of Lemma~\ref{lemma:all_1_codeword}}: The sufficient condition (if a linear block
code has the all-$1$ codeword then its WEF is symmetric)
is a well-known result in classical coding theory. Next, we provide a proof for the necessary
condition. 

Reasoning by contradiction, assume that CN type $t \in I_c$ has a symmetric WEF $A^{(t)}(z)$ and a
maximum codeword weight $\bar{u}_t<s_t$. Note that, under these hypotheses, we necessarily have
$A^{(t)}_{\bar{u}_t}=1$. Without
loss of generality, we assume that the (unique) codeword of weight $\bar{u}_t$ has all its $1$
entries in the $\bar{u}_t$ leftmost positions. We build a generator matrix for the
component code associated with the CN in the form $\mbox{\boldmath $G$}^{(t)}=[\mbox{\boldmath
$G$}^{(t)}_{1}, \mbox{\boldmath $G$}^{(t)}_{2}]$, where $\mbox{\boldmath $G$}^{(t)}_{1}$ is an
$h_t \times \bar{u}_t$ matrix whose first row is the all-$1$ vector, and $\mbox{\boldmath
$G$}^{(t)}_{2}$ is
an $h_t \times (s_t-\bar{u}_t)$ matrix whose first row is the all-$0$ vector. Note that, due
to the above-stated hypotheses, $\mbox{\boldmath $G$}^{(t)}_{1}$ has a unique all-$1$ row.  We
denote
by $A^{(t,1)}(z)$ the WEF associated with the matrix $\mbox{\boldmath $G$}^{(t)}_1$. Since
$\mbox{\boldmath $G$}^{(t)}_{1}$ has a unique all-$1$ row, $A^{(t,1)}(z)$ is symmetric, i.e.,
$A^{(t,1)}_u = A^{(t,1)}_{\bar{u}_t-u}$ for all $u\in\{0,\dots,\bar{u}_t\}$.

It is now convenient to represent the WEF of the original code as a ``balls and bins'' system.
More specifically, there are $s_t+1$ bins, each uniquely associated with a weight $u$, and
$2^{h_t}$ balls, each uniquely associated with an information word \mbox{\boldmath $v$} (and then
with the codeword $\mbox{\boldmath $v$} \mbox{\boldmath $G$}^{(t)}$). Bins are labeled
$0,1,\dots,s_t$, such that each label is equal
to the corresponding codeword weight. They are arranged with increasing labels from left to
right. The number of codewords in bin $u\in\{0,1,\dots,s_t\}$ is the number of codewords of
weight~$u$.

We consider filling the bins according to the following procedure. At the beginning, all bins are empty. Then, in the \emph{initial phase} the $2^{h_t}$ balls are placed into the bins according to the WEF $A^{(t,1)}(z)$ corresponding to $\mbox{\boldmath $G$}^{(t)}_1$. Note that no
balls are placed in bins with label in $\{\bar{u}_t+1,\dots,s_t\}$ and that, due to the symmetry of
$A^{(t,1)}(z)$, the number of balls in bins $u$ and $\bar{u}_t-u$ is the same for all $u\in\{0,\dots,\bar{u}_t\}$. Then, in the \emph{adjustment phase} the 
balls are moved in order to match the WEF $A^{(t)}(z)$, according to $\mbox{\boldmath$G$}^{(t)}_{2}$. The key observation here is that, in the adjustment phase, every ball must either stay in its current bin, or else move to the right. This
is because, for any information word {\boldmath $v$}, of length $h_t$, the Hamming weight of
$\mbox{\boldmath $v$} \mbox{\boldmath $G$}^{(t)}_{1}$ cannot be larger than the Hamming weight of
$\mbox{\boldmath $v$} \mbox{\boldmath $G$}^{(t)}$. 
%Hence, the bin of any ball according to $A^{(t)}(z)$ is never on the left of the bin of the same ball, according to $A^{(t,1)}(z)$.

Denoting by $u_{\min}$ the minimum weight such that $A^{(t,1)}_{u} \neq A^{(t)}_{u}$, we must have
$A^{(t,1)}_{u_{\min}} > A^{(t)}_{u_{\min}}$ (since no ball can move into bin $u_{\min}$, while at least one ball has moved out). 
Note that the number of balls in all bins corresponding to weights smaller than $u_{\min}$ remains unchanged. Since $A^{(t,1)}(z)$ is symmetric, and since
so is $A^{(t)}(z)$ by hypothesis, we must also have $A^{(t,1)}_{\bar{u}_t-u_{\min}} >
A^{(t)}_{\bar{u}_t-u_{\min}}$. 
It follows that the \emph{total} number of balls in bins with labels in $\{ \bar{u}_t-u_{\min}, \ldots, \bar{u}_t \}$ has \emph{decreased} during the adjustment phase. But this is a contradiction, and so the result is established.
%This latter inequality implies that some balls have been moved from
%the bin of label $\bar{u}_t-u_{\min}$ to bins with higher labels. Hence, we have an excess of balls
%in bins with labels in $\{\bar{u}_t-u_{\min},\dots,\bar{u}_t\}$. These balls are moved to bins of
%labels larger than $\bar{u}_t$, which contradicts the hypothesis. 
\hfill \QEDclosed

\medskip
\indent\emph{Proof of Lemma~\ref{lemma:A_symmetry}}: Assume $A^{(t)}(z)$ is symmetric (and therefore $\bar{u}_t=s_t$). We have
\begin{align*}
z^{s_t} A^{(t)}\left(z^{-1}\right) &= \sum_{u = 0}^{s_t} A^{(t)}_u z^{s_t-u} \\
&= \sum_{v = 0}^{s_t} A^{(t)}_{s_t-v} z^{v} \\
&= A^{(t)}(z)
\end{align*}
where the final equality is due to $A^{(t)}_{s_t-v}=A^{(t)}_v$
for all $v \in \{ 0,1,\ldots,s_t \}$. Conversely, assume that \eqref{eq:symmetry_of_CN_type} is satisfied for all
$z \in \mathbb{R}^+$. It can be easily recast as %
$$\sum_{u = 0}^{s_t} A_u^{(t)} z^{u} = \sum_{u = 0}^{\bar{u}_t} A_{\bar{u}_t-u}^{(t)}
z^{u} \, .$$
In order for this equality to be satisfied for all $z \in \mathbb{R}^+$, we must have $A^{(t)}_{\bar{u}_t-u}=A^{(t)}_u$ for all $u \in \{ 0,1,\ldots,s_t \}$. Hence, $A^{(t)}(z)$ must be symmetric.
\hfill \QEDclosed

\medskip
\indent\emph{Proof of Lemma~\ref{lemma:finv_symmetry}}: %
We prove that the function $\sff$ fulfills
\begin{align}\label{eq:f_symmetry}
\sff(z) = \mathsf{M} - \sff\left(z^{-1}\right)
\end{align}
$\forall$ $z \in \mathbb{R}^+$ if $A^{(t)}(z)$ is symmetric for every $t \in I_c$. The
result is then obtained by applying the inverse function $\mathsf{f}^{-1}$ to both sides of
\eqref{eq:f_symmetry} and by letting $\sff(z^{-1})=\alpha$ for all $z \in \mathbb{R}^+ \backslash
\{0\}$.

Assuming that the WEF of CN type $t \in I_c$ is symmetric, we have,
differentiating (\ref{eq:symmetry_of_CN_type}),
\[
\frac{\mathrm{d} A^{(t)}(z)}{\mathrm{d} z} = - z^{\bar{u}_t-2} \frac{\mathrm{d}
A^{(t)}(z^{-1})}{\mathrm{d} z^{-1}} + \bar{u}_t z^{\bar{u}_t-1} A^{(t)}(z^{-1}) \, .
\]
Multiplying by $z$ and using (\ref{eq:symmetry_of_CN_type}) yields
\begin{equation}\label{eq:symmetry_of_CN_type_derivative}
z \frac{\mathrm{d} A^{(t)}(z)}{\mathrm{d} z} = - z^{\bar{u}_t-1} \frac{\mathrm{d}
A^{(t)}(z^{-1})}{\mathrm{d} z^{-1}} + \bar{u}_t A^{(t)}(z) \; .
\end{equation}
Then,
\begin{align*}
\mathsf{M} - \sff\left(z^{-1} \right) & = \mathsf{M} - \left( \smallint\! \rho \right) \sum_{t \in
I_c} \gamma_t \left( \frac{z^{-1} \frac{\mathrm{d} A^{(t)}(z^{-1})}{\mathrm{d}
z^{-1}}}{A^{(t)}(z^{-1})} \right) \\
\, & \!\!\!\!\!\!\!\!\!\!\!\!\!\!\!\!\!\!\!\!\!\!\!\!\!\!\!\!\!\!\!\!\stackrel{\textrm{(a)}}{=}
\left( \smallint\! \rho \right) \sum_{t \in I_c} \gamma_t \left( \bu_t - \frac{z^{\bar{u}_t-1}
\frac{\mathrm{d} A^{(t)}(z^{-1})}{\mathrm{d} z^{-1}}}{A^{(t)}(z)} \right) \\
\, & \!\!\!\!\!\!\!\!\!\!\!\!\!\!\!\!\!\!\!\!\!\!\!\!\!\!\!\!\!\!\!\!\stackrel{\textrm{(b)}}{=}
\left( \smallint\! \rho \right) \sum_{t \in I_c} \gamma_t \frac{\bu_t A^{(t)}(z) + z
\frac{\mathrm{d} A^{(t)}(z)}{\mathrm{d} z} - \bar{u}_t A^{(t)}(z)}{A^{(t)}(z)}=\sff(z)
\end{align*}
where we have used \eqref{eq:M_definition} and \eqref{eq:symmetry_of_CN_type} in (a), and
\eqref{eq:symmetry_of_CN_type_derivative} in (b). 
\hfill \QEDclosed

%%%%%%%%%%%%%%%%%%%%%%%%%%%%%%%%%%%%%%%%%%%%%%%%%%%%%%%%%%%%%%%
% SECTION: EXAMPLES
%%%%%%%%%%%%%%%%%%%%%%%%%%%%%%%%%%%%%%%%%%%%%%%%%%%%%%%%%%%%%%%

\section{Examples}
\label{section:examples}
In this section, the spectral shapes of some example GLDPC and D-GLDPC ensembles are evaluated using
the polynomial solution of Theorem \ref{thm:growth_rate}. We will consider both BD and MAP CN decoding. Considering the former case, note that under bounded distance decoding, the non-empty stopping sets are precisely those sets which have $r_t$ or more erased code bits; therefore the local SSEF (BD-SSEF) is given by
\begin{align}\label{eq:Psi_def}
\Psi^{(t)}(z) = 1 + \sum_{u=r_t}^{s_t} {s_t \choose u} z^u \; .
\end{align}
In the latter case, the local SSEF (MAP-SSEF) is given by
\begin{align}\label{eq:Phi_def}
\Phi^{(t)}(z) = 1 + \sum_{u=r_t}^{s_t} \phi_u^{(t)}\, z^u
\end{align}
where $\phi^{(t)}_u \geq 0$ is the number of local stopping sets (under MAP decoding) of size
$u$.\footnote{Denoting by $\mathbf{G}_t$ any generator matrix for a type-$t$ CN, a local erasure
pattern is a local stopping set under MAP decoding when each column of $\mathbf{G}_t$ corresponding
to erased bits is linearly independent of the columns of $\mathbf{G}_t$ corresponding to the
non-erased bits.} Furthermore, numerical examples are presented on the approximation of the parameter 
$\alpha^*$ for regular LDPC code ensembles, based on \eqref{eq:alpha_star_approx_LDPC_regular}.
\begin{table}[!t]
\caption{Coefficients of $\lambda(x)$ and $\rho(x)$ for the two example D-GLDPC ensembles.}\label{table:ensembles}
\begin{center}
\begin{tabular}{llll}
\hline\hline
\multicolumn{4}{c}{\emph{Ensemble $1$}}\\
\hline
\multicolumn{2}{l}{\emph{Variable nodes}} & \multicolumn{2}{l}{\emph{Check nodes}}\\
1:repetition$-2$ & $\lambda_1=0.055646$ & 1:Hamming$(7,4)$ & $\rho_1=0.965221$\\
2:SPC$-7$ (C) & $\lambda_2=0.944354$ & 2:SPC$-7$ & $\rho_2=0.034779$\\
\hline
\multicolumn{4}{c}{\emph{Ensemble $2$}}\\
\hline
\multicolumn{2}{l}{\emph{Variable nodes}} & \multicolumn{2}{l}{\emph{Check nodes}}\\
1:repetition$-2$ & $\lambda_1=0.022647$ & 1:Hamming$(7,4)$ & $\rho_1=0.965221$\\
2:SPC$-7$ (C) & $\lambda_2=0.100000$ & 2:SPC$-7$ & $\rho_2=0.034779$\\
3:SPC$-7$ (A) & $\lambda_2=0.539920$ &  & \\
4:SPC$-7$ (S) & $\lambda_2=0.337432$ &  & \\
\hline \hline
\end{tabular}
\end{center}
\end{table}
\begin{figure}
\begin{center}
\psfrag{xlabel}[c]{\small{$\phantom{---}\alpha$}} \psfrag{ylabel}[c]{\small{$\phantom{---}$\textsf{Growth rate, }$G(\alpha)$}} \psfrag{Ensemble 1 (2 VN types)}[l]{\begin{picture}(0,0)\put(0.2,0.2){\scriptsize{\textsf{Ensemble $1$ ($2$ VN types)}}}\end{picture}} \psfrag{Ensemble 2 (4 VN types)}[l]{\begin{picture}(0,0)\put(0.2,0.2){\scriptsize{\textsf{Ensemble $2$ ($4$ VN types)}}}\end{picture}}
\includegraphics[%
  width=1.0\columnwidth,
  keepaspectratio]{./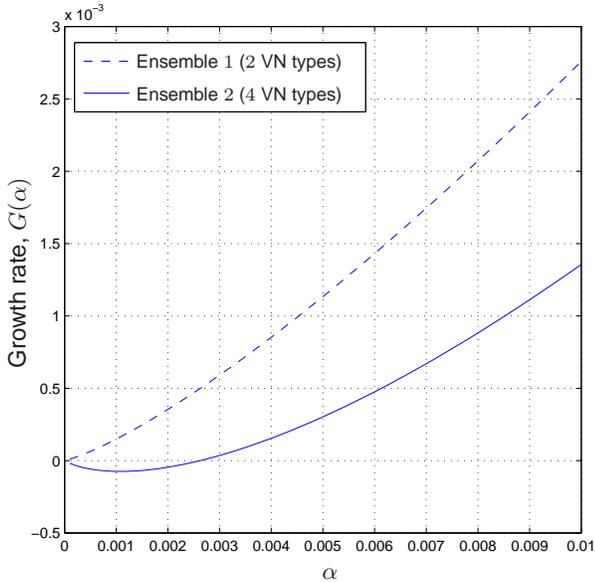}
\end{center}
\caption{\label{cap:growth_rates_paper} Growth rates of the two example ensembles described in Example~\ref{example:0}. Ensemble $1$ has bad growth rate behavior, while Ensemble $2$ has good growth rate behavior with a critical exponent codeword weight ratio of $\alpha^* = 2.625 \times 10^{-3}$.}
\end{figure}
\begin{example}[D-GLDPC ensembles with Hamming CNs and SPC VNs]\label{example:0} In this first example, we design two ensembles with design rate $R=1/2$ using Hamming $(7,4)$ codes as generalized CNs and SPC codes as generalized VNs. Three representations of SPC VNs are considered, namely, the cyclic (C), the systematic (S) and the antisystematic (A) representations \footnote{The $(k \times (k+1))$ generator matrix of a SPC code in A form is obtained from the generator matrix in S form by complementing each bit in the first $k$ columns. Note that a $(k \times (k+1))$ generator matrix in A form represents a SPC code if and only if the code length $q=k+1$ is odd. For even $k+1$ we obtain a $d_{\min}=1$ code with one codeword of weight $1$.}. 
%The IO-WEFs of the three representations of SPC VNs are given in \cite{paolini09:class}.

Ensemble $1$ is characterized by two CN types and two VN types. Specifically, we have $I_c = \{ 1,2 \}$, where $1 \in I_c$ denotes a $(7,4)$ Hamming CN type and $2 \in I_c$ denotes a length-$7$ single parity check (SPC) CN type, and $I_v = \{ 1,2 \}$, where $1 \in I_v$ denotes a repetition-$2$ VN type and $2 \in I_v$ denotes a length-$7$ SPC CN type in cyclic form. %The edge-perspective type distributions are given by $\lambda_1 = 0.055646$, $\lambda_2=0.944354$, $\rho_1 = 0.965221$ and $\rho_2 = 0.034779$. 
Ensemble $2$ is characterized by two CN types and four VN types. Specifically, we have $I_c = \{ 1,2 \}$, where $1 \in I_c$ denotes a $(7,4)$ Hamming CN type and $2 \in I_c$ denotes a SPC-$7$ CN type, and $I_v = \{ 1,2,3,4 \}$, where $1 \in I_v$ denotes a repetition-$2$ VN type, $2 \in I_v$ denotes a length-$7$ SPC CN type in cyclic form, $3 \in I_v$ denotes a length-$7$ SPC CN type in antisystematic form, and $4 \in I_v$ denotes a length-$7$ SPC CN type in systematic form. %The edge-perspective type distributions are given by $\lambda_1 = 0.022647$, $\lambda_2=0.1$, $\lambda_3 = 0.539920$, $\lambda_4 = 0.337432$, $\rho_1 = 0.985845$ and $\rho_2 = 0.014155$.
The edge-perspective type distributions of the two ensembles are summarized in Table~\ref{table:ensembles}.

Both Ensemble 1 and Ensemble 2 have been obtained by performing a decoding threshold optimization with differential evolution \cite{storn05:differential-book}. This is an evolutionary parallel optimization algorithm to find the global minimum of a real-valued cost function of a vector of continuous parameters, where the cost function may even be defined by a procedure (e.g., density evolution returning the threshold for an LDPC code ensemble over some channel and under some decoding algorithm). It is based on the evolution of a population of $N_p$ vectors, and its main steps are the same as typical evolutionary optimization algorithms (\emph{mutation}, \emph{crossover}, \emph{selection}) \cite{citeulike:3401906}. A starting population of $N_p$ vectors is first generated. Then, a competitor (or \emph{trial vector}) for each population element is generated by combining a subset of randomly chosen vectors from the same population. Finally, each element of the population is compared with its trial vector and the vector yielding the smallest cost function value is selected as the corresponding element of the evolved population.\footnote{Usually, the algorithm's ``greediness'' is reduced by selecting the vector yielding the smallest cost with a probability smaller than one, instead of systematically selecting it.} These steps are iterated until a certain stopping criterion is fulfilled or until a maximum number of iterations is reached. Differential evolution was first proposed for the optimization of LDPC code degree profiles in \cite{shokrollahi00:design}.

In our experiments, each element of the population was a pair of polynomials $(\lambda,\rho)$ corresponding to given variable and check component codes, given VN representations, and given design rate $R$, while the cost function was the ensemble threshold over the BEC (returned by numerical procedure), under iterative decoding with MAP erasure decoding at the VNs and CNs.

For Ensemble $1$ we have $C \cdot V = 1.19 > 1$, so the ensemble has \emph{bad growth rate behavior} ($\alpha^*=0$). Ensemble $2$ has been obtained by imposing the further constraints $C \cdot V \leq 0.5$ and $\lambda_2 \geq 0.1$ on differential evolution optimization. Since in this case we have $C \cdot V = 0.5 < 1$, the ensemble has \emph{good growth rate behavior} ($\alpha^*>0$). The expected good or bad growth rate behavior of the two ensembles is reflected in the growth rate curves shown in Fig.~\ref{cap:growth_rates_paper}. Using a standard numerical solver, it took only $5.1$ s and $6.7$ s to evaluate $100$ points on the Ensemble $1$ curve and on the Ensemble $2$ curve, respectively. The relative minimum distance of Ensemble $2$ is \mbox{$\alpha^* = 2.625 \times 10^{-3}$}.
\end{example}
\begin{figure}%[t]
\begin{center}
\psfragscanon
\psfrag{xaxis}[lt]{\scriptsize{$\alpha$}}
\psfrag{yaxis}[b]{\scriptsize{\textsf{Growth rate,} $G(\alpha)$}}
\includegraphics[%
  width=0.8\columnwidth,
  keepaspectratio]{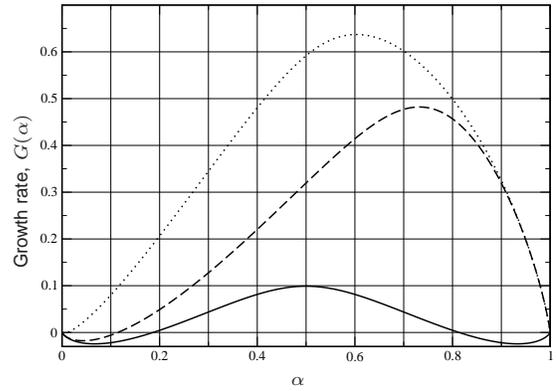}
\end{center}
\caption{Spectral shapes of the Tanner code ensemble in Example~\ref{example:1}. Solid: weight
spectral shape (critical exponent codeword weight ratio: $\alpha^* = 0.18650$). Dashed: stopping set
size spectral shape under MAP decoding at the CNs (relative minimum stopping set size:
$\alpha_{\Phi}^* = 0.11414$). Dotted: stopping set size spectral shape under BD decoding at the CNs
(critical exponent stopping set size ratio: $\alpha_{\Psi}^* = 0.01025$).}
\label{fig:growth_rate_rep2_H7}
\end{figure}
\begin{figure}%[t]
\begin{center}
\psfragscanon
\psfrag{xaxis}[lt]{\scriptsize{$\alpha$}}
\psfrag{yaxis}[b]{\scriptsize{\textsf{Growth rate,} $G(\alpha)$}}
\includegraphics[%
  width=0.8\columnwidth,
  keepaspectratio]{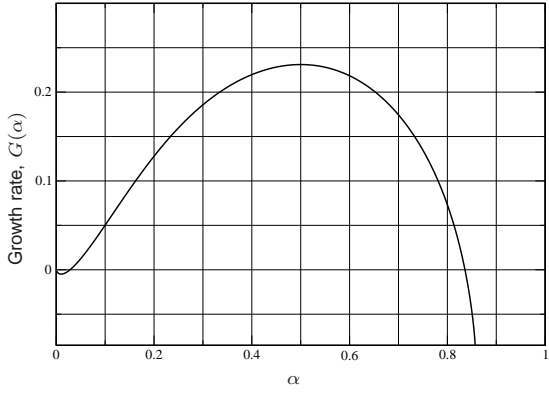}
\end{center}
\caption{Weight spectral shape of the check-hybrid GLDPC code ensemble in Example~\ref{example:2}. Critical exponent codeword weight ratio: $\alpha^* = 0.028179$.}
\label{fig:growth_rate_check_hybrid}
\end{figure}

\begin{figure}[h]
\begin{center}
\vspace{4mm}
\psfragscanon
\psfrag{xaxis}[lt]{\scriptsize{$\alpha$}}
\psfrag{yaxis}[b]{\scriptsize{\textsf{Growth rate,} $G(\alpha)$}}
\includegraphics[%
  width=0.8\columnwidth,
  keepaspectratio]{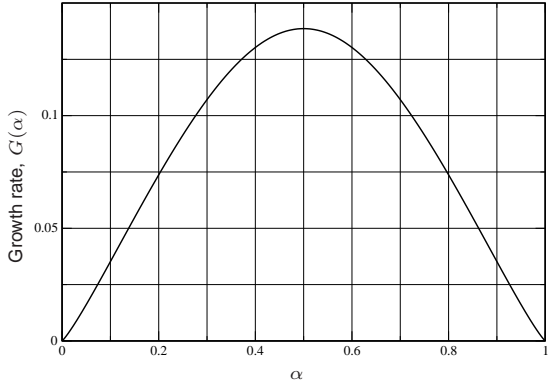}
\end{center}
\caption{Weight spectral shape of the Tanner code ensemble in Example~\ref{example:3}.}
\label{fig:growth_rate_asymp_bad}
\end{figure}

\begin{example}[Tanner code with $(7,4)$ Hamming CNs]\label{example:1} Consider a rate $R=1/7$
Tanner code ensemble where all VNs have degree $2$ and where all CNs are $(7,4)$ Hamming codes
(it was shown in \cite{lentmaier99:generalized,boutros99:generalized} that this ensemble has good
growth rate
behavior). The WEF of a Hamming $(7,4)$ CN is given by $A(z)=1 + 7 z^3 + 7 z^4 + z^7$, while its
local MAP-SSEF and BD-SSEF are given by $\Phi(z)=1+7 z^3+10 z^4+21 z^5+7 z^6+z^7$ and $\Psi(z)=1+35
z^3+35 z^4+21 z^5+7 z^6+z^7$ respectively. Note that we have $\mathsf{M}=\frac{\bu}{s}=1$ in all
three cases.
A plot of $G(\alpha)$, $G_{\Phi}(\alpha)$ and $G_{\Psi}(\alpha)$ obtained by implementation of
\eqref{eq:G(alpha)_tanner_codes} is depicted in Fig.~\ref{fig:growth_rate_rep2_H7}. We observe that
$A(z)$ satisfies the conditions of Theorem~\ref{theorem:G_symmetry_sufficient}. This is reflected by
the fact that the weight spectral shape $G(\alpha)$ is symmetric with respect to $\alpha=1/2$.
\end{example}
\begin{example}[Check-hybrid ensemble]\label{example:2} Consider a rate $R=1/3$ check-hybrid GLDPC
code ensemble where all VNs are repetition codes of length $q=3$ and whose CN set is composed of a
mixture of two linear block code types ($I_c=\{1,2\}$). CNs of type $1 \in I_c$ are length-$7$ SPC
codes with WEF $A^{(1)}(z)=[(1+z)^7+(1-z)^7]/2$ and $\gamma_1=0.722$, while CNs of type $2 \in I_c$
are $(7,4)$ codes with WEF $A^{(2)}(z)=1+5 z^2+7 z^4+3 z^6$ and $\gamma_2=0.278$. The weight
spectral shape of this ensemble, obtained from \eqref{eq:G(alpha)_irregular_CN_set}, is depicted in
Fig.~\ref{fig:growth_rate_check_hybrid}. Note that for this ensemble $\mathsf{M}=6/7$, and also that
the weight spectral shape does not exhibit any symmetry property (the CN WEFs are not symmetric).
\end{example}

\begin{example}[Ensemble with bad growth rate behavior]\label{example:3}
Consider a rate $R=1/5$ Tanner code ensemble where all VNs are repetition codes of length $q=2$ and
and where all CNs are $(5,3)$ linear block codes with WEF $A(z)=1+3 z^2+3 z^3+z^5$. This ensemble is
known to have bad growth rate behavior ($\alpha^*=0$) since we have $\lambda'(0)C=6/5>1$, where
$\lambda(x)=x$ and $C=2 A_2 / s$ \cite{Tillich04:weight,paolini08:weight}. A plot of the weight
spectrum for this ensemble, obtained from \eqref{eq:G(alpha)_tanner_codes} is depicted in
Fig.~\ref{fig:growth_rate_asymp_bad}. We observe that the plot of $G(\alpha)$ is symmetric, due to
the fact that $A(z)$ is symmetric ($\mathsf{M}=1$). As expected, the derivative of $G(\alpha)$ at
$\alpha=0$ is positive and hence $\alpha^*=0$.
\end{example}

\begin{figure}[t]
\psfrag{alpha_star}[b]{\small{\quad$\alpha^*$}} \psfrag{dv}[]{\small{$d_v$}}
\begin{center}\includegraphics[%
  width=\columnwidth,
  keepaspectratio]{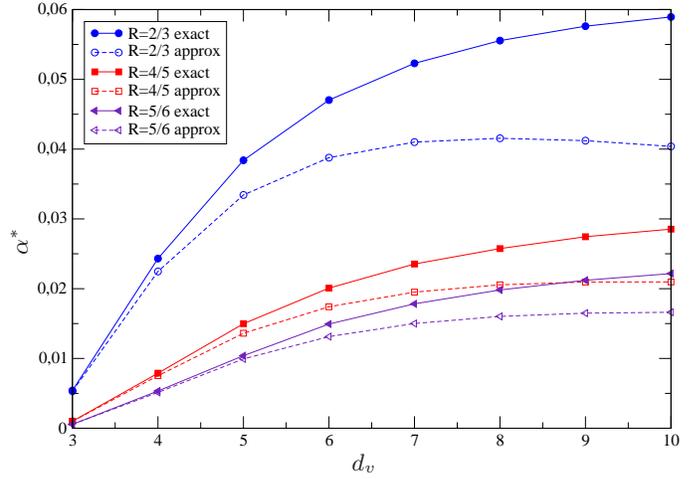}
  \end{center}
\caption{Actual values of $\alpha^*$, and the corresponding approximated values based on \eqref{eq:alpha_star_approx_LDPC_regular}, for some regular
LDPC code ensembles.}\label{fig:alpha_star}
\end{figure}

Finally, note that for the ensembles of Figures \ref{fig:growth_rate_rep2_H7}--\ref{fig:growth_rate_asymp_bad}, the weight spectral shape has in each case a maximum value of $R \log 2$ which occurs at $\alpha=1/2$, in accordance with Theorem~\ref{thm:max_of_weight_spectral_shape}.
\begin{example}
Fig.~\ref{fig:alpha_star} illustrates a comparison between the actual value of the parameter $\alpha^*$
for some high-rate regular LDPC code ensembles (solid curves) and the corresponding approximation
\eqref{eq:alpha_star_approx_LDPC_regular} obtained via growth rate analysis in the small-$\alpha$
case (dashed curves). Each curve corresponds to a value of the design rate
$R$, so the CN degree $d_c$ for each point is given by $d_c=d_v/(1-R)$ where the VN
degree $d_v$ is reported in abscissa. From the figure we see that the approximation is quite good
for small values of $d_v$, which are usually the ones of interest in practical applications. Moreover, the higher the design rate $R$ the larger the range of $d_v$ over which
the approximation is satisfactory. The actual values of $\alpha^*$ and their corresponding approximated values are reported in Table \ref{table:alpha} for several regular LDPC code ensembles with $d_v=3$. A very good match can be observed.  Similar examples may be developed, for instance, for Tanner codes through \eqref{eq:alpha_star_approx_GLDPC_variable_regular}.
\end{example}

\begin{table}[!t]
\caption{Comparison between the actual values of the parameter $\alpha^*$ for some regular LDPC ensembles with VN degree $d_v=3$ and the corresponding approximated values based on \eqref{eq:alpha_star_approx_LDPC_regular}.}\label{table:alpha}
\begin{center}
\begin{tabular}{lll}
\hline\hline
$R$, $d_c$ & $\alpha^*$, exact & $\alpha^*$, approx. \\
\hline
$1/4$, $4$ & $0.112159$ & $0.100677$ \\
$2/5$, $5$ & $0.045365$ & $0.042473$ \\
$1/2$, $6$ & $0.022733$ & $0.021746$ \\
$4/7$, $7$ & $0.012993$ & $0.012585$ \\
$5/8$, $8$ & $0.008117$ & $0.007925$ \\
$2/3$, $9$ & $0.005410$ & $0.005309$ \\
$7/10$, $10$ & $0.003785$ & $0.003729$ \\
\hline \hline
\end{tabular}
\end{center}
\end{table}

\section{Conclusion}\label{section:conclusion}
A general expression for the weight and stopping set size spectral shapes of irregular D-GLDPC
ensembles has been presented. Evaluation of the expression requires solution of a $4 \times 4$
polynomial system, irrespective of the number of VN and CN types in the ensemble. A compact
expression was developed for the special case of check-hybrid GLDPC codes, and both a necessary and
a sufficient condition for symmetry of the weight spectral shape was developed. Simulation results
were presented for two example optimized irregular D-GLDPC code ensembles as well as a number of
check-hybrid GLDPC code ensembles.
\appendices

\section{Some Useful Lemmas}
The following results are special cases of \cite[Corollary 16]{di06:weight}.
\medskip
\begin{lemma}
Let $A(x) = 1 + \sum_{u=c}^{d} A_u x^u$, where $1 \le c \le d$, be a polynomial satisfying $A_c > 0$
and $A_u \ge 0$ for all $c < u \le d$. Then, as $\ell \rightarrow \infty$,
\begin{equation}
\coeff \left[ \left( A(x) \right) ^{\ell}, x^{\xi \ell} \right] \asympequal \exp \left[ \ell \log
\left( \frac{A(z)}{z^{\xi}} \right) \right]
\end{equation}
where $z$ is the unique positive real solution to
\begin{equation}
\label{eq:z_soln_to_A_eqn}
\frac{A'(z)}{A(z)} \cdot z = \xi \; .
\end{equation}
\label{lemma:optimization_1D}
\end{lemma}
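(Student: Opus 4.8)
The plan is to prove the two directions of the exponential equivalence separately, using the nonnegativity of the coefficients of $A$ for a clean upper bound and an exponential-tilting (Cram\'er-type saddle-point) argument for the matching lower bound. First I would record the structural facts about the saddle-point equation~\eqref{eq:z_soln_to_A_eqn}. Writing $g(z) = z A'(z)/A(z)$, the quantity $g(z)$ is precisely the mean of the probability distribution $p_k = A_k z^k / A(z)$ on $\{0\} \cup \{c, c+1, \dots, d\}$ (recall $A_0 = 1 > 0$). As $z \to 0^+$ this distribution concentrates on $k=0$ so $g(z) \to 0$, while as $z \to +\infty$ it concentrates on $k=d$ so $g(z) \to d$; moreover $z g'(z)$ equals the variance of this distribution, which is strictly positive since $A$ has at least two distinct exponents (namely $0$ and $c$) carrying positive coefficients. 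Hence $g$ is a strictly increasing bijection from $(0,\infty)$ onto $(0,d)$ --- the same monotonicity phenomenon exploited in Lemma~\ref{lemma:f_properties} --- so for each admissible $\xi \in (0,d)$ the equation~\eqref{eq:z_soln_to_A_eqn} has a unique positive real solution $z$.

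For the upper bound I would use only that $(A(x))^\ell$ has nonnegative coefficients: for every $z > 0$,
\[
\coeff \left[ (A(x))^\ell, x^{\xi\ell} \right] \cdot z^{\xi\ell} \le (A(z))^\ell ,
\]
since the left-hand side is a single term in the expansion $(A(z))^\ell = \sum_k a_k z^k$ with all $a_k \ge 0$. Taking logarithms gives $\coeff[(A(x))^\ell, x^{\xi\ell}] \le \exp[\ell(\log A(z) - \xi \log z)]$ for all $z > 0$, and minimizing the exponent over $z$ reproduces exactly~\eqref{eq:z_soln_to_A_eqn}; evaluating at the saddle point thus yields $\coeff[(A(x))^\ell, x^{\xi\ell}] \le \exp[\ell \log(A(z)/z^{\xi})]$.

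For the matching lower bound I would pass to the tilted distribution above. Let $X_1, \dots, X_\ell$ be i.i.d.\ with law $p_k = A_k z^k/A(z)$, where $z$ solves~\eqref{eq:z_soln_to_A_eqn}, and set $S_\ell = \sum_i X_i$. Since $\mathbb{E}[x^{X_i}] = A(zx)/A(z)$, extracting the coefficient of $x^{\xi\ell}$ from $(\mathbb{E}[x^{X_i}])^\ell$ gives the exact identity
\[
\coeff \left[ (A(x))^\ell, x^{\xi\ell} \right] = \left( \frac{A(z)}{z^{\xi}} \right)^{\ell} \Pr[S_\ell = \xi \ell] .
\]
By the choice of $z$ the mean of $S_\ell$ is exactly $\xi\ell$, and its variance is $\Theta(\ell)$. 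Thus $\Pr[S_\ell = \xi\ell]$ is the mass at the mean of an integer-valued sum, and a local central limit theorem --- or, more elementarily, the observation that $S_\ell$ ranges over $O(\ell)$ integer values while concentrating its mass within $O(\sqrt{\ell})$ of the mean --- shows $\Pr[S_\ell = \xi\ell] = \ell^{-\Theta(1)} = \exp(o(\ell))$. This subexponential factor makes the lower bound match the upper bound to first exponential order, establishing the claimed equivalence.

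The main obstacle is precisely this lower bound on the point probability $\Pr[S_\ell = \xi\ell]$: one must guarantee it decays only polynomially rather than exponentially. The clean route is the local limit theorem, which requires the tilted law to be non-degenerate (guaranteed by the strictly positive variance established above) and, for the exact $\ell^{-1/2}$ rate, aperiodic; periodicity only alters the polynomial power and is harmless here, since we require merely subexponential decay together with $\xi\ell \in \mathbb{Z}$ lying in the support lattice of $S_\ell$. All remaining steps are routine, and specializing this single-variable estimate is exactly what feeds Section~\ref{subsec:no_of_check_valid_assignments} through~\eqref{eq:Nct_epsilon_mid}.
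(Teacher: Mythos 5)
Your proof is sound, but note that the paper does not actually prove Lemma~\ref{lemma:optimization_1D}: it is invoked as a special case of \cite[Corollary~16]{di06:weight}, whose proof there rests on Hayman's complex-analytic saddle-point method. Your argument is therefore a genuinely different, self-contained route. The upper bound is the elementary Chernoff-type inequality $\coeff\left[ (A(x))^{\ell}, x^{\xi\ell}\right] z^{\xi\ell} \le (A(z))^{\ell}$, valid for every $z>0$ by nonnegativity of coefficients, with the optimizing $z$ recovering exactly \eqref{eq:z_soln_to_A_eqn}; the lower bound comes from the exact tilting identity $\coeff\left[ (A(x))^{\ell}, x^{\xi\ell}\right] = \left( A(z)/z^{\xi}\right)^{\ell} \Pr[S_\ell = \xi\ell]$ plus a local central limit theorem guaranteeing only polynomial decay of the point mass. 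What this buys is transparency (only elementary probability, and the saddle-point equation is exposed as a mean-matching condition for the tilted law) and an explicit treatment of the lattice/positivity caveat --- $\xi\ell$ must lie in the support lattice of $S_\ell$, else the coefficient vanishes --- which the paper handles silently via its convention of restricting to indices where the enumerator is positive. What the citation buys is brevity and the extra strength of Hayman's method (uniformity and the precise $\Theta(\ell^{-1/2})$ prefactor), none of which is needed at exponential order. Two points to tighten in your write-up: (i) $g(z)=zA'(z)/A(z)$ increases onto $(0,\bar{u})$ where $\bar{u}=\max\{u : A_u>0\}$, which need not equal $d$ since only $A_c$ is assumed strictly positive, so admissibility of $\xi$ should read $\xi\in(0,\bar{u})$; (ii) your ``more elementary'' alternative to the local CLT shows only that \emph{some} lattice point within $O(\sqrt{\ell})$ of the mean carries mass $\Omega(\ell^{-1/2})$, not the specific point $\xi\ell$, so the local CLT (or a log-concavity/unimodality argument) is the step to rely on, as you correctly indicate.
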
 
\medskip
\begin{lemma}
Let 
\[
B(x,y) = 1 + \sum_{u=1}^{k} \sum_{v=c}^{d} B_{u,v} x^u y^v 
\]
where $k \ge 1$ and $1 \le c \le d$, be a bivariate polynomial satisfying $B_{u,v} \ge 0$ for all $1
\le u \le k$, $c \le v \le d$. Then, as $\ell \rightarrow \infty$,
\begin{equation}
\coeff \left[ \left( B(x,y) \right) ^{\ell}, x^{\xi \ell} y^{\theta \ell} \right] \asympequal \exp
\left[ \ell \log \left( \frac{B(x_0,y_0)}{x_0^{\xi} y_0^{\theta}} \right) \right]
\end{equation}
where $x_{0}$ and $y_{0}$ are the unique positive real solutions to the pair of simultaneous equations
\begin{equation}
\frac{ \frac{\partial B}{\partial x} (x_{0},y_{0})}{B(x_{0},y_{0})} \cdot x_{0} = \xi
\end{equation}
and
\begin{equation}
\frac{ \frac{\partial B}{\partial y} (x_{0},y_{0})}{B(x_{0},y_{0})} \cdot y_{0} = \theta \; . 
\end{equation}
\label{lemma:optimization_2D}
\end{lemma}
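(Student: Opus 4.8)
The plan is to prove this bivariate statement by the same probabilistic (large‑deviations) method that underlies Lemma~\ref{lemma:optimization_1D} and \cite[Corollary 16]{di06:weight}, of which it is the two‑variable special case. For fixed $x_0,y_0>0$ I would introduce the \emph{tilted} probability distribution on the index pairs $(u,v)$ (with the convention $B_{0,0}=1$) given by $p_{u,v}=B_{u,v}\,x_0^u y_0^v / B(x_0,y_0)$; this is a genuine distribution precisely because every $B_{u,v}\ge 0$. Drawing $\ell$ pairs $(U_i,V_i)$ independently from $p$ and setting $S=\sum_i U_i$, $T=\sum_i V_i$, a direct expansion of the product of the $\ell$ identical factors yields the identity $\Pr[S=a,\,T=b] = x_0^a y_0^b\,\coeff[(B(x,y))^\ell,\,x^a y^b]/B(x_0,y_0)^\ell$, equivalently $\coeff[(B(x,y))^\ell,\,x^a y^b] = B(x_0,y_0)^\ell\, x_0^{-a} y_0^{-b}\,\Pr[S=a,\,T=b]$.

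The first step is the upper bound, which needs no probability at all: since every coefficient of $(B(x,y))^\ell$ is nonnegative, for any $x,y>0$ we have $\coeff[(B(x,y))^\ell,\,x^{\xi\ell}y^{\theta\ell}]\cdot x^{\xi\ell}y^{\theta\ell}\le (B(x,y))^\ell$, hence $\coeff[(B(x,y))^\ell,\,x^{\xi\ell}y^{\theta\ell}]\le (B(x,y)/(x^\xi y^\theta))^\ell$. Minimizing the right‑hand side over $(x,y)\in(0,\infty)^2$, i.e.\ setting the gradient of $\log B(x,y)-\xi\log x-\theta\log y$ to zero, yields exactly the two stationarity conditions in the statement. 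Writing $x=e^s$, $y=e^t$, the objective $\log B(e^s,e^t)-\xi s-\theta t$ is convex (it is the logarithm of a nonnegative‑coefficient ``partition function''), so the stationary point is the unique minimizer; this simultaneously delivers the existence and uniqueness of the positive solutions $x_0,y_0$ and the bound $\coeff[(B(x,y))^\ell,\,x^{\xi\ell}y^{\theta\ell}]\le \exp[\ell\log(B(x_0,y_0)/(x_0^\xi y_0^\theta))]$.

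For the matching lower bound I would take $x_0,y_0$ to be precisely the solutions of the two stated equations. Using $\mathbb{E}[U]=x_0\,\partial_x B(x_0,y_0)/B(x_0,y_0)$ and $\mathbb{E}[V]=y_0\,\partial_y B(x_0,y_0)/B(x_0,y_0)$, this choice makes $(\xi,\theta)$ the \emph{mean} of the tilted distribution, so the target lattice point $(a,b)=(\xi\ell,\theta\ell)$ sits at the centre of the distribution of $(S,T)$. The tilting identity then reads $\coeff[(B(x,y))^\ell,\,x^{\xi\ell}y^{\theta\ell}] = \exp[\ell\log(B(x_0,y_0)/(x_0^\xi y_0^\theta))]\cdot\Pr[S=\xi\ell,\,T=\theta\ell]$ exactly, so it remains only to show $\ell^{-1}\log\Pr[S=\xi\ell,\,T=\theta\ell]\to 0$. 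This is where the real work lies: it is a \emph{local} central limit statement asserting that the probability mass at the mean decays only polynomially (like $\Theta(1/\ell)$) rather than exponentially, which requires the increments to be non‑degenerate (positive‑definite covariance) together with the appropriate lattice/aperiodicity condition so that $(\xi\ell,\theta\ell)$ is actually reachable. Granting this local estimate, the two bounds coincide to exponential order and the claimed $\asympequal$ follows. The main obstacle is therefore exactly this local limit / saddle‑point control of the central coefficient; since the paper already invokes \cite[Corollary 16]{di06:weight}, I would either cite the multivariate form of Hayman's formula used there or verify non‑degeneracy directly, noting that degeneracy arises only in trivial boundary cases excluded by the hypotheses $B_{u,v}\ge 0$, $k\ge 1$ and $c\le d$.
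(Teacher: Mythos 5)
Your overall strategy --- exponential tilting, the elementary upper bound $\coeff \left[ (B(x,y))^\ell , x^{\xi\ell}y^{\theta\ell}\right] \le \left( B(x,y)/(x^\xi y^\theta)\right)^\ell$ minimized at the saddle point, and a local central limit theorem showing that the tilted probability at its mean decays only polynomially --- is the standard proof of such coefficient asymptotics, and it is essentially the content of the result the paper leans on. Note that the paper itself offers no argument here: its entire ``proof'' is the single sentence that Lemmas~\ref{lemma:optimization_1D} and~\ref{lemma:optimization_2D} are special cases of \cite[Corollary 16]{di06:weight}. So you are attempting to supply exactly the material the paper outsources, and your reduction of the problem to the local limit estimate $\ell^{-1}\log\Pr[S=\xi\ell,\,T=\theta\ell]\to 0$ is correct as far as it goes.

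The genuine gap is your closing claim that the required non-degeneracy ``arises only in trivial boundary cases excluded by the hypotheses.'' That is false, and it fails precisely in the case this paper cares most about. Take $B(x,y)=1+xy^q$, the IO-WEF of a length-$q$ repetition-code VN (see \eqref{eq:B_repetition}), which satisfies every hypothesis of the lemma with $k=1$ and $c=d=q$. Under any tilting the support of $(U,V)$ is $\{(0,0),(1,q)\}$, so $V=qU$ deterministically: the covariance matrix is singular, the two-dimensional local CLT you invoke does not apply, and the pair of saddle-point equations is solvable only when $\theta=q\xi$, in which case the solutions form a whole one-parameter curve $x_0y_0^q=\xi/(1-\xi)$ rather than a unique point. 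Your strict-convexity argument for uniqueness collapses for the same reason: $\log B(e^s,e^t)-\xi s-\theta t$ is convex but constant along the degenerate direction. The lemma's conclusion does survive in this case --- the objective $B(x_0,y_0)/(x_0^\xi y_0^\theta)$ is constant along the solution curve, equal to $e^{h(\xi)}$, which matches $\binom{\ell}{\xi\ell}$ to exponential order --- but a complete proof must handle it explicitly: project onto the affine sublattice spanned by the support of the tilted increments, apply a lower-dimensional local CLT there, and observe that the coefficient vanishes (so the claim is vacuous under the paper's convention for $\asympequal$) whenever $(\xi\ell,\theta\ell)$ lies off that sublattice. Alternatively, do what the paper does and cite \cite[Corollary 16]{di06:weight}, which is stated with the lattice caveats already built in; as written, your argument proves the lemma only for polynomials whose exponent set does not lie on an affine line, which excludes the paper's principal application.
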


%**********************************************************
% APPENDIX C
%**********************************************************

\section{Solution for Small Linear-Weight Codewords}
\label{app:small_alpha}
In this appendix, we analyze Theorem \ref{thm:growth_rate} for the case of small $\alpha$.
Specifically, we prove Corollary \ref{corollary:small_alpha}, which is a slightly weaker form of \cite[Theorem 4.1]{flanagan09:IEEE-IT}.
%\medskip
%\begin{proposition}
%\label{prop:asymptotic_case_general}
%\begin{multline}
%G(\alpha) = \frac{T}{\psi} \alpha \log \alpha + \alpha \Bigg[ \log \frac{1}{Q_1^{-1}(1)} \\ +
%\frac{T}{\psi} \log \frac{1}{Q_2(Q_1^{-1}(1))} \Bigg] + o(\alpha) \; ,
%\label{eq:growth_rate_asymptotic_dgldpc}
%\end{multline}
%\end{proposition}
%where $r$ denotes the smallest minimum distance over all CN types, $\psi = r/(r-1)$, and $T$ denotes
%the minimum of $(j-\psi)/i$ over all types $t \in I_v$ and all pairs $(i,j)$ such that
%$B^{(t)}_{i,j} > 0$. The set $Y_v$ is the set of all types $t \in I_v$ such that this minimum $T$ is
%achieved, and $P_t$ denotes the corresponding set of pairs $(i,j)$. Finally,
%\begin{equation}
%Q_1(x) = \sum_{t \in Y_v} \frac{\lambda_t}{q_t} \sum_{ (i,j) \in P_t } j B^{(t)}_{i,j} C^{j/r}
%\left( \frac{\int \lambda}{e} \right)^{iT/\psi} x^i
%\label{eq:P1x_definition}
%\end{equation}
%and
%\begin{equation}
%Q_2(x) = \sum_{t \in Y_v} \frac{\lambda_t}{q_t} \sum_{ (i,j) \in P_t } i B^{(t)}_{i,j} C^{j/r}
%\left( \frac{\int \lambda}{e} \right)^{iT/\psi} x^i \; ,
%\label{eq:P2x_definition}
%\end{equation}
%where 
%\begin{equation}
%\label{eq:general_C_definition}
%C = r \sum_{t \; : \; r_t = r} \frac{\rho_t A^{(t)}_{r}}{s_t} > 0
%\end{equation} 
%(this definition of the parameter $C$ reduces to that given in (\ref{eq:C_V_definitions}) in the
%case $r=2$).
The proof consists of first obtaining expressions for $z_0$, $y_0$ and $x_0$ in terms of
$\beta$, and for $\beta$ in terms of $\alpha$, and then exploiting these in
\eqref{eq:growth_rate_polynomial_general}.

First we develop an expression for $z_0$ in terms of $\beta$. Considering \eqref{eq:x0_y0_eqn_1},
we have that its left-hand side must be $o(1)$ because so is its right-hand side. The only possibility is that $x_0^i
y_0^j=o(1)$ for some admissible choices of $(i,j,t)$, and $x_0^i y_0^j=o(\alpha)$ for all other
admissible choices. Since the only difference between the left-hand side of \eqref{eq:x0_y0_eqn_1} and the left-hand side
of \eqref{eq:x0_y0_eqn_2} is represented by the coefficients of the $x_0^i y_0^j$ terms, it follows
that
\begin{align}\label{eq:beta=o_alpha}
\lim_{\alpha\rightarrow0}\beta = 0\, .
\end{align}
For the moment, the notation $o(1)$ will be intended as $\beta\rightarrow 0$. (We will show later
that $\beta$ is proportional to $\alpha$ to the first order, so that the notation $o(\beta)$ is
equivalent to the notation $o(\alpha)$.) Because of the above discussion, the left-hand side of \eqref{eq:z0_eqn}
must also be $o(1)$, a condition which can
be satified if and only if $z_0^u=o(1)$ for some admissible choices of $(u,t)$ and $z_0^u=o(\beta)$
for all other admissible choices of $(u,t)$. Consider now \eqref{eq:z0_eqn} written as an equality
of polynomials. Due to \eqref{eq:beta=o_alpha}, its left-hand side is dominated by the terms corresponding to
$u=r$, and the equation can be written in the form
\begin{align*}
z_0^r  \sum_{t \; : \; r_t = r} \frac{\rho_t r A_r^{(t)}}{s_t} = \beta \smallint\!\lambda \, (1 +
o(1)) \; ,
\end{align*}
i.e.,\footnote{Here (and often throughout the proof) we use the property
$(1+o(1))^{k}=1+o(1)$ for rational $k$.}
\begin{align}
\label{eq_z0_beta}
z_0 = \left( \frac{\beta \int\!\lambda}{C} \right)^{1/r} (1+o(1))\, .
\end{align}

We next develop an expression for $y_0$ in terms of $\beta$; from
\eqref{eq:z0_y0_relation} we have 
$$
y_0=\frac{1}{z_0} \cdot \frac{\beta\int\!\lambda}{1-\beta\int\!\lambda}
$$
which, combined with \eqref{eq_z0_beta} and \eqref{eq:beta=o_alpha} respectively, yields
\begin{align}
\label{eq:y0_beta}
y_0 = C^{1/r}\left(\beta\smallint\!\lambda\right)^{1/\psi}(1+o(1)) \, .
\end{align}

Next, we develop an expression for $x_0$ in terms of $\beta$. To this purpose, since $x_0^i
y_0^j=o(1)$ for some admissible choices of $(i,j)$, \eqref{eq:x0_y0_eqn_2} (when written as
an equality of polynomials) can be expressed as
$$
\sum_{t\in I_v} \frac{\lambda_t}{q_t \int\!\lambda} \sum_{(i,j) \in S_t^{-}}
j\,B_{i,j}^{(t)}\, x_0^i y_0^j = \beta (1+o(1))
$$
where $S_t^{-} = \{ (i,j) \neq (0,0) \: : \: B_{i,j}^{(t)} > 0 \}$. Using
\eqref{eq:y0_beta}, this latter equation can be written in the form
\begin{align}
\label{eq:pre_x0_beta}
\sum_{t\in I_v} \frac{\lambda_t}{q_t} \sum_{(i,j) \in S_t^{-}} j\,B_{i,j}^{(t)}\,
C^{\frac{j}{r}} \left(x_0 \left( \beta\smallint\!\lambda\right)^{\frac{T_{i,j}}{\psi}}\right)^i = 1 + o(1)\, ,
\end{align}
where $T_{i,j}=(j-\psi)/i \ge 0$. It follows that, as $\beta\rightarrow 0$,
the left-hand side of \eqref{eq:pre_x0_beta} is dominated by summands corresponding to admissible choices of
$(i,j)$ for which $x_0 \beta^{T_{i,j}/\psi}$ tends to a constant.
Letting $T = \min\{T_{i,j} \; : \; t \in I_v \textrm{ and } (i,j) \in S_t^{-} \}$, these
dominating summands necessarily correspond to admissible choices of
$(i,j)$ such that $T_{i,j}=T$. In fact, assume that $x_0 \beta^{T_{i,j}/\psi}=c+o(1)$ for some
$T_{i,j}>T$ and constant $c$. Then, for all choices of $(i,j)$ such that $T_{i,j}=T$, the term
$x_0 \beta^{T_{i,j}/\psi}$ would be unbounded, contradicting \eqref{eq:pre_x0_beta}. Hence,
we have
$$
\sum_{t\in Y_v} \frac{\lambda_t}{q_t} \sum_{(i,j)\in P_t} j\,B_{i,j}^{(t)}\, C^{j/r}
\left(x_0 \left( \beta\smallint\!\lambda\right)^{\frac{T}{\psi}}\right)^i = 1+o(1)\, ,
$$
i.e.,
$$
Q_1 \left( x_0 (e \beta)^{T/\psi} \right) = 1 + o(1)
$$
and therefore\footnote{Note that, using the Taylor series of $Q_1^{-1}(1+x)$ around $x=0$, we have
$Q_1^{-1}(1+o(1))=Q_1^{-1}(1)+o(1)$.}
\begin{align}\label{eq:x0_beta}
x_0 = \frac{Q_1^{-1}(1)}{e^{T/\psi}}\beta^{-T/\psi} (1+o(1)) \; .
\end{align}
Next, we develop an expression for $\beta$ in terms of $\alpha$. 
Similarly to \eqref{eq:x0_y0_eqn_2}, \eqref{eq:x0_y0_eqn_1} (when written as an equality of
polynomials) can be expressed as
$$
\sum_{t\in I_v} \frac{\lambda_t}{q_t\int\!\lambda} \sum_{(i,j) \in S_t^{-}}
i\,B_{i,j}^{(t)}\, x_0^i y_0^j = \alpha (1+o(1))
$$
where now $o(1)$ is intended as $\alpha\rightarrow 0$. 
Using \eqref{eq:y0_beta} and dividing each side by $\beta$ we obtain
$$
\sum_{t\in I_v} \frac{\lambda_t}{q_t} \sum_{(i,j) \in S_t^{-}} i\,B_{i,j}^{(t)}\,
C^{j/r} \left(x_0 \left( \beta\smallint\!\lambda\right)^{\frac{T_{i,j}}{\psi}}\right)^i =
\frac{\alpha}{\beta} (1+o(1))\, .
$$
Reasoning in the same way as we did for \eqref{eq:pre_x0_beta}, we can write the previous equation
in the form
$$
Q_2 \left( x_0 (e \beta)^{T/\psi} \right) = \frac{\alpha}{\beta}(1+o(1))
$$
which, using \eqref{eq:x0_beta}, becomes\footnote{Note that
$Q_2(Q_1^{-1}(1)(1+o(1)))=Q_2(Q_1^{-1}(1))+o(1)$ using the Taylor series of $Q_2(Q_1^{-1}(1)+x)$ around $x=0$.}
$$
Q_2(Q_1^{-1}(1))=\frac{\alpha}{\beta}(1+o(1))\, .
$$
This yields
\begin{align}\label{eq:beta_alpha}
\beta=\frac{\alpha}{Q_2(Q_1^{-1}(1))}(1+o(1))\, .
\end{align}
We conclude from \eqref{eq:beta_alpha} that $\beta$ is proportional to $\alpha$ to the first
order, so that $o(\beta)$ and $o(\alpha)$ are equivalent notations.

We next use the derived expressions to analyze \eqref{eq:growth_rate_polynomial_general}.
Using the Taylor series of $\log(1+x)$ around $x=0$, we have
\begin{align}\label{eq:G_fourth_term}
\frac{\log(1-\beta\int\!\lambda)}{\int\!\lambda}=-\beta(1+o(1))\, .
\end{align}
Consider now the term $\frac{\int\!\rho}{\int\!\lambda}\sum_{t\in I_c}\gamma_t \log(A^{(t)}(z_0))$.
Because of \eqref{eq_z0_beta}, we have
$$
A^{(t)}(z_0)= \left\{ \begin{array}{lll}
1+A_r^{(t)}\frac{\beta\int\!\lambda}{C}(1+o(1)) & \quad \textrm{if} & r_t = r \\
1+o(\beta) & \quad \textrm{if} & r_t > r
\end{array} \right.
$$
and, using the Taylor series of $\log(1+x)$ around $x=0$,
$$
\log A^{(t)}(z_0)= \left\{ \begin{array}{lll}
A_r^{(t)}\frac{\beta\int\!\lambda}{C}(1+o(1)) & \quad \textrm{if} & r_t = r \\
o(\beta) & \quad \textrm{if} & r_t > r \, .
\end{array} \right.
$$
This yields
\begin{align*}
\frac{\int\!\rho}{\int\!\lambda}\sum_{t\in I_c}\gamma_t \log(A^{(t)}(z_0)) & =
\sum_{t \; : \; r_t = r}  \frac{\rho_t}{s_t}
A_r^{(t)}\frac{\beta}{C}(1+o(1))
\end{align*}
and finally, recalling \eqref{eq:general_C_definition1},
\begin{align}\label{eq:G_third_term}
\frac{\int\!\rho}{\int\!\lambda}\sum_{t\in I_c}\gamma_t \log(A^{(t)}(z_0)) =
\frac{\beta}{r}(1+o(1))\, .
\end{align}
We then have
\begin{align}\label{eq:G_fourth_plus_third}
\frac{\int\!\rho}{\int\!\lambda}\sum_{t\in
I_c}\gamma_t \log(A^{(t)}(z_0)) + \frac{\log(1-\beta\int\!\lambda)}{\int\!\lambda}
= - \frac{\beta}{\psi}(1+o(1))\, .
\end{align}

Next, the term $-\alpha\log x_0$ may be developed as follows,
\begin{multline*}
-\alpha\log x_0 
\stackrel{\mathrm{(a)}}{=} -\alpha \log\left(
\frac{Q_1^{-1}(1)}{e^{T/\psi}}\beta^{-T/\psi} (1+o(1))\right) \\
\stackrel{\mathrm{(b)}}{=} - \alpha \log \left( \frac{Q_1^{-1}(1)}{e^{T/\psi}}
\left( Q_2(Q_1^{-1}(1)) \right)^{T/\psi} \alpha^{-T/\psi} (1+o(1)) \right) \\
\stackrel{\mathrm{(c)}}{=} - \alpha \! \left[ \log Q_1^{-1}(1) \! + \! \frac{T}{\psi} \! \log
Q_2(Q_1^{-1}(1)) \! - \! \frac{T}{\psi} \! \log\alpha \! - \! \frac{T}{\psi} \! + \! o(1) \right]
\end{multline*}
where we have used \eqref{eq:x0_beta} in $\mathrm{(a)}$, \eqref{eq:beta_alpha} in $\mathrm{(b)}$,
and the Taylor series of $\log(1+x)$ around $x=0$ in $\mathrm{(c)}$. Hence, we
conclude that
\begin{multline}
\label{eq:G_second_term}
-\alpha\log x_0 = \alpha \left( \log\frac{1}{Q_1^{-1}(1)} +
\frac{T}{\psi}\log\frac{1}{Q_2(Q_1^{-1}(1))} \right) \\
+ \frac{T}{\psi} \alpha\log\alpha + \frac{T}{\psi} Q_2(Q_1^{-1}(1)) \beta +o(\alpha)
\end{multline}
where we have again used \eqref{eq:beta_alpha}.

Finally, we analyze the term $\sum_{t\in I_v} \delta_t \log B^{(t)}(x_0,y_0)$. Using
\eqref{eq:y0_beta} and \eqref{eq:x0_beta}, we obtain
\begin{multline*}
B^{(t)}(x_0,y_0) = 1 + \sum_{(i,j) \in S_t^{-}} B^{(t)}_{i,j} x_0^i y_0^j \\
\:\:\:\: = 1 + \sum_{(i,j) \in S_t^{-}} B_{i,j}^{(t)} \left( \frac{Q_1^{-1}(1)}{e^{T/\psi}}
\right)^i C^{\frac{j}{r}} \left(\smallint\!\lambda\right)^{\frac{j}{\psi}} \beta^{\frac{j-iT}{\psi}}
(1+o(1))\, .
\end{multline*}
From $T_{i,j}=\frac{j-\psi}{i} \geq T$ for all $t \in I_v$, $(i,j) \in S_t^{-}$, it follows that
$\frac{j-iT}{\psi}\geq 1$ for all $t \in I_v$, $(i,j) \in S_t^{-}$, with equality if and
only if $t\in Y_v$ and $(i,j) \in P_t$.
Then we have
\begin{equation*}
B^{(t)}(x_0,y_0) \! = \! 1 + \!\! \displaystyle\sum_{(i,j)\in P_t} \!\! B_{i,j}^{(t)} \left( \frac{Q_1^{-1}(1)}{e^{T/\psi}} \right)^i \! C^{\frac{j}{r}}
\left(\smallint\!\lambda\right)^{\frac{j}{\psi}} \beta (1+o(1))
\end{equation*}
when $t \in Y_v$, and $B^{(t)}(x_0,y_0) = 1+o(\beta)$ otherwise.
This implies that 
\begin{equation}
\label{eq:pre_logB}
\log B^{(t)}(x_0,y_0) \! = \!\!\! \displaystyle\sum_{(i,j)\in P_t} \!\! B_{i,j}^{(t)} \left( \frac{Q_1^{-1}(1)}{e^{T/\psi}}
\right)^i \!\! C^{\frac{j}{r}} \left(\smallint\!\lambda\right)^{\frac{j}{\psi}} \beta (1+o(1))
\end{equation}
when $t \in Y_v$, and $\log B^{(t)}(x_0,y_0) = o(\beta)$ otherwise.
This yields
\begin{multline*}
\sum_{t\in I_v} \delta_t \log B^{(t)}(x_0,y_0) \\
\stackrel{\mathrm{(a)}}{=} \beta (1+o(1))
\sum_{t\in Y_v} \delta_t \sum_{(i,j)\in P_t} B_{i,j}^{(t)} \left(
\frac{Q_1^{-1}(1)}{e^{T/\psi}} \right)^i C^{\frac{j}{r}} \left(\smallint\!\lambda\right)^{\frac{j}{\psi}} \\
\stackrel{\mathrm{(b)}}{=} \frac{\beta}{\psi} (1+o(1)) \sum_{t\in Y_v}
\delta_t \!\! \sum_{(i,j)\in P_t} (j-iT) B_{i,j}^{(t)} \left(
\frac{Q_1^{-1}(1)}{e^{T/\psi}} \right)^i \!\! C^{\frac{j}{r}} \left(\smallint\!\lambda\right)^{\frac{j}{\psi}} \\
= \frac{\beta(1+o(1))}{\psi} \sum_{t\in Y_v}
\delta_t \sum_{(i,j)\in P_t} j B_{i,j}^{(t)} \left(
\frac{Q_1^{-1}(1)}{e^{T/\psi}} \right)^i C^{\frac{j}{r}} \left(\smallint\!\lambda\right)^{\frac{j}{\psi}} \\ 
\,\, - \beta \frac{T}{\psi} (1+o(1)) \sum_{t\in Y_v}
\delta_t \sum_{(i,j)\in P_t} i B_{i,j}^{(t)} \left(
\frac{Q_1^{-1}(1)}{e^{T/\psi}} \right)^i C^{\frac{j}{r}} \left(\smallint\!\lambda\right)^{\frac{j}{\psi}} \\
\stackrel{\mathrm{(c)}}{=} \frac{\beta(1+o(1))}{\psi} \sum_{t\in Y_v}
\frac{\lambda_t}{q_t} \sum_{(i,j)\in P_t} j B_{i,j}^{(t)} C^{\frac{j}{r}} \left(
\frac{\smallint\!\lambda}{e} \right)^\frac{iT}{\psi} \!\! \left( Q_1^{-1}(1) \right)^i \\ 
\,\, - \beta \frac{T}{\psi} (1+o(1)) \sum_{t\in Y_v}
\frac{\lambda_t}{q_t} \sum_{(i,j)\in P_t} i B_{i,j}^{(t)} C^{\frac{j}{r}}
\left( \frac{\smallint\!\lambda}{e} \right)^{\frac{iT}{\psi}} \!\! \left( Q_1^{-1}(1) \right)^i
%\, & = 
\end{multline*}
where we have used \eqref{eq:pre_logB} in $\mathrm{(a)}$, and $\psi=j-iT$ for $t\in Y_v$ and
$(i,j)\in P_t$ in $\mathrm{(b)}$ and in $\mathrm{(c)}$. Hence, recalling
\eqref{eq:P1x_definition} and \eqref{eq:P2x_definition}, we obtain
\begin{align}
\label{eq:G_first_term}
\sum_{t\in I_v} \delta_t \log B^{(t)}(x_0,y_0) = \frac{\beta(1+o(1))}{\psi} \left( 1 - T\, Q_2(Q_1^{-1}(1)) \right) \, . 
\end{align}
Finally, substituting \eqref{eq:G_fourth_plus_third}, \eqref{eq:G_second_term}, and
\eqref{eq:G_first_term} into \eqref{eq:growth_rate_polynomial_general}, we obtain \eqref{eq:growth_rate_asymptotic_dgldpc1}, as desired.

%**********************************************************
% APPENDIX D
%**********************************************************

\section{Closed Form Expressions for the Spectral Shape}\label{appendix:closed_form}
It is worthwhile to note that in some cases, \eqref{eq:G(alpha)_tanner_codes} can be expressed in closed form because $\sff^{-1}(\alpha)$ can be expressed analytically. This is the case, for instance, for the $(3,6)$ regular LDPC ensemble, for which $\sff(z) = \alpha$ becomes $ax^3+bx^2+cx+d=0$, where $x=z^2$ and $(a,b,c,d) = (\alpha-1, 15\alpha-10,15\alpha-5,\alpha)$. This cubic equation in $x$ may be solved by Cardano's method (see, e.g., \cite[p. 17]{wikipedia:Cardanos_method}; the discriminant $\Delta = \rho^3 + \mu^2$ is negative for every $\alpha \in (0,1)$, where 
\[
\rho = \frac{3 a c - b^2}{9 a^2} \; ; \; \mu = \frac{9 a b c - 27a^2 d - 2 b^3}{54 a^3} \; .
\]
The required solution is then uniquely and analytically identified as that given by \eqref{eq:G(alpha)_tanner_codes} where $q=3$, $s=6$ and $\sff^{-1}(\alpha) = z = \sqrt{x}$ where $x = 2 \sqrt{-\rho} \cos \left( \theta/3 \right) - \frac{b}{3a} > 0$ and $\theta = \tan^{-1} \left( \sqrt{-\Delta}/\mu \right)$.

Similarly, the weight spectral shape of a $(4,8)$ regular LDPC ensemble may be expressed in closed form through the solution of a quartic equation.

\end{document}